\newtheorem{theorem}{Theorem}
\newtheorem{lemma}[theorem]{Lemma}
\newtheorem{corollary}[theorem]{Corollary}
\theoremstyle{definition}
\newtheorem{definition}{Definition}[section]
\newcommand{\mA}{\mathcal{A}}
\newcommand{\mB}{\mathcal{B}}
\newcommand{\mE}{\mathcal{E}}
\newcommand{\mH}{\mathcal{H}}
\newcommand{\mK}{\mathcal{K}}
\newcommand{\mM}{\mathcal{M}}
\newcommand{\mO}{\mathcal{O}}
\newcommand{\mR}{\mathcal{R}}
\newcommand{\mS}{\mathcal{S}}
\newcommand{\mbC}{\mathbb{C}}
\newcommand{\p}{\partial}
\newcommand{\lb}{\left(}
\newcommand{\ep}{\epsilon}
\newcommand{\rb}{\right)}
\newcommand{\nn}{\nonumber}
\newcommand{\vol}{\text{vol}}
\begin{document}
\makeatletter %For removing JHEP header
\gdef\@fpheader{}
\makeatother

%\preprint{APS/123-QED}

\title{Local Poincar\'e Algebra from Quantum Chaos}% Force line breaks with \\
%\thanks{A footnote to the article title}%
\author{Shoy Ouseph, Keiichiro Furuya,  Nima Lashkari, Kwing Lam Leung, Mudassir Moosa}
\affiliation{Department of Physics and Astronomy, Purdue University, West Lafayette, IN 47907, USA}
\emailAdd{souseph@purdue.edu}
\emailAdd{kfuruya@purdue.edu}
\emailAdd{nima@purdue.edu}
\emailAdd{leung60@purdue.edu}
\emailAdd{mudassir@purdue.edu}

\abstract{The local two-dimensional Poincar\'e algebra near the horizon of an eternal AdS black hole, or in proximity to any bifurcate Killing horizon, is generated by the Killing flow and outward null translations on the horizon. In holography, this local Poincar\'e algebra is reflected as a pair of unitary flows in the boundary Hilbert space whose generators under modular flow grow and decay exponentially with a maximal Lyapunov exponent. This is a universal feature of many geometric vacua of quantum gravity. To explain this universality, we show that a two-dimensional Poincar\'e algebra emerges in any quantum system that has von Neumann subalgebras associated with half-infinite modular time intervals {\it (modular future and past subalgebras)} in a limit analogous to the near-horizon limit. In ergodic theory, quantum dynamical systems with future or past algebras are called quantum K-systems. The surprising statement is that modular K-systems are always maximally chaotic.

Interacting quantum systems in the thermodynamic limit and large $N$ theories above the Hawking-Page phase transition are examples of physical theories with future/past subalgebras. We prove that the existence of (modular) future/past von Neumann subalgebras also implies a second law of (modular) thermodynamics and the exponential decay of (modular) correlators. We generalize our results from the modular flow to any dynamical flow with a positive generator and interpret the positivity condition as quantum detailed balance.
}

\keywords{Future/Past subalgebras, Emergent Poincar\'e group, Second Law, Quantum Dynamical Systems, Quantum Ergodicity, Quantum K-systems, Quantum Anosov systems}%Use showkeys class option if keyword

\maketitle

%\tableofcontents

\section{Introduction}

A theory of quantum gravity in the small $G_N$ limit admits a large set of geometric states that host fluctuating quantum fields. To every smooth geometry, we associate a perturbative semiclassical Hilbert space of states of curved spacetime quantum field theory (QFT). The smoothness of the geometry implies that quantum fields in the vicinity of any point transform under a local Poincar\'e algebra corresponding to the local Minkowski space. The emergence of these local Poincar\'e groups is a curious universal feature of many vacua in any theory of quantum gravity. In this work, we identify an origin for the emergence of this universality based on the ergodic properties of modular flows in observable algebras of quantum gravity.

\begin{figure}[b]
    \centering
    \includegraphics[width=0.8\linewidth]{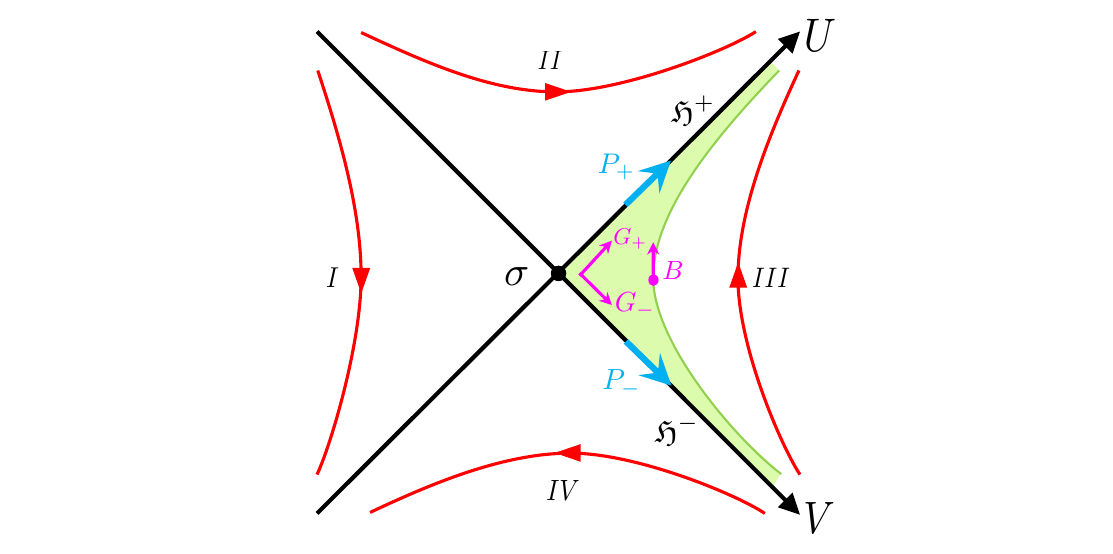}
    \caption{\small{The Killing vector $B$ of a spacetime with a bifurcate Killing horizon splits it into four regions, similar to boosts in Minkowski space. $\sigma$ is the bifurcation surface and the null translations $P_\pm$ on the future and past horizons $\mathfrak{H}^\pm$ are isometries of these surfaces. They can be extended to $G_\pm$ in the near horizon regions.}}
    \label{fig:rindlerboost}
\end{figure}

In a general QFT in curved spacetime, there is no globally time-like Killing vector; therefore, it is not clear how to impose a condition similar to the positivity of the Hamiltonian, and the analytic properties of vacuum correlators in complex coordinates. To overcome this, we focus on QFT in spacetimes with a bifurcate Killing horizon in the Hartle-Hawking state where the KMS condition provides analytic properties of correlators similar to the positivity of the Hamiltonian \cite{kay1991theorems}. Similar to boost in Rindler space, a spacetime with a bifurcate Killing horizon has a global Killing vector $B$ which splits it into four regions; see Figure \ref{fig:rindlerboost}. The future (past) Killing horizon $\mathfrak{H}^+$ ($\mathfrak{H}^-$) is a null surface with enhanced symmetries. The Killing vector $B$ and null translation $P_+$ ($P_-$) generate isometries of these null surfaces, respectively. On these surfaces we have the Lie groups $s_0\to e^{t}s_0+s$ ($s_0\to e^{-t}s_0+s$) and Lie algebraic relations\footnote{For brevity, we will slightly abuse notation and denote the generator of the flow along a vector field $B$ by $B$.} 
\begin{eqnarray}\label{algebrarel1}
    &&\mathfrak{H}^+:\qquad e^{itB} e^{i s P_+}e^{-itB}=e^{i s e^{ t}P_+},\qquad [P_+, B]=i P_+\nn\\
        &&\mathfrak{H}^-:\qquad e^{itB} e^{i s P_-}e^{-itB}=e^{i s e^{- t}P_-},\qquad  [P_-,B]=-i P_-\ .
    % &&\mathfrak{H}^-:\qquad \Delta^{-it} e^{i s G_-}\Delta^{it}=e^{i s e^{ -2\pi t}G_-}
\end{eqnarray}
For a general space-time with a bifurcate Killing horizon, the translations $P_\pm$ are isometries only when restricted to the future and past Killing horizons. In the vicinity of the Killing horizon $\mathfrak{H}^+$ ($\mathfrak{H}^-$), we extend the null translations $P_+$ ($P_-)$ to the vector fields $G_+$ ($G_-$) in normal Riemann coordinates. From the point of view of a perturbation whose proper distance to the bifurcate surface $\sigma$ is much smaller than the curvature length at the horizon these operators commute, i.e. $[G_+,G_-]\simeq 0$. 
This commutation relation together with (\ref{algebrarel1}) gives a local two-dimensional Poincar\'e algebra. 
As we move further away from the horizon, the commutator $[G_+,G_-]$ is decided by the curvature of the near-horizon geometry \cite{de2020holographic}.\footnote{If in the next order, they continue to commute, the near-horizon geometry has Poincar\'e symmetry which is naturally associated with Rindler space. 
If, instead, they satisfy $[G_+,G_-]=-2iB$ at this order, we have the symmetry group $PSL(2,\mathbb{R})$ which is naturally associated with two-dimensional Anti-de Sitter space as the near-horizon geometry \cite{lin2019symmetries}.}

\begin{figure}[b]
    \centering
    \includegraphics[width=0.85\linewidth]{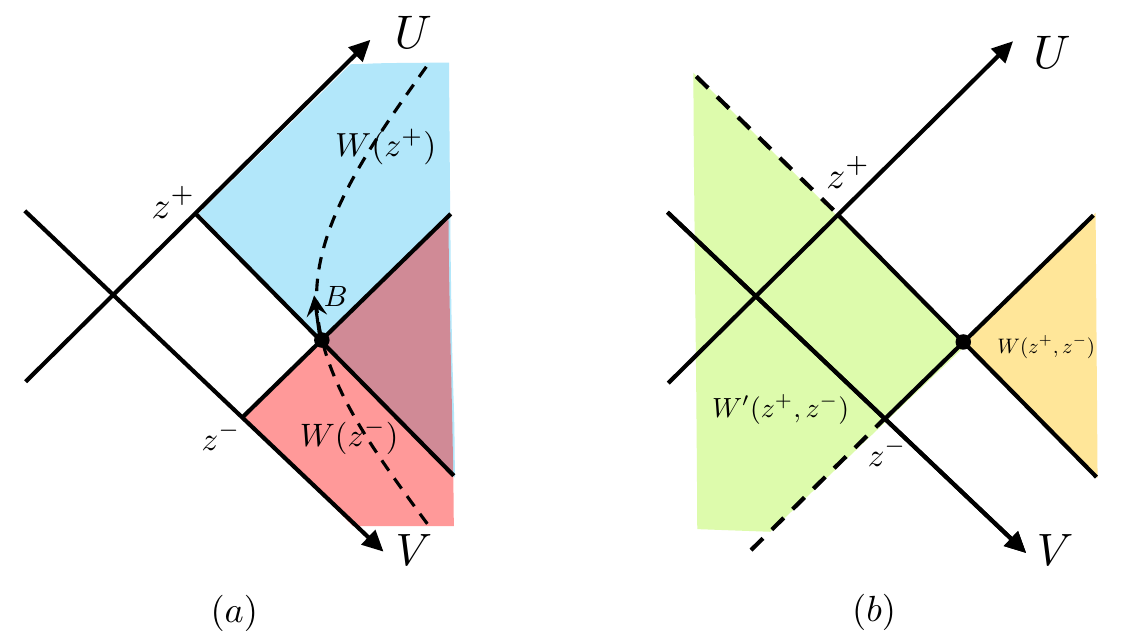}
    \caption{\small{The wedge algebras in a spacetime with bifurcate Killing horizons. (a) $W(z^+)$ is the von Neumann subalgebra of all observables the observer (dashed line) has access to from a particular moment in time (black dot) until eternity forms,  whereas $W(z^-)$ is the von Neumann subalgebra of all observables the observer had access to since past infinity until now. (b)  Given a point $(z^{+},z^{-})$, we define a right wedge $W(z^{+},z^{-}) = (U>z^{+},V>z^{-})$ and a left wedge $W'(z^{+},z^{-}) = (U<z^{+},V<z^{-})$. }}
    \label{fig:wedgealgebras}
\end{figure}

In a theory of quantum gravity, we associate von Neumann algebras with regions of spacetime that are diffeomorphism-invariant, such as the right wedge $W$ (region III in Figure \ref{fig:rindlerboost}) in spacetime with a bifurcate Killing horizon.
The geometric symmetry transformations of the bulk are reflected as unitary flows in the observable algebras of quantum gravity. In particular, in the Hilbert space of quantum gravity states dual to the excitations of the Hartle-Hawking vacuum of a spacetime with a bifurcate Killing horizon, there must be an emergent Poincar\'e symmetry in a regime analogous to the limit that localizes excitations near the bifurcate Killing horizon. The first clue to the origin of these symmetry relations on the boundary comes from the observation that the Killing field $B$ generates the modular flow of the observable algebra of the right wedge: $\Delta_W^{it}=e^{-2\pi i B t}$ (see Theorem \ref{thm:summerscurved}). The modular flow of an observable algebra is an example of a state-preserving quantum dynamical system that we call {\it modular dynamical system}. Consider an observer at a point in the right wedge $W$ who moves along the Killing vector $B$ (see Figure \ref{fig:wedgealgebras}).  They perceive the modular flow of the right wedge $W$ as time evolution. The set of all observables they have access to from a particular moment in time until eternity forms a von Neumann subalgebra of all observables of the right wedge that we call the {\it modular future subalgebra}, e.g. $W(z^+)$ in Figure \ref{fig:wedgealgebras}. Similarly, the set of all observables they had access to from past infinity until now forms the {\it modular past subalgebras}, e.g. $W(z^-)$ in Figure \ref{fig:wedgealgebras}. Since the modular flow of $W$ is the Killing flow of $B$, we find the {\it half-sided modular inclusion relations}
\begin{eqnarray}\label{Halfsidedrel}
&&\forall t>0:\qquad \Delta_W^{\mp it}\mA(W(z^\pm))\Delta_W^{\pm it}\subset \mA(W(z^\pm))\ .
\end{eqnarray}
In this work, we show that, in an arbitrary quantum system with a von Neumann algebra of observables $\mR$ (analog of the algebra of the right wedge $\mA(W)$), the existence of modular future and past subalgebras $\mA^+$ and $\mA^-$ (analogs of $\mA(W(z^+))$ and $\mA(W(z^-))$), respectively, is sufficient for a ``local" Poincar\'e algebra to emerge in a certain scaling regime analogous to the near-horizon limit.
There are two steps to our argument. The first step relies on a key result of quantum ergodic theory (The Half-sided Modular Inclusion Theorem \ref{thm:HSMI}) which says that the modular past and future subalgebra relations $\mA^\pm\subset \mR$:
\begin{eqnarray}
    \forall t>0: \qquad \Delta_\mR^{\mp it}\mA^\pm\Delta_\mR^{\pm it}\subset \mA^\pm
\end{eqnarray}
imply that the operators 
\begin{eqnarray}
    &&\pm G_\pm:=K_\mR-K_{\mA^\pm}\nn\\
    &&K=-\frac{1}{2\pi}\log \Delta
\end{eqnarray}
are positive, generate a symmetry $G_\pm \ket{\Omega}=0$, and satisfy the algebraic relations
\begin{eqnarray}\label{algebrarel}
    \Delta_\mR^{-it} e^{i s G_\pm}\Delta_\mR^{it}=e^{i  e^{\pm 2\pi t}s G_\pm}\ .
\end{eqnarray}
In the example of the quantum gravity algebra dual to the right wedge of the spacetime with bifurcate Killing horizon, the generators $G_\pm$ do not have a nice geometric dual in terms of local spacetime transformation, except for their restrictions to the horizons $\mathfrak{H}^\pm$ which are proportional to null translations $P_\pm$.

In general, there is no reason for $[G_+,G_-]$ to be small unless we introduce a new scaling parameter.
The second part of our argument is a scaling limit that mimics the near-horizon limit by defining the algebras and generators %\MM{(Check $\mp$.)}
\begin{eqnarray}
    &&\mA^\pm( s):=\Delta_\mR^{- is}\mA^\pm \Delta_\mR^{ is}\nn\\
    &&\pm G_\pm(s):=K_\mR-K_{\mA^\pm(s)}\ .
\end{eqnarray}
In Theorem \ref{newthmapprox}, we show that $[G_+(-s),G_-(s)]=O(e^{-4\pi s})$, therefore in the limit of large $s$ we have an emergent Poincar\'e algebra. This limit is the analog of zooming in near the horizon to proper distances much smaller than the local curvature length scale.

As we review in Section \ref{sec:ergodicity}, the relations in (\ref{algebrarel}) constitute a special example of a class of quantum ergodic systems called {\it quantum Anosov systems}. A quantum Anosov system is a quantum dynamical system with unitary ``time-evolution" $e^{i Kt}$ that admits other unitary flows generated by self-adjoint operators $G_\alpha$ such that 
\begin{eqnarray}\label{algebraAnosov}
  e^{i K t} e^{i s G_\alpha}e^{-i Kt}=e^{i s e^{-\lambda_\alpha t}G_\alpha}
\end{eqnarray}
and the constants $\lambda_\alpha$ which are positive (negative) for decaying (growing) modes are often called the {\it Lyapunov} coefficients.
% \footnote{For a comparison of these $\lambda_\alpha$ with the Lyapunov coefficients defined in out-of-time-ordered correlators discussed in \cite{maldacena2016bound}, see Section \ref{sec:ergodicity}.}
It has been argued when the dynamics is modular flow ($e^{i Kt}=\Delta^{-it}$) there is a universal bound on Lyapunov exponents $|\lambda|\leq 2\pi$ \cite{de2020holographic,faulkner2019modular}. Since the algebraic relations in (\ref{algebrarel}) saturate this bound following we will refer to the algebraic relations in (\ref{algebrarel}) as {\it maximal modular chaos}.
Following \cite{de2020holographic}, we will call $G_\pm$ the {\it modular scrambling modes}.

The AdS/CFT correspondence provides a concrete realization of quantum gravity in asymptotically Anti-de Sitter (AdS) spacetimes. A large $N^2\sim 1/G_N$ theory living on the boundary of AdS is dual to curved spacetime QFT in the bulk. In the strict $N\to \infty$ limit, the boundary is described by Generalized Free Fields (GFF) \cite{aharony2000large}. A GFF algebra is fixed by two choices of functions corresponding to the two-point correlation function and the commutator of the fundamental field \cite{greenberg1961generalized}.\footnote{See \cite{furuya2023information} for a review.} In a KMS state (finite temperature state), the two functions are related and we only need to fix one function, namely the spectral density (the Fourier transform of the commutator).
In holographic GFF \cite{dutsch2003generalized}, assuming spherical symmetry, effectively, we reduce the bulk to $1+1$-dimensions and the GFF to a $0+1$ dimensional theory.\footnote{Finite temperature GFF in $0+1$-dimensions is a collection of simple harmonic oscillators of frequency $\omega$ associated with every non-vanishing spectral density $\rho(\omega^2)$ \cite{furuya2023information}.} The relation between the bulk fields and boundary fields is explicit at the level of the one-particle Hilbert space. Given a boundary function $f(t)$ we can find the bulk solution $\tilde{f}(t,r)$ to the field equations of motion on geometry $g_{\mu\nu}$ that asymptotically relates to the boundary function $f(t)$. This is sometimes called the HKLL bulk reconstruction \cite{hamilton2006holographic,hamilton2006local}. The KMS state of holographic GFFs exhibits a special phase transition, called the Hawking-Page phase transition. Below the critical temperature, the bulk is the thermal state of Anti-de Sitter space, whereas above the critical temperature, the geometry is an AdS Schwarzschild black hole. If we think in terms of the thermofield double (the canonical purification of the thermal state), below the transition point, the dual geometry is two disjoint copies of thermal AdS, whereas above the transition, the dual geometry is two black holes sewn together to form a two-sided wormhole (eternal AdS black hole) \cite{maldacena2003eternal}. Recently, \cite{leutheusser2021causal,leutheusser2021emergent} showed that above the Hawking-Page phase transition, there are boundary GFF von Neumann algebras associated with finite time intervals. In particular, there are future and past subalgebras associated with half-infinite time intervals $(t_1,\infty)$ and $(-\infty,t_2)$. In the bulk, the Hawking-Page phase transition point is associated with the emergence of the following physical phenomena:
\begin{enumerate}[(1)]
    \item {\bf Strong Mixing:} Below the transition point, all correlators are almost periodic functions of time, whereas above the transition, all connected correlators decay to zero. 

    \item {\bf Second Law:} Below the transition point, the bulk dynamics is reversible, whereas, above the transition point, the bulk dynamics becomes irreversible because excitations can fall behind the horizon. The emergence of a monotonically increasing coarse-grained entropy (second law) is a manifestation of this irreversibility.

    \item {\bf Exponential decay:} Above the transition point, not only do all connected correlators decay but also a large class of operators decay exponentially fast with fixed exponents called quasi-normal modes.

    \item {\bf Near-horizon Poincar\'e algebra:} Above the transition point, a smooth horizon is formed in the bulk, and there is an emergent local two-dimensional Poincar\'e group in the vicinity of the bifurcation surface of the eternal black hole.
    
\end{enumerate}Properties (1), (2) and (3) are aspects of the {\it black hole information loss problem}. However, they are not specific to black holes. The physics above the transition point occurs in most, if not all, thermalizing quantum systems in the strict thermodynamic limit of infinite fine-grained entropy. These properties are not independent. In fact, as we review in section \ref{sec:ergodicity}, they follow an ergodic hierarchy
\begin{eqnarray}\label{hierarch1}
    \text{Exponential Decay}\Rightarrow \text{Second Law}\Rightarrow \text{Strong Mixing}\ .
\end{eqnarray}
Each of them corresponds to an important ergodic class: Exponential Decay ({\it Anosov systems}), Second Law (Kolmogorov systems or, in short {\it K-systems}) and {\it Strong Mixing} systems. 

There is an intuitive physical justification as to why we expect properties (1)-(3) to emerge in the thermodynamic limit. As we argued in \cite{furuya2023information}, strong mixing follows from the physical picture that in a generic thermalizing system all non-conserved excitations should decay. In other words, at late times, the system forgets about its current state, as is manifest by the decay of all connected two-point correlators. The second law is a consequence of the physical requirement that, at late times, not only the system forgets about its current state, but it also becomes independent of the {\it entire past} of the system. 
Independence from the entire past (future) is what motivated Kolmogorov to introduce a class of quantum ergodic systems called K-systems which can be characterized by the existence of past (future) subalgebras.

Property (4), namely an emergent two-dimensional Poincar\'e group, is often considered to be purely gravitational. 
Our main result is that this property also emerges in modular quantum K-system. We show that all the four properties above follow from a simple feature that is known to hold at finite temperature state of many thermodynamic systems, namely the existence of the {\it future and past subalgebras}: 
\begin{eqnarray}\label{result1}
        &&\text{{\bf Future/past subalgebras}}\Rightarrow \\
        &&\text{Local Poincar\'e Group}\Rightarrow\text{Exponential Decay}\Rightarrow \text{Second Law}\Rightarrow \text{Strong Mixing}.\nn
\end{eqnarray}
The future subalgebra is a {\it proper} subalgebra of observables generated by all perturbations (or measurements) one can make from now until future infinity in time. For instance, in a type I algebra, the set of future operators generates all operators; therefore, we say that type I algebras do not admit future subalgebras.\footnote{For a review of the classification of observables algebras of quantum mechanics (von Neumann algebras) see appendix A of \cite{furuya2023information}} Similarly, the past subalgebra is the {\it proper} subalgebra of all perturbations (or measurements) that could have been made from past infinity in time until now.

Our discussion generalizes from finite temperature time-evolution in two major ways: 1) Modular flow of a general state and 2) General dynamics with a positive generator.
The results continue to hold in arbitrary out-of-equilibrium states, with Hamiltonian replaced with modular Hamiltonian. We show that the existence of {\it modular future and past subalgebras} implies the strong mixing of modular correlators, a modular second law, the exponential decay of modular correlators, and an emergent approximate modular Poincar\'e group in a certain limit similar to the one that localizes a particle near the horizon a black hole \cite{maldacena2016bound}.
We see in section \ref{sec:ergodicity}, our results also generalize to any dynamical unitary flow $e^{i Kt}$ that is state-preserving $K\ket{\Omega}=0$ and has a positive generator $K\geq 0$.

In a recent work \cite{furuya2023information}, we showed that strong (modular) mixing implies that the algebra is type III$_1$ in a state that has a trivial centralizer. This is compatible with the fact that, in an eternal black hole, the bulk dual of each boundary algebra is an AdS Rindler wedge. We proved that GFF algebra in a KMS state with a spectral density that is Lebesgue measurable is type III$_1$. Note that the bulk algebra is known to be the unique hyperfinite type III$_1$ algebra, and it is expected that the hyperfiniteness requires the spectral density to be a smooth and non-vanishing function of frequency, in accordance with the conjecture in \cite{leutheusser2021emergent}.
The half-sided modular inclusions play a key role in our work. They were also discussed in \cite{leutheusser2021emergent} in the context of the GFF algebra with the spectral density that matches the black hole two-point function. The authors pointed out the connections to the emergence of an arrow of time (second law) and calculated the action of the modular scrambling modes $G_\pm$ on the creation/annihilation operators of this GFF theory.

Our result in (\ref{result1}) applied to holography implies that in a theory of GFF at finite temperature (KMS state), the existence of future subalgebra implies a local Poincar\'e group near the horizon, the exponential decay of connected correlators of a dense set of observables, a second law of thermodynamics, and the decay of any connected correlator. However, as we remarked earlier, our results are more general, and apply to any quantum system which satisfies the following three key assumptions:
1) {\bf Symmetry:} $e^{i Kt}\ket{\Omega}=\ket{\Omega}$, 2) {\bf Positivity:} $K\geq 0$, 3) {\bf Future (Past) subalgebras:} there exists a subalgebra such that for all $t>0$ ($t<0$) we have $e^{iKt}\mA e^{-iK t}\subset \mA$.\footnote{In the language of quantum ergodic theory, the first assumption defines state-preserving dynamics, the first two assumptions define a quantum K-system and all three assumptions together are called half-sided translations, which leads to a {\it maximal modular chaos system} in the sense of maximal Lyapunov exponents.} We show that the three conditions above are equivalent to the existence of modular future and past subalgebras (see Theorem \ref{thm:HST}).

In section \ref{sec:CPT}, we review the well-known result of Bisognano-Wichmann and its generalizations to curved spacetime, showing that the modular Hamiltonian of half-space is boost, and there are future and past subalgebras with respect to the dynamical flows generated by null translations $P_\pm$. Poincar\'e algebra implies that the geometric deformations along null directions grow/decay exponentially in modular time. Our work relies on converse results that prove if there are a pair of commuting dynamical flows with positive generators $\pm G_\pm$ and an algebra $\mR$ that is a future (past) subalgebra with respect to $G_+$ ($-G_-$) respectively, the modular flow $\Delta_\mR^{it}$ and $e^{i s^\pm G_\pm}$ generate a two-dimensional Poincar\'e group. 

In Section \ref{sec:futurepast}, we give several examples of future and past algebras in physical systems and comment on the second law and the exponential decay of correlations. Section \ref{sec:ergodicity} reviews the key intuitions and ideas of quantum ergodic theory and the relevance of future and past subalgebras for maximal modular chaos, exponential decay of correlators, and the second law. There are many definitions of classical chaos and quantum chaos. We avoid the comparison of these definitions here and instead focus on the sharply defined ergodic hierarchy. Intuitively, one can think of classical Anosov systems as classically chaotic systems because they are ergodic and show exponential sensitivity to initial conditions, i.e. they have Lyapunov exponents. The surprise is that, in the quantum world, for modular dynamics, the ergodic hierarchy simplifies (see Corollary \ref{corr:modularhierarchy}). The weaker assumption of quantum K-systems (existence of future/past algebras) becomes equivalent to maximal quantum chaos. If we then take the quantum K-systems property as the definition of quantum chaos, we conclude that modular chaos is always maximal!  To explain what is special about modular dynamics, we generalize our results to arbitrary dynamical flows with a positive generator and identify the key physical insight that underlies the collapse of the ergodic hierarchy as {\it quantum detailed balance} (see Theorem \ref{thm:HST}).

This paper expands upon the contents of \cite{Gesteau:2023}, which were communicated to us by its author during a conference last July. We decided to coordinate our releases.

\section{CRT symmetry, Poincar\'e group and Modular Flow}\label{sec:CPT}

\subsection{From Poincar\'e to future/past subalgebras}\label{sec:CPTflat}

Consider QFT in Minkowski space $\mathbb{R}^{d,1}$ with the metric
\begin{eqnarray}
    &&ds^2=dx^+dx^-+dx^i dx_i\nn\\
    &&x^\pm=x^d\pm x^0\ .
\end{eqnarray}
To simplify our presentation, in most of this work, we ignore the perpendicular $x^i$ directions and focus on the two-dimensional space spanned by $x^\pm$. A pair $(z^+,z^-)$ defines a right wedge $W(z^+,z^-)=(x^+>z^+,x^->z^-,x^i)$ and a left wedge that is the causal complement $W(z^+,z^-)'=(x^+<z^+,x^-<z^-,x^i)$. We use the notation $W(z^+):=W(z^+,0)$ and $W(z^-)=W(0,z^-)$ and $W:=W(0,0)$; see Figure \ref{fig:wedgealgebras}(b) with $U$ and $V$ replaced with $x^+$ and $x^-$, respectively. We warn the reader that in our notation $P_-$ is a negative operator; see Figure \ref{fig:rindlerboost}.

QFT in Minkowski space is symmetric under a CRT transformation ($C$ is charge conjugation, $R$ is reflection, and $T$ is time-reversal).\footnote{If the dimension of spacetime $(d+1)$ is even, we can apply the transformation in each of the $d$-spatial directions to obtain the CPT symmetry $(x^\mu;q)\to (-x^\mu; -q)$. } In the Hilbert space, the CRT transformation corresponds to an anti-unitary operator $J$ that preserves the vacuum, i.e. $J\ket{\Omega}=\ket{\Omega}$, and satisfies the following commutation relation with null translation operators $e^{iz^\pm P_\pm}$ and fields:
\begin{eqnarray}
    &&J e^{i z^\pm P_\pm} J= e^{-iz^\pm P_\pm},\qquad J \phi(x^+,x^-) J= \phi^*(-x^+,-x^-)\ .
\end{eqnarray}
In Wightman's formulation of QFT, the analytic continuations of Wightman functions are invariant under the complexified connected Poincar\'e transformations. The complexified Lorentz group contains the RT transformation (reflection and time-reversal). Assuming that all fields transform as finite-dimensional representations of the Lorentz group and that every charged field is accompanied by its charge conjugate is enough to prove that CRT is a symmetry \cite{jost1957bemerkung}. Bisognano and Wichmann gave another proof of the CRT theorem by applying modular theory\footnote{See appendix A of \cite{furuya2023information} for a brief review of key definitions of modular theory.} to local algebras built out of Wightman fields:\footnote{In appendix \ref{app:Wightman}, we review the connections between Wightman's formulation of QFT and the algebraic approach to QFT (AQFT).}
\begin{theorem}[Bisognano-Wichmann CRT Theorem]
    Consider the von Neumann algebra of observables $\mA(W)$ associated with any wedge $W$ in the vacuum of QFT generated by smeared functions of a Wightman field. The modular flow is boost 
    \begin{eqnarray}
         &&\Delta_W^{it}:(x^+,x^-)\to (e^{2\pi t}x^+,e^{-2\pi t}x^-)\,
    \end{eqnarray}
    and the modular conjugation is a CRT transformation. The wedge algebra satisfies the Haag duality condition $\mA(W')=\mA'(W)$.
\end{theorem}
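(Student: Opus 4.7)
The plan is to exhibit an explicit one-parameter unitary group and an antiunitary operator with the claimed geometric action, and then identify them with the modular data of $(\mathcal{A}(W),\Omega)$ by invoking the uniqueness of the polar decomposition of the Tomita $S$-operator. First, the Wightman axioms (spectrum condition plus vacuum cyclicity under the polynomial algebra of fields) yield the Reeh--Schlieder theorem, so that $\Omega$ is cyclic and separating for $\mathcal{A}(W)$ and $S$ is densely defined on $\mathcal{A}(W)\Omega$ via $S A\Omega = A^*\Omega$. The natural candidate for $\Delta_W^{it}$ is $U(t):=e^{-2\pi i B t}$, where $B$ is the boost generator preserving $W$. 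Poincar\'e covariance gives $U(t)\mathcal{A}(W)U(-t)=\mathcal{A}(W)$ and $U(t)\Omega=\Omega$, so it remains to verify the KMS condition for the pair $(U(t),\Omega)$ at inverse temperature $\beta=1$.

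The crux is analytic continuation of boost orbits into complex modular time. For $A\in\mathcal{A}(W)$ built as a polynomial in smeared Wightman fields, I would write $U(t)A\Omega$ as a Wightman function smeared against the boost-transformed test functions, whose supports remain in $W$. The spectrum condition $P^+\geq 0$, $-P^-\geq 0$ then allows translations in the forward tube to be holomorphically continued; combining this with the Lorentz-covariant action of $U(t)$ and appealing to the Bargmann--Hall--Wightman extended-tube theorem, I would show that $t\mapsto U(t)A\Omega$ extends holomorphically to the strip $-1/2<\text{Im}(t)<0$. The complex boost with parameter $t=-i/2$ implements, on the real Lorentzian slice, precisely the reflection $(x^+,x^-)\mapsto(-x^+,-x^-)$, which maps $W$ into $W'$. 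Matching boundary values at $\text{Im}(t)=0$ and $\text{Im}(t)=-1/2$ and using the CRT invariance of Wightman functions (Jost's theorem, itself a consequence of the same Lorentz analyticity) gives $U(-i/2) A\Omega = \Theta A^*\Omega$, where $\Theta$ is an antiunitary implementing CRT.

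Comparing with the defining relation $S A\Omega = A^*\Omega$ and the polar decomposition $S = J\Delta_W^{1/2}$, the uniqueness of this polar decomposition forces $\Delta_W^{it}=U(t)$ and $J=\Theta$, establishing the first two assertions. Haag duality then follows at once: CRT covariance gives $\Theta\mathcal{A}(W)\Theta=\mathcal{A}(W')$, while Tomita--Takesaki gives $J\mathcal{A}(W)J=\mathcal{A}(W)'$, so $\mathcal{A}(W')=\mathcal{A}(W)'$. The main obstacle is the analyticity step: Wightman fields are operator-valued distributions, so one must justify the analytic continuation in the sense of vector-valued holomorphic functions on the relevant tube and glue boundary values via the edge-of-the-wedge theorem. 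This technical point — ensuring that analyticity of matrix elements lifts to the vectors $U(t)A\Omega$ themselves — is the heart of the Bisognano--Wichmann argument and is the step that cannot really be short-circuited.
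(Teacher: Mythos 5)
The paper offers no proof of its own here—it simply cites Bisognano--Wichmann \cite{bisognano1975duality,bisognano1976duality}—and your outline faithfully reproduces the strategy of those cited papers: Reeh--Schlieder to get $\Omega$ cyclic and separating for $\mA(W)$, analytic continuation of boost orbits in the forward tube (spectrum condition, Bargmann--Hall--Wightman, edge-of-the-wedge, Jost's CRT theorem) to establish $S=\Theta\, e^{-\pi B}$, and uniqueness of the polar decomposition to identify $\Delta_W^{it}=e^{-2\pi i Bt}$ and $J=\Theta$, with Haag duality following from $J\mA(W)J=\mA(W)'$ together with the geometric action of $\Theta$. Your sketch is correct and is essentially the argument the paper defers to; the only minor redundancy is that you announce a KMS verification but then proceed by the direct identification of the Tomita operator, which is the cleaner (and original) route.
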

\begin{proof}
    See \cite{bisognano1975duality,bisognano1976duality} for a proof.
\end{proof}
% % The generator of the boost transformation is the vector $B=i(x^+\p_+-x^-\p_-)$, and if we assume the existence of a stress tensor, we can write down the modular Hamiltonian of the wedge $W$ on any Cauchy slice as
% %       \begin{eqnarray}
% %           K_W:=-\frac{1}{2\pi}\log\Delta_W=\int d\Sigma^\mu B^\nu T_{\mu\nu}\ .
% %       \end{eqnarray}
% The modular Hamiltonian $K_W$ is the operator in the Hilbert space that represents $B$. In this work, we often say informally that the modular Hamiltonian is a boost.
Since the modular conjugation operator leaves the vacuum invariant, we obtain a proof of the CRT theorem in the vacuum. Borchers offered a proof of the above theorem using the axioms of algebraic QFT (AQFT) for the wedge algebras relaxing the assumption of Wightman fields. From the point of view of holography, the operator-algebraic approach has the advantage that if the boundary theory has algebras associated with time intervals they satisfy the same algebraic axioms as the algebras of the bulk QFT. 
\begin{theorem}[Borchers CRT Theorem]\label{Borchersthm}
    Consider the vacuum representation of Poincar\'e-covariant QFT that satisfies Haag's duality for wedges and every double cone. The modular flow of a wedge algebra is boost, and modular conjugation is the CRT transformation. As a result, 
\begin{enumerate}
    \item The positive generators\footnote{The positivity follows from the monotonicity of modular operator; see appendix A of \cite{furuya2023information}. Note that with our notation, $-P_->0$.} 
    \begin{eqnarray}\label{defGpm}
    \pm G_\pm=\frac{1}{z^\pm}\lb K_W-K_{W(z^\pm)}\rb=\pm P_\pm
    \end{eqnarray}
    are independent of $z^\pm$ and equivalent to null momenta $\pm P_\pm$.
    \item The unitary flows $\Delta_W^{it}$ and $e^{is G_\pm}$ 
    represent the Poincar\'e group:
    \begin{eqnarray}
    &&\Delta_{W}^{-it} e^{is G_\pm} \Delta_{W}^{it}=e^{ i  e^{\pm 2\pi t}sG_\pm}\label{Borch1}\\
    &&[G_+,G_-]=0\label{commutBorch}\ .
    \end{eqnarray}

    \item We obtain the relations
    \begin{eqnarray}
    &&\lim_{t\to \pm\infty}\Delta_W^{-it}\Delta_{W(z^\pm)}^{it}=e^{\pm i z^\pm G_\pm}\label{Borch2}\\
    &&J_W J_{W(z^\pm)}=e^{\pm 2i z^\pm G_\pm}\label{JJs}\ .
    \end{eqnarray}
\end{enumerate} 

\end{theorem}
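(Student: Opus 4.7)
The plan is to apply the Half-sided Modular Inclusion Theorem (Theorem~\ref{thm:HSMI}) in each null direction and then identify the resulting abstract half-sided generators with the physical null momenta $P_\pm$. The essential inputs are Poincar\'e covariance of the vacuum, the spectrum condition (positivity of $P_+$ and negativity of $P_-$ in the paper's conventions), and wedge duality.

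First I would set up the half-sided inclusions. Poincar\'e covariance gives $\mA(W(z^+))=e^{iz^+P_+}\mA(W)e^{-iz^+P_+}\subset\mA(W)$ for $z^+\geq 0$, with the vacuum cyclic and separating for both algebras by Reeh--Schlieder together with wedge duality. The spectrum condition makes $P_+$ positive, so the triple $(\mA(W(z^+))\subset\mA(W),\ket{\Omega})$ is a half-sided modular inclusion. Theorem~\ref{thm:HSMI} then produces a positive generator $\tilde G_+$ annihilating the vacuum and satisfying the exponential conjugation law (\ref{Borch1}) and the limit (\ref{Borch2}), together with the formula $\tilde G_+=(1/z^+)(K_W-K_{W(z^+)})$ that is manifestly independent of $z^+$. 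The analogous argument in the past null direction produces $\tilde G_-$.

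Next I would show $\tilde G_\pm=\pm P_\pm$ by a Baker--Campbell--Hausdorff calculation. Translation covariance yields $K_{W(z^+)}=e^{iz^+P_+}K_W e^{-iz^+P_+}$, and since $[P_+,[P_+,K_W]]=[P_+,iP_+]=0$ the BCH series truncates at first order and gives $K_W-K_{W(z^+)}=z^+P_+$; the analogous calculation using $[P_-,K_W]=-iP_-$ gives $K_W-K_{W(z^-)}=-z^-P_-$. This proves item~1. In item~2, the commutator $[G_+,G_-]=[P_+,P_-]=0$ is the abelianness of the null translation subgroup, while the exponential conjugation (\ref{Borch1}) is the exponentiation of the Poincar\'e brackets $[K_W,P_\pm]=\mp iP_\pm$. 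In item~3, the limit (\ref{Borch2}) is the output of Theorem~\ref{thm:HSMI}, and (\ref{JJs}) follows from translation covariance of the modular conjugation, $J_{W(z^+)}=e^{iz^+P_+}J_W e^{-iz^+P_+}$, combined with the antiunitary identity $J_W e^{iz^+P_+}J_W=e^{iz^+P_+}$ that encodes how CRT acts on null translations.

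The main obstacle is closing the loop, since the derivation above assumed that modular flow is boost (equivalently, $[K_W,P_\pm]=\mp iP_\pm$) and that $J_W$ is CRT, but these are themselves part of what the theorem claims. Following Borchers, the resolution is to extract the commutation relation $\Delta_W^{it}e^{isP_+}\Delta_W^{-it}=e^{ie^{-2\pi t}sP_+}$ directly from the half-sided inclusion and positivity of $P_+$, by analytic continuation of the map $z\mapsto\Delta_W^{iz}e^{isP_+}\Delta_W^{-iz}$ into the strip $0<\operatorname{Im}z<1/2$ with boundary values controlled by Tomita--Takesaki theory. Uniqueness of the boost implementation on a Poincar\'e-covariant vacuum representation then forces $\Delta_W^{-it/(2\pi)}$ to coincide with the boost unitary, and $J_W$ is pinned to CRT by its required action on $\mA(W)$ together with $J_W\ket{\Omega}=\ket{\Omega}$. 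Once this analytic step is established, items~1--3 drop out as algebraic consequences of the Poincar\'e brackets.
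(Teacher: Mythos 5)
Your proposal is correct in substance but reaches item~1 by a different route than the paper. The paper cites Borchers for the main statements and then, ``for completeness,'' computes $G_\pm$ by writing $K_W$ as the boost generator integrated against the stress tensor on the null plane $x^-=0$, so that $K_W-K_{W(z^+)}=i z^+\int dx^+\,T_{++}=z^+P_+$ follows by inspection of the two Killing vector fields. You instead obtain the same identity algebraically: translation covariance gives $K_{W(z^+)}=e^{iz^+P_+}K_We^{-iz^+P_+}$, and the Baker--Campbell--Hausdorff series truncates because $[P_+,[P_+,K_W]]=0$, yielding $K_W-K_{W(z^+)}=z^+P_+$ without ever invoking a stress tensor. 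Your version is arguably better adapted to the hypotheses as stated, since the Borchers form of the theorem deliberately drops the assumption of Wightman fields and hence of a local $T_{\mu\nu}$ with a null-plane integral representation; the paper's computation buys physical transparency (the difference of modular Hamiltonians visibly localizes to the null momentum) at the cost of implicitly assuming more structure. Both arguments presuppose the Bisognano--Wichmann input $[K_W,P_\pm]=\mp iP_\pm$, and you are right to flag the apparent circularity and to locate its resolution in Borchers' analytic-continuation argument from positivity of $P_\pm$ and the half-sided translation property (Theorem~\ref{thm:HST} of the paper) --- exactly the step the paper also outsources to the references. One small caution: in your first paragraph you call the inclusion $\mA(W(z^+))\subset\mA(W)$ a half-sided \emph{modular} inclusion before the modular--boost identification is available; at that stage it is only a half-sided \emph{translation} (positive generator $P_+$ preserving $\ket{\Omega}$), and it becomes a half-sided modular inclusion only after the Borchers commutation relation is established. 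Also check your sign conventions in the derivation of \eqref{JJs}; the composition $J_We^{iz^+P_+}J_We^{-iz^+P_+}$ is sensitive to whether $J_WP_+J_W=\pm P_+$, and the paper's stated sign should be matched to its convention $J e^{iz^\pm P_\pm}J=e^{-iz^\pm P_\pm}$.
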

\begin{proof}
    See \cite{borchers1992cpt,borchers2000pct,borchers2000revolutionizing} for a proof. For completeness, we derive the expressions for $G_{\pm}$ explicitly. The modular Hamiltonian $K_W=-\frac{1}{2\pi}\log\Delta_{W}$ is proportional to the generator of the boost transformation around $x^{\pm} = 0$:
\begin{eqnarray}
    K_{W} =  \int_{\Sigma} d\Sigma^{\mu} B^{\nu} T_{\mu\nu}  
\label{eq-boost-generator-t}
\end{eqnarray}
where $B = i(x^{+}\partial_{x^{+}} - x^{-}\partial_{x^{-}})$ is the boost Killing vector field and $T_{\mu\nu}$ is the energy-momentum tensor. This integral can be performed on any Cauchy slice, $\Sigma$. A convenient choice is to integrate on the null plane $x^{-}=0$. This yields 
\begin{eqnarray}
    K_W  =  i\int d^{d-2}x^{i} \int_{-\infty}^{\infty} dx^{+} x^{+} T_{++}(x^+,x^{-}=0,x^{i}) \ . 
\end{eqnarray}
The expression for $K_{W(z^+)}$ is the same as the expression in \eqref{eq-boost-generator-t} but the boost Killing vector field is now $B = i\lb(x^{+}-z^{+}) \partial_{x^{+}} - x^{-}\partial_{x^{-}}\rb \, $. With this change, we get
\begin{eqnarray}
    K_{W(z^{+})} = i \int d^{d-2}x^{i} \int_{-\infty}^{\infty} dx^{+} (x^{+}-z^{+}) T_{++}(x^+,x^{-}=0,x^{i} )\ .
\end{eqnarray}
It follows from the definition in \eqref{defGpm} that
\begin{eqnarray}
    G_{+} = i \int d^{d-2}x^{i} \, \int_{-\infty}^{\infty} dx^{+} T_{++}(x^{+},x^{-}=0,x^{i}) = P_{+} \ .
\end{eqnarray}
is indeed independent of $z^{+}$ and is equal to $P_{+}$. Repeating the argument for $G_{-}$ gives $-G_-=-P_-$.
The modular Hamiltonian $K_W$ is the operator in the Hilbert space that represents $B$. In this work, we often say informally that the modular Hamiltonian is a boost.
\end{proof}
Consider three different observers in two-dimensional Minkowski space: Two observers who perceive null translations by $P_+$ ($-P_-$) as time evolution, and a Rindler observer localized in the right wedge $W$ moving with finite acceleration. 
These setups provide the first examples of future and past (modular) subalgebras (see Figure \ref{fig:wedgealgebras} with $U$ and $V$ replaced with $x^+$ and $x^-$):
\begin{corollary}[Poincar\'e to Future/Past Subalgebras]\label{cor:future}
    Consider the vacuum representation of a local Poincar\'e-covariant QFT that satisfies Haag's duality for the algebra $\mA(W)$. Null translations $P_+$ and $-P_-$ are a pair of commuting positive charges that generate unitary flows satisfying
    \begin{eqnarray}
        &&\forall z^\pm>0,\qquad \mA(W(z^\pm))=e^{i z^\pm P_\pm}\mA(W)e^{-i z^\pm P_\pm}\subset \mA(W)\ .
        % \nn\\
        % &&\forall z^->0,\qquad  \mA(z^-)=e^{-i z^- P_-}\mA(W)e^{i z^- P_-}\subset \mA(W)\ .
    \end{eqnarray}
    When these properties hold we say $W$ is a future (past) subalgebra of $B(\mH)$ with respect to the unitary flow of $P_+$ ($-P_-$), respectively.\footnote{See Lemma \ref{lemma:equivfuturealgebra} for other equivalent ways of defining future and past subalgebras.}
\end{corollary}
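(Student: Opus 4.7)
The strategy is to package the corollary as a combination of three already-available ingredients: the algebraic content of Theorem \ref{Borchersthm} for the positivity and commutation of $P_\pm$, Poincar\'e covariance of the local net for the equality of algebras, and isotony together with the geometric containment of wedges for the subalgebra inclusion. In this sense, the corollary is a repackaging of Borchers' theorem with the standard AQFT axioms of the setup, recast in the language of future/past subalgebras that will organize the rest of the paper.

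The first step is to extract the positivity of $\pm P_\pm$ and the commutation $[P_+, P_-] = 0$ directly from Borchers' CRT theorem. The differences $\pm G_\pm = K_W - K_{W(z^\pm)}$ are non-negative generators by the monotonicity of modular Hamiltonians under algebra inclusion, the explicit null-plane integral carried out in the proof of Theorem \ref{Borchersthm} identifies $G_\pm$ with the null momenta $P_\pm$, and the Poincar\'e relation $[G_+, G_-] = 0$ transfers verbatim. Nothing new needs to be proved at this stage.

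Next, the equality $\mA(W(z^\pm)) = e^{i z^\pm P_\pm}\mA(W) e^{-i z^\pm P_\pm}$ follows immediately from Poincar\'e covariance of the local net: the unitary $e^{i z^+ P_+}$ implements the null translation $x \mapsto x + z^+ \hat{e}_+$ of Minkowski space, which carries $W$ onto $W(z^+)$, and the covariance axiom $U(a)\mA(R) U(-a) = \mA(R+a)$ translates this into the algebraic identity; the $P_-$ case is identical. The inclusion $\mA(W(z^\pm)) \subset \mA(W)$ then follows from isotony applied to the obvious geometric containment $W(z^\pm) \subset W$ whenever $z^\pm > 0$, and this completes the corollary.

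The only point where care is needed is bookkeeping of signs. Because the paper's conventions make $P_-$ a negative operator, it would be easy to conflate ``positive generator'' with ``positive flow parameter''; I would double-check that the map $x^\pm \mapsto x^\pm + z^\pm$ implemented by $e^{i z^\pm P_\pm}$ indeed carries $W$ into its interior $W(z^\pm)$ for $z^\pm > 0$, so that the future-subalgebra direction of the flow agrees with the direction in which the null momenta are positive. This is the only non-mechanical part of the argument and does not constitute a genuine obstacle.
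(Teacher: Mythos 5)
Your proposal is correct and follows exactly the route the paper intends: the paper states this corollary without a dedicated proof, treating it as an immediate consequence of Theorem \ref{Borchersthm} (positivity and commutation of $\pm P_\pm$), Poincar\'e covariance of the net (the equality $\mA(W(z^\pm))=e^{iz^\pm P_\pm}\mA(W)e^{-iz^\pm P_\pm}$), and isotony applied to $W(z^\pm)\subset W$ for $z^\pm>0$. Your sign check at the end is the right thing to verify and comes out consistently with the paper's convention that $P_-$ is a negative operator.
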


\begin{corollary}[Poincar\'e to Modular Future/Past Subalgebras]\label{cor:futuremod}
    Consider the vacuum representation of a local Poincar\'e-covariant QFT. Then,
\begin{enumerate}
    \item With respect to the modular flow of $W$ (boost), the algebras of $W(z^+)$ and $W(z^-)$ for $z^\pm>0$ are, respectively, modular future and past subalgebras; i.e. 
    \begin{eqnarray}
        &&\forall t>0\qquad \Delta_W^{- it}\mA(W(z^+))\Delta_W^{it}\subset \mA(W(z^+))\nn\\
        &&\forall t<0\qquad \Delta_W^{- it}\mA(W(z^-))\Delta^{it}_W\subset \mA(W(z^-))\ .
    \end{eqnarray}
    \item The commutation relation $[G_+,G_-]=0$ is equivalent to the modular conjugation relation  
    \begin{equation}\label{JJcommut}
        J_{W(z^-)}J_{W(z^+)}=J_W J_{W(z^+)} J_{W(z^-)}J_W\ .
    \end{equation}
\end{enumerate}
\end{corollary}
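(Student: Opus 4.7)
Both parts reduce to unwinding the inputs already collected in Theorem~\ref{Borchersthm}; no additional technical machinery is required. For part~1, I would reason geometrically: by Bisognano-Wichmann the modular flow acts as the boost $(x^+,x^-)\mapsto(e^{2\pi t}x^+, e^{-2\pi t}x^-)$, so by tracking the defining half-space $\{x^+ > z^+\}$ one computes
\begin{equation*}
\Delta_W^{-it}\,\mA(W(z^+))\,\Delta_W^{it} \;=\; \mA(W(e^{2\pi t}z^+)),
\end{equation*}
and analogously $\Delta_W^{-it}\,\mA(W(z^-))\,\Delta_W^{it} = \mA(W(e^{-2\pi t}z^-))$. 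Since $z^+\mapsto W(z^+)$ is inclusion-reversing and $e^{2\pi t}z^+>z^+$ for $t>0$ (respectively $e^{-2\pi t}z^->z^-$ for $t<0$), the asserted modular future/past inclusions follow at once from isotony of the net $R\mapsto \mA(R)$.

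For part~2, I would express both sides of (\ref{JJcommut}) as products of two exponentials of $G_\pm$ using (\ref{JJs}). Because each $J$ is an antiunitary involution and its square is the identity, inverting (\ref{JJs}) gives the companion identities
\begin{align*}
J_W J_{W(z^+)} &= e^{2iz^+ G_+}, & J_{W(z^+)} J_W &= e^{-2iz^+ G_+},\\
J_W J_{W(z^-)} &= e^{-2iz^- G_-}, & J_{W(z^-)} J_W &= e^{2iz^- G_-}.
\end{align*}
Inserting $J_W^2 = \mbI$ between $J_{W(z^-)}$ and $J_{W(z^+)}$ on the left, and simply regrouping on the right, one obtains
\begin{equation*}
J_{W(z^-)} J_{W(z^+)} = e^{2iz^- G_-}\,e^{2iz^+ G_+}, \qquad J_W J_{W(z^+)} J_{W(z^-)} J_W = e^{2iz^+ G_+}\,e^{2iz^- G_-},
\end{equation*}
so (\ref{JJcommut}) is equivalent to $[e^{2iz^+ G_+}, e^{2iz^- G_-}]=0$. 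Requiring this for all $z^\pm$ and applying Stone's theorem then yields $[G_+,G_-]=0$; the converse is immediate since commuting self-adjoint generators produce commuting one-parameter unitary groups.

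\textbf{Main obstacle.} There is essentially none: both parts are bookkeeping once Theorem~\ref{Borchersthm} is in hand. The only care required is adhering to sign conventions---which sign of $\Delta_W^{\pm it}$ implements the boost with parameter $+t$ on spacetime regions, and where to insert $J_W^2$ in part~2 so that the regrouping delivers the correct signs in the exponents of $G_\pm$.
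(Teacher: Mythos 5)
Your proposal is correct and follows essentially the same route as the paper: part 1 is the geometric action of the boost on the translated wedges combined with isotony, and part 2 is exactly the paper's repackaging of $[e^{iz^-G_-},e^{iz^+G_+}]=0$ as $(J_{W(z^-)}J_W)(J_WJ_{W(z^+)})=(J_WJ_{W(z^+)})(J_{W(z^-)}J_W)$ via the identities \eqref{JJs} and $J_W^2=\mbI$. Your sign bookkeeping for the four $J$-products is consistent with the paper's conventions, so there is nothing to add.
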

\begin{proof}
    It follows from Theorem \ref{Borchersthm} that $\Delta_W^{it}$, $\Delta_{W(z^\pm)}^{it}$ are boosts, and $J_{W(z^\pm)}$ are CRT transformations. The first statement follows from the boost transformations of $W(z^\pm)$. The second statement is equivalent to the fact that null translations commute:
\begin{eqnarray}\label{commutetrans}
    [e^{i z^-G_-},e^{i z^+G_+}]=0\ .
\end{eqnarray}
To see this, using $J_W J_{W(z^\pm)}=e^{\pm 2i z^\pm G_\pm}$ in (\ref{JJs}) we  repackage the commutation relation in (\ref{commutetrans}) as 
\begin{eqnarray}
    (J_{W(z^-)}J_{W})(J_{W} J_{W(z^+)})=(J_W J_{W(z^+)}) (J_{W(z^-)}J_W)\ .
\end{eqnarray}
%or simply as (\ref{JJcommut}).
\end{proof}

Figure \ref{fig:rindleranosov} gives a geometric picture for Corollary \ref{cor:futuremod}, representing Poincar\'e group relations in \eqref{Borch1} and \eqref{Borch2}, as modular dynamics by $\Delta_W^{it}$, and a pair of commuting exponentially growing (decaying) modes $G_+$ ($-G_-$) with exponent $\lambda=2\pi$. 

\begin{figure}[t]
    \centering
    \includegraphics[width=0.8\linewidth]{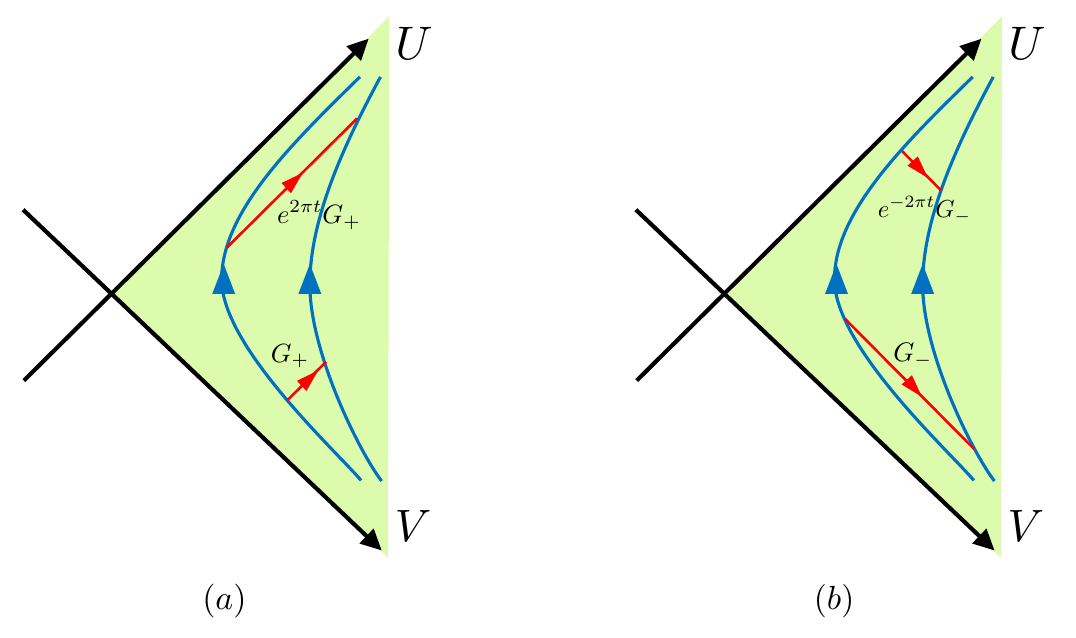}
    \caption{\small{The two-dimensional Poincar\'e algebra can be viewed as a modular Anosov system with maximal exponents $\lambda=2\pi$ (see section \ref{sec:ergodicity}). The blue lines denote the flow generated by the Killing vector field and the red lines denote commuting exponentially growing (decaying) modes $G_{+}$ ($G_{-}$). }}
    \label{fig:rindleranosov}
\end{figure}

\subsection{Modular flows in curved spacetime}\label{sec:CPTcurved}

The proofs of the CRT, Bisognano-Wichmann, and Borchers theorems in the Minkowski space crucially rely on the positivity of the Hamiltonian (the spectrum condition) because it implies the analyticity of the vacuum correlation functions in complex time domains. The lack of a globally timelike Killing vector in a general curved spacetime spoils these analytic properties. Globally hyperbolic spacetimes with a bifurcate Killing horizon are special in that null translations restricted to Killing horizons are isometries that allow us to quantize the subalgebra of QFT localized on the horizon and recover some key uniqueness and analyticity results similar to the Minkowski space \cite{kay1991theorems}. 

\begin{definition}[Spacetime with bifurcate Killing horizon]
A spacetime with bifurcate Killing horizon is a triple ($\mM$, $g_{\mu\nu}$, $B$) where
\begin{enumerate}
    \item $\mM$ is a manifold with the Lorentzian metric $g_{\mu\nu}$ describing a globally hyperbolic spacetime.
    \item The vector $B$ is an everywhere smooth Killing vector that generates a flow that is well-defined for all $t\in \mathbb{R}$.\footnote{This is sometimes called a {\it complete} Killing vector.}
    \item The Killing vector $B$ vanishes on a co-dimension two orientable submanifold called the bifurcation surface $\sigma$ that belongs to some Cauchy slice of $\mM$.\footnote{This implies that $\sigma$ is a smooth submanifold.} If the restriction of the Killing vector $B$ to this Cauchy slice is everywhere timelike, we say the Killing horizon is {\it stationary}.
\end{enumerate}
\end{definition}
Consider the spacetime $\mM^{1,1}\times X$ with a metric of the form (Kruskal-like coordinates)
\begin{eqnarray}\label{generalmetricSewel}
    ds^2=A(UV,x^i) dU dV+B(UV,x^i)dx^i dx_i
\end{eqnarray}
where $A$ and $B$ are positive-valued smooth functions with $(U,V)\in \mathbb{R}^2$. These spacetimes have a stationary bifurcate Killing horizon with a wedge reflection symmetry $(U,V)\to (-U,-V)$; see Figure \ref{fig:rindlerboost}. The Killing vector $B=i(U\p_U-V \p_V)$ generates the isometric flow $(U,V)\to (e^{2\pi t }U,e^{-2\pi t}V)$. Null codimension one-surfaces at $UV=0$ split the spacetime into four regions similar to the Minkowski spacetime. The norm of the Killing vector vanishes on $UV=0$. These null hypersurfaces are future and past Killing horizons $\mathfrak{H}^+$ and $\mathfrak{H}^-$, respectively, and we have a bifurcate surface $\sigma$ at $U=V=0$. For $z^\pm>0$ we define the right wedges $W(z^+,z^-)=(U>z^+,V>z^-,x^i)$ and their causal complements the left wedges $W'(z^+,z^-)$. Eternal black holes in both asymptotically flat and asymptotically AdS spacetimes in Kruskal coordinates have metrics of this form. For instance, the metric of an eternal AdS$_3$ black holes in Kruskal coordinates is 
\begin{eqnarray}\label{generalmetr}
    ds^2=-\frac{4dUdV}{(1+UV)^2}+\frac{(1-UV)^2}{(1+UV)^2}d\tilde{\phi}^2\ .
\end{eqnarray}
In higher dimensions, the eternal AdS$_{d+1}$ black brane has the metric
\begin{eqnarray}
    %&&ds^2=\frac{1}{z^2}(-h(z)^2dt^2+h(z)^{-1}dz^2+dx^i dx^i)\nn\\
    &&ds^2=\frac{1}{z^2}(-h(z)dt^2+h(z)^{-1}dz^2+dx^i dx_i)\nn\\
    &&h(z)=1-z^d/z_0^d\ .
\end{eqnarray}
We make the following changes of the variables
\begin{eqnarray}
    &&U=-e^{-\frac{d}{2z_0}(t-z_*)},\qquad V=e^{\frac{d}{2z_0}(t+z_*)},\qquad \frac{dz_*}{dz}=h(z)^{-1}\,
\end{eqnarray}
where  $z_*$ is the Tortoise coordinate. Since 
\begin{eqnarray}
    z_*=\frac{4z_0}{d}\log (-UV)
\end{eqnarray}
the metric in Kruskal coordinates takes the form in (\ref{generalmetricSewel}) for 
\begin{eqnarray}
    &&A=\frac{4}{d^2UV}g_1(UV), \qquad B=g_2(-UV)\nn\\
    &&g_1(-UV)=h(z)\lb\frac{z_0}{z}\rb^2,\qquad g_2(-UV)=\frac{1}{z^2}\ .
\end{eqnarray}
In \cite{sewell1982quantum} Sewel proved a generalization of Bisognano-Wichmann theorem to these spacetimes:
\begin{theorem}\label{thm:summerscurved}
    Consider the vacuum of QFT in a spacetime with metric (\ref{generalmetricSewel}).  Wightman fields smeared on null hypersurfaces $UV=0$ and the KMS condition define a von Neumann algebra of observables for the wedge $W$. Then, the action of the modular flow of $W$ is the Killing flow generated by $B=i(U\p_U-V\p_V)$, and the modular conjugation is $CRT$ with $RT:(U,V)\to (-U,-V)$. 
\end{theorem}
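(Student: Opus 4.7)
The proof plan follows Sewell's strategy, which extends Bisognano--Wichmann by quantizing on the Killing horizons $\mathfrak{H}^\pm$ where the degenerate metric makes the structure nearly as rigid as in flat space. The central observation is that in the Kruskal-like coordinates of (\ref{generalmetricSewel}), the Killing vector $B = i(U\partial_U - V\partial_V)$ acts as a pure dilation on each horizon: $U \to e^{2\pi t}U$ on $\mathfrak{H}^+$ (where $V=0$) and $V \to e^{-2\pi t}V$ on $\mathfrak{H}^-$ (where $U=0$). Because by hypothesis the wedge algebra $\mA(W)$ is defined from Wightman fields smeared on these null hypersurfaces, the action of the Killing flow on $\mA(W)$ is purely geometric and mimics the Rindler boost on its horizon. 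The right wedge corresponds to $\{U>0\}$ on $\mathfrak{H}^+$ and $\{V>0\}$ on $\mathfrak{H}^-$, so the Killing flow maps $\mA(W)$ into itself.

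Next I would establish the KMS condition. The two-point function of a Wightman scalar $\phi(U,x^i)$ restricted to $\mathfrak{H}^+$ takes the universal form proportional to $-\log(U_1 - U_2 - i0)$ (times a transverse delta-function), as dictated by the analytic structure along the null generator together with the wedge reflection symmetry inherent in the Hartle--Hawking vacuum; this two-point function is manifestly KMS at inverse temperature $2\pi$ under the dilation $U \to e^{2\pi t}U$, which one checks by the change of variables $U = e^{2\pi \xi}$ that converts dilation into translation and the logarithm into the thermal Wightman function at inverse temperature $2\pi$. Since $\mA(W)$ is generated by such smeared horizon fields (and their products), the vacuum state satisfies the KMS condition at $\beta = 2\pi$ with respect to the Killing flow. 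Takesaki's uniqueness theorem then identifies this flow with the modular automorphism group, yielding $\Delta_W^{it} = e^{-2\pi i t B}$.

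For the modular conjugation, the wedge reflection $RT:(U,V)\to(-U,-V)$, lifted to the Hilbert space as an antiunitary operator (together with charge conjugation $C$), preserves $\ket{\Omega}$ and maps $\mathfrak{H}^+ \leftrightarrow \mathfrak{H}^-$ and $\mA(W) \leftrightarrow \mA(W')$. Combined with the already-determined $\Delta_W^{1/2}$, the polar decomposition of the Tomita operator $S_W = J_W \Delta_W^{1/2}$ singles out this CRT operator as $J_W$. The main obstacle lies in the first step: one needs Kay--Wald-type results to ensure that the null-hypersurface construction indeed produces a von Neumann algebra canonically isomorphic to the bulk wedge algebra and that the induced state is the unique Killing-invariant state with the required analyticity along the generator of $\mathfrak{H}^\pm$. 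The subtlety is that, unlike smearing on a Cauchy slice, smearing on a null hypersurface requires additional regularity and uniqueness inputs; these are precisely what the hypotheses of the theorem package together so that the geometric Killing flow can be unambiguously identified with the modular flow.
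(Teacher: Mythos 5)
The paper offers no proof of its own here --- it simply cites Theorems 4 and 8 of Sewell --- and your sketch is a faithful reconstruction of exactly that argument: horizon quantization, the dilation action of $B$ on $\mathfrak{H}^\pm$, the KMS property of the universal $-\log(U_1-U_2-i0)$ horizon two-point function under $U\to e^{2\pi t}U$, Takesaki's uniqueness theorem to identify the Killing flow with the modular flow, and the polar decomposition of $S_W$ to pin down $J_W$ as CRT. You also correctly identify the genuine technical burden (the Kay--Wald-type uniqueness and regularity inputs needed to make null-surface smearing and the induced wedge algebra well defined), so the proposal matches the cited proof in both strategy and substance.
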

\begin{proof}
    See Theorems 4 and 8 in \cite{sewell1982quantum}.
\end{proof}
The null hypersurfaces $UV=0$ have an enhanced symmetry group because null translations $U\to U+z^+$ ($V\to V+z^-$) restricted to the hypersurface $V=0$ ($U=0$) generate isometric flows;\footnote{Note that, importantly, null translations $\p_U$ and $\p_V$ are non Killing vectors of the full geometry.} see Figure \ref{fig:dilatation}. Sewel used this to define the observables associated with the future Killing horizon $\mathfrak{H}^+$ with the null translations $\p_U$ corresponding to a positive Hamiltonian. Therefore, there is a canonical choice of vacuum on this null hypersurface with correlators that can be analytically continued to complex times \cite{kay1991theorems}. The state of the field is extended from horizons $\mathfrak{H}^\pm$ to the right wedge $W$ by the requirement that it enjoys the CRT symmetry in the full spacetime. In physics terminology, such a state is often called a Hartle-Hawking state \cite{jacobson1994note,sanders2015construction}. The physical intuition to have in mind is that modular flow is the unique state-preserving automorphism flow of a von Neumann algebra that satisfies the KMS condition \cite{stratila2020modular}. The Killing flow is manifestly an automorphism of the algebra of the right wedge. The choice of the Hartle-Hawking state ensures the KMS property. Therefore, in the Hartle-Hawking state the Killing flow is the modular flow of the right wedge.

\begin{figure}[t]
    \centering
    \includegraphics[width=0.85\linewidth]{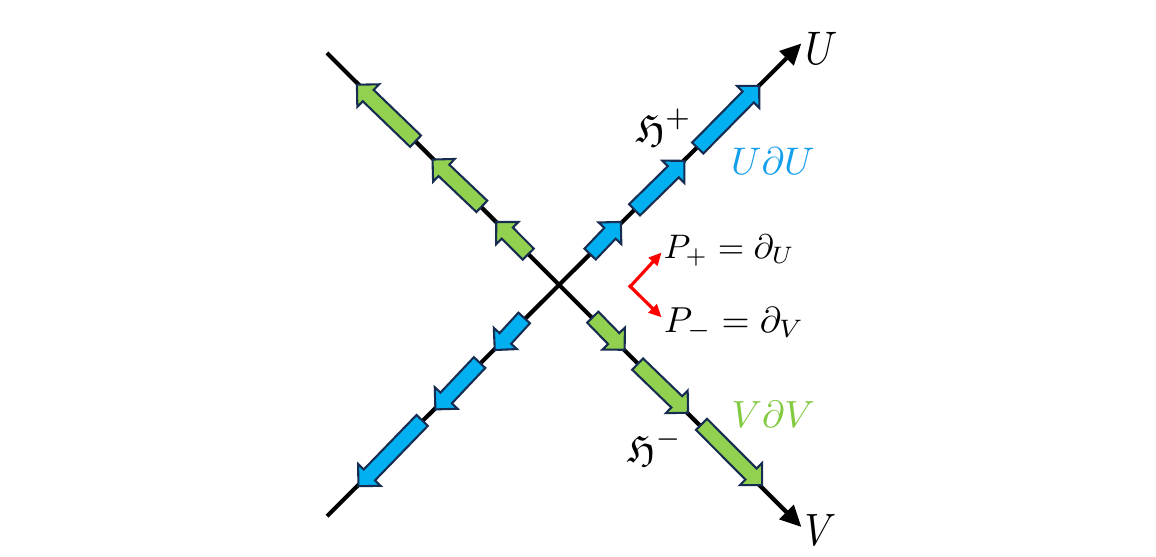}
    \caption{\small{There are enhanced isometry groups on future and past horizons $\mathfrak{H}^\pm$ generated by $P_\pm$ and the Killing flow $B$ that act as dilation and translations: $U\to e^{2\pi t}U+U_0$ and $V\to e^{-2\pi t}+V_0$.}}
    \label{fig:dilatation}
\end{figure}

Summers and Verch proved an analog of the theorem above in curved spacetimes with bifurcate Killing horizons in a Hartle-Hawking state without assuming Wightman fields:\footnote{To incorporate the enhanced symmetry on the horizon in the formalism of algebraic QFT (without Wightman fields), Summers and Verch introduced the notion of {\it symmetry-improving restrictions} \cite{summers1996modular}. They defined algebras of operators localized on the future (past) horizon $\mathfrak{h}^+$ ($\mathfrak{h}^-$), and used them to prove the above theorem without assuming Wightman fields.}
\begin{theorem}
      Consider the vacuum representation of a QFT in a curved space-time with a bifurcate Killing horizon and the metric (\ref{generalmetricSewel}). Modular flow $\Delta_W^{it}$ of the Hartle-Hawking state is the Killing flow generated by $B$, and modular conjugation is the CRT transformation with $RT:(U,V)\to (-U,-V)$. 
\end{theorem}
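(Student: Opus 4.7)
The plan is to reduce the curved spacetime problem to horizon data, where the enhanced symmetry of $\mathfrak{H}^{\pm}$ restores the analytic structure that is available in Minkowski space. First, I would invoke Summers--Verch symmetry-improving restrictions to define horizon algebras $\mA(\mathfrak{h}^{\pm})$ as weak closures of images of wedge observables under a net-preserving restriction map to the null surfaces $UV=0$. Since $\p_U$ (resp.\ $\p_V$) acts as an isometric translation along $\mathfrak{H}^+$ ($\mathfrak{H}^-$), the restricted theory is a one-dimensional net on a line with a positive translation generator $P_+$ (resp.\ $-P_-$); this is exactly the setting where the KMS/spectrum-condition arguments of \cite{kay1991theorems} give a \emph{unique} translation-invariant, locally normal, Hadamard vacuum $\omega_{\mathfrak{H}^\pm}$ on each horizon algebra.

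Second, I would build the Hartle--Hawking state $\omega_{HH}$ on $\mA(W)$ by combining the two horizon states and demanding invariance under the global CRT map $j:(U,V,x^i;q)\mapsto(-U,-V,x^i;-q)$, which is well-defined because the metric \eqref{generalmetricSewel} is invariant under the wedge reflection $(U,V)\to(-U,-V)$. The Killing flow generated by $B=i(U\p_U-V\p_U)$ preserves $W$ and, when restricted to $\mathfrak{H}^+$, acts as the dilation $U\to e^{2\pi t}U$. Positivity of $P_+$ on the horizon then yields, by a standard Wiener--Hopf/edge-of-the-wedge argument, that the horizon two-point function is analytic in the strip $0<\mathrm{Im}(t)<1$ and satisfies the KMS condition at inverse temperature $2\pi$ with respect to the dilation. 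Using CRT to glue the two horizon pieces into $\omega_{HH}$ on the full wedge, the KMS property extends to all of $\mA(W)$ with respect to $\alpha_t=\mathrm{Ad}\,e^{-iBt}$.

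Third, once $\alpha_t$ has been identified as a $\sigma$-weakly continuous state-preserving automorphism group of $\mA(W)$ satisfying the KMS condition at $\beta=2\pi$, Takesaki's uniqueness theorem for modular flow forces
\begin{equation}
\Delta_W^{it}=e^{-2\pi i B t}\ ,
\end{equation}
which is the first claim. For the modular conjugation, the anti-unitary $j$ implementing CRT on $\mathcal{H}$ preserves $\ket{\Omega}$, exchanges $\mA(W)$ with $\mA(W')$, and commutes with $\Delta_W^{it}$ because $B$ is odd under the wedge reflection while CRT is anti-unitary. Together with the polar decomposition $S_W=J_W\Delta_W^{1/2}$ and the Haag-dual identification $\mA(W')=\mA(W)'$, these properties uniquely determine $J_W=j$, giving the second claim.

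The main obstacle is the first step: without Wightman fields one must show that the horizon restriction map is well defined on the wedge algebra and produces a net with a translation generator whose spectrum is actually positive. This is where the Summers--Verch machinery of symmetry-improving restrictions does the real work, and where the smoothness and completeness assumptions on $B$, together with the explicit Kruskal form \eqref{generalmetricSewel}, are essential; once the horizon net and the positivity of $P_\pm$ are in hand, the extension to the wedge and the uniqueness argument proceed in direct analogy with the Bisognano--Wichmann--Borchers proof in flat space.
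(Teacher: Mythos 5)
The paper does not actually supply a proof of this theorem: it defers entirely to the citation \cite{summers1996modular}, and your sketch is a faithful reconstruction of the strategy of that reference as described in the surrounding text (symmetry-improving restrictions to horizon algebras, the Kay--Wald uniqueness of the horizon vacuum, KMS at $\beta=2\pi$ for the dilation from positivity of $P_+$, and the characterization of modular flow as the unique state-preserving KMS automorphism). So for the first claim, $\Delta_W^{it}=e^{-2\pi i Bt}$, your route is essentially the intended one.

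There is, however, a genuine gap in your argument for the second claim. You assert that the properties ``$j$ is anti-unitary, preserves $\ket{\Omega}$, exchanges $\mA(W)$ with $\mA(W')$, and commutes with $\Delta_W^{it}$'' together with Haag duality \emph{uniquely determine} $J_W=j$. They do not: if $v$ is any unitary implementing a nontrivial internal symmetry of the net that preserves $\ket{\Omega}$, normalizes $\mA(W)$, and commutes with the Killing flow (e.g.\ charge conjugation alone, or a flavor symmetry), then $j'=jv$ satisfies all of the listed properties but differs from $j$. The identification $J_W=j$ cannot be extracted from these covariance properties alone; one must verify the Tomita relation directly, i.e.\ show that $S_W := j\,e^{-\pi B}$ satisfies $S_W\, a\ket{\Omega}=a^*\ket{\Omega}$ on a dense set of $a\in\mA(W)$. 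This is where the explicit horizon two-point functions and the reflection (CRT) construction of the Hartle--Hawking state do real work — exactly as in the Bisognano--Wichmann computation, where the analytic continuation of the correlators by $i\pi$ in boost time is checked to implement the $*$-operation composed with the wedge reflection. Your sketch names all the right ingredients to carry this out, but the uniqueness claim as stated would not survive as a proof step.
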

\begin{proof}
    See \cite{summers1996modular} for a proof.
\end{proof}

\begin{corollary}\label{cor:curvedfuturemod}
   Consider the Hartle-Hawking vacuum representation of QFT in spacetime with a bifurcate Killing horizon. With respect to the modular flow of $W$ (generated by $B$), the wedge algebras of $W(z^+)$ and $W(z^-)$ are, respectively, modular future and past subalgebras (see Figure \ref{fig:rindlerboost}). 
\end{corollary}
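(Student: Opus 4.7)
The plan is to combine the Summers-Verch theorem, which identifies the modular flow of $W$ in the Hartle-Hawking vacuum with the Killing flow generated by $B$, with two elementary geometric facts about how that Killing flow acts on sub-wedges, and then transfer the geometric inclusion to algebras via isotony of the net.

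First, I would record the algebraic consequence of the Summers-Verch theorem together with the commutation relations (\ref{algebrarel1}): since $\Delta_W^{it}=e^{-2\pi i B t}$ and $e^{itB}e^{isP_\pm}e^{-itB}=e^{ise^{\pm t}P_\pm}$, one immediately obtains $\Delta_W^{-it}e^{isP_\pm}\Delta_W^{it}=e^{ise^{\pm 2\pi t}P_\pm}$. Next, I would invoke the curved-space analog of Corollary \ref{cor:future}, namely $\mA(W(z^\pm))=e^{iz^\pm P_\pm}\mA(W)e^{-iz^\pm P_\pm}$, which is available because the null translations on $\mathfrak{H}^\pm$ generate automorphisms of the horizon algebras in the Kay-Wald quantization. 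Combining these with modular invariance $\Delta_W^{-it}\mA(W)\Delta_W^{it}=\mA(W)$, a one-line conjugation yields $\Delta_W^{-it}\mA(W(z^\pm))\Delta_W^{it}=\mA(W(e^{\pm 2\pi t}z^\pm))$. For $t\geq 0$ the right-hand side lies in $\mA(W(z^+))$ by isotony (since $W(e^{2\pi t}z^+)\subseteq W(z^+)$), and for $t\leq 0$ it lies in $\mA(W(z^-))$, which is exactly the modular future/past subalgebra statement.

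The main obstacle, as I see it, is less the final algebraic manipulation than securing its prerequisites in the curved setting: one must know that (i) the null translations $P_\pm$ exist as self-adjoint generators with the appropriate positivity on $\mathfrak{H}^\pm$, (ii) they satisfy the Borchers-type relations (\ref{algebrarel1}) with the modular generator $B$, and (iii) the nested wedge algebras $\mA(W(z^\pm))$ are well-defined von Neumann subalgebras of $\mA(W)$ obeying isotony. All three ingredients are supplied by the Summers-Verch framework of symmetry-improving restrictions to the horizons, together with the Kay-Wald characterization of the unique KMS state on each horizon and its CRT-symmetric extension to the bulk (the Hartle-Hawking state). Once these inputs are in place, the corollary reduces to the one-line algebraic computation above, which itself merely transcribes the geometric observation that the Killing flow shrinks $W(z^+)$ into itself forward in modular time and $W(z^-)$ into itself backward.
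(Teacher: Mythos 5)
Your proposal is correct and follows essentially the same route the paper intends: the Summers--Verch identification of the modular flow of $W$ with the Killing flow of $B$, its geometric action sending $W(z^\pm)$ to $W(e^{\pm 2\pi t}z^\pm)$, and isotony of the net (mirroring the proof of Corollary \ref{cor:futuremod}, where the inclusion ``follows from the boost transformations of $W(z^\pm)$''). The detour through the horizon translations $P_\pm$ is not strictly needed---the Killing flow acts geometrically on the full wedges, so the inclusion $W(e^{2\pi t}z^+)\subseteq W(z^+)$ for $t>0$ already gives the result---but your use of the symmetry-improving restriction framework to justify it is legitimate.
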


\subsection{From future/past algebras to local Poincar\'e}

In previous subsections, we showed that in Minkowski space QFT the Poincar\'e group implies future/past algebras with respect to null translations (Corollary \ref{cor:future}) and modular future/past algebras with respect to boosts (Corollary \ref{cor:futuremod}). In curved space-times, with bifurcate Killing horizons, in the Hartle-Hawking state, we found modular future/past algebras with respect to the Killing flow by $B$ (Corollary \ref{cor:curvedfuturemod}). Here, we present three results in quantum ergodicity, each of which is the converse of one of the corollaries of the previous section. The physics interpretation is that there is an emergent ``local" Poincar\'e algebra in quantum systems that have modular future and past subalgebras. We postpone the systematic discussion of what quantum dynamical systems that show future/past subalgebras to section \ref{sec:ergodicity}.

% There exists a converse to both theorem \ref{Borchers} and corollary \ref{summers}.
%Borchers proved the converse of Corollary \ref{Borchers}:
The converse to Corollary \ref{cor:future} is:
\begin{theorem}[Poincar\'e Group from Future/Past Algebras]\label{EmergentBorch}
    Assume $\pm G_\pm$ are a pair of commuting positive operators that kill the vacuum:  $G_\pm\ket{\Omega}=0$. If a subalgebra $\mR$ is simultaneously the future subalgebra for $G_+$ and the past subalgebra for $G_-$ then the modular flow $\Delta_\mR^{it}$ and $e^{i s G_\pm}$ generate a two-dimensional Poincar\'e group.
\end{theorem}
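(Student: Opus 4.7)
The plan is to apply the half-sided modular inclusion theorem (Theorem~\ref{thm:HSMI}) separately to each of the one-parameter unitary groups $e^{isG_+}$ and $e^{isG_-}$, and then invoke the assumed commutativity $[G_+, G_-] = 0$ to close the two-dimensional Poincar\'e algebra. All the genuinely analytic work (domain questions, cyclicity and separatingness of $\ket{\Omega}$ for the smaller algebras, and the Borchers--Wiesbrock commutation relation itself) is already packaged inside Theorem~\ref{thm:HSMI}; the remaining task is to match the hypotheses of the present theorem to the two $\pm$ forms of that theorem.

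For $G_+$, the three hypotheses $G_+ \geq 0$, $G_+\ket{\Omega} = 0$, and the future subalgebra property $e^{isG_+}\mR e^{-isG_+} \subset \mR$ for $s > 0$ are precisely the input of the $+$ form of Theorem~\ref{thm:HSMI}. Setting $\mN_+ := e^{iG_+}\mR e^{-iG_+} \subset \mR$, the pair $(\mN_+, \mR)$ is a $+$ half-sided modular inclusion with common cyclic separating vector $\ket{\Omega}$, and the theorem outputs
\begin{equation*}
\Delta_\mR^{-it}\, e^{isG_+}\, \Delta_\mR^{it} = e^{i e^{+2\pi t} s\, G_+}\,.
\end{equation*}

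For $G_-$, where $-G_- \geq 0$, I would rewrite the past subalgebra hypothesis $e^{isG_-}\mR e^{-isG_-} \subset \mR$ (with $s$ of the appropriate sign that makes $\mR$ shrink into itself) in terms of the positive generator $P := -G_- \geq 0$. The resulting inclusion has the opposite parameter sign convention from the $G_+$ case, so it is a $-$ half-sided modular inclusion for $\mR$ (equivalently, a $+$ HSMI after passing to the commutant $\mR'$, under which $\Delta_\mR^{it}\to \Delta_\mR^{-it}$ and the role of past and future interchange). Theorem~\ref{thm:HSMI} then produces the dual scaling relation
\begin{equation*}
\Delta_\mR^{-it}\, e^{isG_-}\, \Delta_\mR^{it} = e^{i e^{-2\pi t} s\, G_-}\,.
\end{equation*}

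Finally, combining these two relations with the assumed commutator $[G_+, G_-] = 0$ closes the two-dimensional Poincar\'e algebra: $\Delta_\mR^{it}$ plays the role of the boost with maximal Lyapunov exponent $2\pi$, while $G_\pm$ are the two commuting null translations with opposite scaling weights $\pm 2\pi$. The only subtlety I expect is sign-bookkeeping in the $G_-$ branch, namely the translation between the positivity convention $\pm G_\pm \geq 0$, the "past" direction of the inclusion, and the $\pm$ form of Wiesbrock's theorem; both routes (passing to the commutant, or substituting $G_- \mapsto -G_-$ before applying the $+$ form) yield the same relation, so this is a verification rather than a real obstacle.
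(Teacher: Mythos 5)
Your overall strategy --- derive the Borchers scaling relation separately for $e^{isG_+}$ and $e^{isG_-}$, then close the two-dimensional algebra using the assumed $[G_+,G_-]=0$ --- is exactly the paper's, and the conclusion is right. The one substantive issue is which packaged theorem does the work. The hypotheses you are handed (positive generator, $G_\pm\ket{\Omega}=0$, and $\mR$ a future/past subalgebra under the \emph{translation} flow $e^{isG_\pm}$) are precisely the input of the half-sided \emph{translation} theorem of Borchers, i.e.\ Theorem \ref{thm:HST} (implication $1\Rightarrow 3$), which directly yields $\Delta_\mR^{-it}e^{isG_\pm}\Delta_\mR^{it}=e^{ie^{\pm2\pi t}sG_\pm}$ for the given generators; that is the route the paper takes. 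You instead try to feed the data into the half-sided \emph{modular inclusion} theorem, Theorem \ref{thm:HSMI}, by forming $\mN_+:=e^{iG_+}\mR e^{-iG_+}$, and this detour has two gaps. First, Theorem \ref{thm:HSMI} takes as input an inclusion that is half-sided under the \emph{modular} flow of $\mR$, not under the translation flow; to know that $(\mN_+,\mR)$ has that property you already need Borchers' commutation relation (this is the implication $3\Rightarrow 4$ in Theorem \ref{thm:HST}), so the step is circular as a means of obtaining that relation. Second, even granting the modular inclusion, Theorem \ref{thm:HSMI} outputs a generator $K_\mR-K_{\mN_+}$ and the scaling relation for \emph{that} operator; identifying it with your original $G_+$ is an additional nontrivial step (the uniqueness half of the Borchers--Wiesbrock correspondence) that you do not supply. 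Replacing every invocation of Theorem \ref{thm:HSMI} by Theorem \ref{thm:HST} removes both problems and reduces your argument to the paper's; the sign bookkeeping in the $G_-$ branch is then handled exactly as you describe.
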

\begin{proof}
     The proof is based on the half-sided translation  theorem of Borchers (see Theorem \ref{thm:HST}) that says the existence of future (past) algebras for a flow with $G_\pm>0$ and $G_\pm\ket{\Omega}=0$ implies the relations
    \begin{eqnarray}\label{Borchrel}
        \Delta_\mR^{-it}e^{i s G_\pm} \Delta_\mR^{it}=e^{i e^{\pm 2\pi t}s G_\pm}, \qquad [G_\pm,K_\mR ]=\pm i G_\pm\ .
    \end{eqnarray}
   Since $[G_+,G_-]=0$, the modular flow $\Delta_\mR^{it}$ and $e^{i s G_\pm}$ generate a two-dimensional Poincar\'e group.
\end{proof}
Summers proved the converse to Corollary \ref{cor:futuremod}: 
\begin{theorem}[Poincar\'e Group from Modular Future/Past Algebras]\label{EmergentSummers}
    Consider the modular flow of a von Neumann algebra $\mR$. If $\mA^+$ and $\mA^-$ are modular future and past subalgebras, respectively, and we have 
    \begin{eqnarray}\label{Jcommut}
        &&J_{\mA^-}J_{\mA^+}=J_\mA J_{\mA^+} J_{\mA^-}J_\mA\ .
    \end{eqnarray}
    The operators $\pm G_\pm$ defined by
    \begin{eqnarray}\label{Gpmdef}
        \pm G_\pm=K_\mR-K_{\mA^\pm}%e^{\pm i2 G_\pm}=J_\mA J_{\mA^\pm}
    \end{eqnarray}
    are positive, and together with $\Delta_\mR^{it}$ they generate the two-dimensional Poincar\'e algebra.
\end{theorem}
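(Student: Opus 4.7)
The strategy is to apply the Half-sided Modular Inclusion Theorem \ref{thm:HSMI} twice --- once to the inclusion $\mA^+\subset\mR$ and once to $\mA^-\subset\mR$ --- so that each of $G_+$ and $G_-$ separately acquires the Borchers-type algebraic relations (positivity, vacuum invariance, exponential scaling under the modular flow, and an identification of $J_\mR J_{\mA^\pm}$ with a definite exponential of $G_\pm$). After that, the only remaining ingredient of the two-dimensional Poincar\'e algebra is the commutativity $[G_+,G_-]=0$, which I would extract from the hypothesized modular-conjugation identity (\ref{Jcommut}).

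For the future inclusion $\Delta_\mR^{-it}\mA^+\Delta_\mR^{it}\subset\mA^+$ for $t>0$, Theorem \ref{thm:HSMI} directly yields $G_+=K_\mR-K_{\mA^+}\geq 0$, $G_+\ket{\Omega}=0$, the Borchers scaling $\Delta_\mR^{-it}e^{isG_+}\Delta_\mR^{it}=e^{ie^{2\pi t}sG_+}$, and the structural identification $J_\mR J_{\mA^+}=e^{2iG_+}$ in the conventions fixed by the theorem. Applied to the past inclusion, the same theorem produces the mirror statements for $-G_-$: positivity $-G_-\geq 0$, vacuum invariance $G_-\ket{\Omega}=0$, the scaling $\Delta_\mR^{-it}e^{isG_-}\Delta_\mR^{it}=e^{ie^{-2\pi t}sG_-}$, and $J_\mR J_{\mA^-}=e^{-2iG_-}$.

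To produce $[G_+,G_-]=0$, I would insert $J_\mR^2=I$ into the right-hand side of (\ref{Jcommut}) and regroup, recasting the hypothesis as
\[
(J_{\mA^-}J_\mR)(J_\mR J_{\mA^+})=(J_\mR J_{\mA^+})(J_{\mA^-}J_\mR).
\]
Substituting the identifications from the previous paragraph, together with $J_{\mA^-}J_\mR=(J_\mR J_{\mA^-})^{-1}=e^{2iG_-}$, collapses this to the single unitary commutation relation $e^{2iG_-}e^{2iG_+}=e^{2iG_+}e^{2iG_-}$. To promote this one-parameter statement to the full Lie-algebraic $[G_+,G_-]=0$, I would conjugate both sides by $\Delta_\mR^{it}$ and apply the Borchers scalings for $G_+$ and $G_-$ separately, which replace the fixed exponent $2$ independently by $2e^{2\pi t}$ and $2e^{-2\pi t}$. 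Running $t$ over all reals then gives $e^{isG_-}e^{is'G_+}=e^{is'G_+}e^{isG_-}$ for all $s,s'\in\mbR$, and differentiation at the origin delivers $[G_+,G_-]=0$. Combined with the two Borchers scalings from the previous paragraph, this realizes $\Delta_\mR^{it}$ together with $e^{isG_\pm}$ as a representation of the two-dimensional Poincar\'e group.

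The main technical obstacle is pinning down the precise form of the identification $J_\mR J_{\mA^\pm}=e^{\pm 2iG_\pm}$ produced by Theorem \ref{thm:HSMI}: the rearrangement of (\ref{Jcommut}) used above is only useful if this identification has the correct sign and exponent. Once the Wiesbrock--Borchers dictionary is fixed, the promotion of the single commutation relation to $[G_+,G_-]=0$ via the boost-scaling trick, and the rest of the Poincar\'e relations, are routine.
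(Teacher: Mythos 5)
Your overall route coincides with the paper's: apply the half-sided modular inclusion theorem separately to $\mA^+\subset\mR$ and $\mA^-\subset\mR$ to get positivity, vacuum invariance and the Borchers scalings for $G_\pm$, then use the hypothesis (\ref{Jcommut}) to extract $[G_+,G_-]=0$. Your rearrangement of (\ref{Jcommut}) into $(J_{\mA^-}J_\mR)(J_\mR J_{\mA^+})=(J_\mR J_{\mA^+})(J_{\mA^-}J_\mR)$ and hence into $e^{2iG_-}e^{2iG_+}=e^{2iG_+}e^{2iG_-}$ is exactly the manipulation the paper performs in Corollary \ref{cor:futuremod}. One small caveat you already flag: the identification $J_\mR J_{\mA^\pm}=e^{\pm 2iG_\pm}$ is not among the three items listed in Theorem \ref{thm:HSMI} as stated (which gives only the detailed-balance relation $J_\mR e^{isG}J_\mR=e^{-isG}$); it is the abstract analogue of (\ref{JJs}) and must be imported from the full half-sided modular inclusion theorem of Borchers--Wiesbrock.

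The genuine gap is in the promotion of the single unitary relation to $[G_+,G_-]=0$. Conjugating $e^{2iG_-}e^{2iG_+}=e^{2iG_+}e^{2iG_-}$ by $\Delta_\mR^{it}$ does \emph{not} rescale the two exponents independently: the same $t$ drives both, sending $2\mapsto 2e^{2\pi t}$ in the $G_+$ factor and $2\mapsto 2e^{-2\pi t}$ in the $G_-$ factor. So you obtain $[e^{isG_-},e^{is'G_+}]=0$ only along the hyperbola $ss'=4$ with $s,s'>0$, not for all $(s,s')\in\mbR^2$. Even exploiting that, for fixed $s$, the set of commuting $s'$ is a closed subgroup of $\mbR$, you only reach pairs with $ss'\in 4\mbZ$, which is not dense in $\mbR^2$ and in particular misses $(s,s')=(1,1)$; so ``running $t$ over all reals'' does not deliver the two-parameter commutation, and differentiation at the origin is not available. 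In the QFT Corollary \ref{cor:futuremod} this issue never arises because the $J$-relation holds for the entire two-parameter family of wedges $W(z^\pm)$, which genuinely yields $[e^{iz^-G_-},e^{iz^+G_+}]=0$ for all $z^\pm$. In the abstract setting of the present theorem, passing from the single relation (\ref{Jcommut}) to $[G_+,G_-]=0$ is precisely the content of Wiesbrock's modular intersection theorem, whose proof relies on analytic continuation of $s\mapsto e^{isG_\pm}$ into a half-plane (available because $\pm G_\pm\geq 0$) together with Trotter-type arguments; this is what the paper's citation of \cite{summers2005tomita,borchers2000revolutionizing} stands in for, and your elementary scaling argument needs to be replaced by, or reduced to, that result.
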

\begin{proof}
The proof of this is based on the half-sided modular inclusion theorem (see Theorem \ref{thm:HSMI}) which says that the existence of modular future (past) algebras implies that the operators $G_\pm$ defined by 
\begin{eqnarray}
     \pm G_\pm=K_\mR-K_{\mA^\pm}\geq 0
\end{eqnarray}
satisfy the relations the relations
\begin{eqnarray}\label{algebrarelGs}
    &&\forall s, t \in \mathbb{R}:\qquad \Delta_\mR^{-it}e^{i s G_\pm}\Delta_\mR^{it}=e^{i  e^{\pm 2\pi t}sG_\pm}\nn\\
    &&[G_\pm,K_\mR]=\pm i G_\pm\ .
 \end{eqnarray}
As we argued in Corollary \ref{cor:futuremod} the assumption (\ref{Jcommut}) is equivalent to $[G_+,G_-]=0$, therefore we obtain a two-dimensional Poincar\'e group.
For more details, see \cite{summers2005tomita,borchers2000revolutionizing}.
\end{proof}
In both the Minkowski vacuum and the Hartle-Hawking state in a space-time with a bifurcate Killing horizon, the modular flow of the right wedge $W$ is local. However, a key difference is that $P_\pm$ are Killing vectors of the full Minkowski spacetime; therefore, the modular flow of every wedge $W(z^+,z^-)$ is also local. However, in curved spacetime, it is only the restriction of the action of the modular flow of $W(z^\pm)$ to the Killing horizons $\mathfrak{H}^\pm$ that is local and geometric. 
In general, the modular flows of $W(z^+,z^-)$ are nonlocal. For $z^\pm>0$, we have 
\begin{eqnarray}
    &&\Delta^{-it}_{W(z^+)}\mO(U,V=0)\Delta_{W(z^+)}^{it}=\mO(e^{2\pi t}U,V=0)\nn\\
    &&\Delta^{-it}_{W(z^-)}\mO(U=0,V)\Delta_{W(z^-)}^{it}=\mO(U=0,e^{-2\pi t}V)\ .
\end{eqnarray}
The action of $\Delta_{W(z^\pm)}^{it}$ away from the horizon, even though nonlocal, still satisfies the group relations $t_0\to e^{2\pi t}t_0+s$ generated by the local flow $\Delta_W^{it}$ and the non-local flow $e^{is G_\pm}$ where $G_\pm$ are defined as in (\ref{Borchrel}) as implied by the half-sided modular inclusion Theorem \ref{thm:HSMI}. 
Finally, we write a theorem that resembles a converse to Corollary \ref{cor:curvedfuturemod} because it applies in situations in curved spacetimes where there is only a local Poincar\'e group:
\begin{theorem}[Emergent Local Poincar\'e Group]\label{newthmapprox}
    Consider a pair of von Neumann subalgebras $\mA^\pm \subset \mR$ such that $\mA^+$ and $\mA^-$ are modular future and past subalgebras, respectively. Define 
    \begin{eqnarray}
        &&\mA^\pm(s):=\Delta_\mR^{-is}\mA^\pm \Delta_\mR^{is}\label{shiftedalg}\,\nn\\
        &&\pm G_\pm(s)= K_\mR-K_{\mA^\pm(s)}
    \end{eqnarray}
    and consider the algebra $\mA^+(-s)$ and $\mA^-(s)$, and correspondingly $G^+(-s)$ and $G^-(s)$ for $s\gg 1$.
    Then, in the scaling limit of large enough $s$ such that $z^+z^- e^{-4\pi s}\ll 1$ we have an emergent  Poincar\'e algebra 
    \begin{eqnarray}
    &&\Delta_\mR^{-it}e^{i z^\pm G_\pm} \Delta_\mR^{it}=e^{i e^{\pm 2\pi t}z^\pm G_\pm}\,\nn\\
    %&&[G_+(s),G_-(s)]=O(e^{-4\pi s})\nn\\
    &&e^{i z^+G_+(-s)}e^{i z^- G_-(s)} e^{-i z^+G_+(-s)}=e^{i z^- G_-(s)+O(z^-z^+ e^{-4\pi s})}\ .
    \label{approxcomm}
    \end{eqnarray}
\end{theorem}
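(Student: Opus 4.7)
The first identity in (\ref{approxcomm}) is nothing but the Borchers relation (\ref{algebrarelGs}) of the half-sided modular inclusion Theorem \ref{thm:HSMI}, applied to the pair $(\mR,\mA^\pm)$, and requires no additional argument. I therefore focus on the second, approximate commutation identity, which is where the new scaling input enters.

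The first step I would carry out is to identify $G_\pm(s)$ explicitly. Since $\Delta_\mR^{is}$ fixes $\ket{\Omega}$ and commutes with $K_\mR$, the modular Hamiltonian of the unitarily shifted algebra is $K_{\mA^\pm(s)}=\Delta_\mR^{-is}K_{\mA^\pm}\Delta_\mR^{is}$, so
\begin{equation*}
 \pm G_\pm(s)=\Delta_\mR^{-is}(\pm G_\pm)\Delta_\mR^{is}.
\end{equation*}
Differentiating (\ref{algebrarelGs}) at the origin of the $G_\pm$-flow parameter gives $\Delta_\mR^{-is}G_\pm\Delta_\mR^{is}=e^{\pm 2\pi s}G_\pm$, hence
\begin{equation*}
 G_+(-s)=e^{-2\pi s}G_+,\qquad G_-(s)=e^{-2\pi s}G_-.
\end{equation*}
Both shifted generators are therefore individually exponentially small, which is the heart of the scaling estimate.

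The second step is to apply the Hadamard expansion to the conjugation. Writing $X:=iz^+G_+(-s)$ and $Y:=iz^-G_-(s)$, each nested commutator $\operatorname{ad}_X^n Y$ carries an overall factor of $(z^+ e^{-2\pi s})^n\,(z^- e^{-2\pi s})$, so
\begin{equation*}
 e^{X}Y e^{-X}=Y+[X,Y]+\tfrac{1}{2!}[X,[X,Y]]+\cdots=iz^- G_-(s)\;-\;z^+z^- e^{-4\pi s}[G_+,G_-]\;+\;O\!\bigl((z^+)^2 z^- e^{-6\pi s}\bigr).
\end{equation*}
Exponentiating and using $Ue^{iB}U^{-1}=e^{iUBU^{-1}}$ then yields the claimed form $\exp\!\bigl(iz^- G_-(s)+O(z^+z^- e^{-4\pi s})\bigr)$, valid in the regime $z^+z^- e^{-4\pi s}\ll 1$.

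The main obstacle is analytic rather than structural: $G_\pm$ are unbounded positive self-adjoint operators, so the Hadamard series must be controlled on a common invariant dense domain---for example, the $K_\mR$-analytic vectors already employed in proofs of Theorem \ref{thm:HSMI}---and one must verify that the iterated commutators $\operatorname{ad}_{G_+}^n G_-$, applied to such vectors, do not grow faster than a geometric series, so that the $e^{-2\pi ns}$ suppression guarantees absolute convergence. Once this is secured, no further structural input beyond Theorem \ref{thm:HSMI} is required; the scaling condition $z^+z^- e^{-4\pi s}\ll 1$ is precisely what makes every higher-order correction in the Hadamard expansion individually small.
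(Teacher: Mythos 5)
Your proposal is correct and follows essentially the same route as the paper's proof: identify $\pm G_\pm(s)=\Delta_\mR^{-is}(\pm G_\pm)\Delta_\mR^{is}=e^{\pm 2\pi s}(\pm G_\pm)$ via the half-sided modular inclusion relations, conclude that $[G_+(-s),G_-(s)]=e^{-4\pi s}[G_+,G_-]$, and finish with the Baker--Campbell--Hausdorff expansion. Your closing remark about controlling the Hadamard series on a common invariant dense domain of analytic vectors is a legitimate analytic refinement that the paper's (more formal) proof does not address, but it does not change the structure of the argument.
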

%\NL{HERE}
\begin{proof}
It follows from assumptions and the half-sided modular inclusion theorem (Theorem \ref{thm:HSMI}) that we have the algebraic relations in (\ref{algebrarelGs}) for $G_\pm(0)$. 
    % \begin{eqnarray}
    %     G_\pm=\frac{1}{2\pi}\lb -\log \Delta_\mA+\log \Delta_{\mA^\pm}\rb\ .
    % \end{eqnarray}
In general, $[G_+(0),G_-(0)]\neq 0$, however, the half-sided modular inclusion theorem also implies that for the algebras $\mA^\pm(s)$ defined in (\ref{shiftedalg}) we have
\begin{eqnarray}
    \pm G_\pm(s):= K_\mR-K_{\mA^\pm(s)}=\Delta_\mR^{-is}(\pm G_\pm) \Delta_\mR^{is}=e^{\pm 2\pi s}(\pm G_\pm)\ .
\end{eqnarray}
    Therefore, we have 
    % \MM{(Why do we need large enough $s$. The above and the below equation are valid for arbitrary $s$.)}
    \begin{eqnarray}
        [G_+(-s),G_-(s)]=e^{-4\pi s}[G_+(0),G_-(0)]
    \end{eqnarray}
    which means that at large $s$ they almost commute. 
    Using the Baker-Campbell-Hausdorff  expansion we find (\ref{approxcomm}).
\end{proof}

\section{Future/Past Subalgebras}\label{sec:futurepast}

In this section, we discuss two examples of physical systems with past and future subalgebras and comment on the ergodic properties of their modular flows, namely strong mixing, the second law, the exponential decay of correlators, and the emergence of an approximate Poincar\'e algebra.

\subsection{Large N theories}\label{sec:largeN}

Consider a large $N$ theory in a normalization where the action has the form $S=N \text{tr}(L)$ with no explicit $N$-dependence in the Lagrangian $L$.
In the strict $N\to\infty$ limit of a large $N$ theory, in a KMS state (canonical ensemble) of inverse temperature $\beta$, the thermal one-point function removed single-trace operators $\mO-\braket{\mO}_\beta$ generate an algebra of GFF \cite{leutheusser2021emergent}. Above the Hawking-Page phase transition, the expectation value of the Hamiltonian and its fluctuations in this state are 
\begin{eqnarray}
    \braket{H}_\beta=O(N^2), \qquad \braket{(H-\braket{H}_\beta)^2}_\beta= O(N^2)\ .
\end{eqnarray}
Therefore, the time evolution operator $e^{i Ht}$ cannot be included in the GFF algebra because of its large fluctuations \cite{witten2022gravity}.\footnote{It generates an outer automorphism of the algebra. This already implies that the observable algebra cannot be type I.} The operator $(H-\braket{H}_\beta)/N$ can be included in this algebra, but since it commutes with all the other GFF operators it forms a center. 
%\NL{HERE}
We denote the von Neumann algebra of noncentral single-trace operators by $\mR$. For every time interval $I_{12}=(t_1,t_2)$ we define a von Neumann GFF algebra $\mA_{(t_1,t_2)}$ with a KMS state \cite{leutheusser2021causal,leutheusser2021emergent,furuya2023information}. They describe all the events (perturbations) that can occur in that time interval; see Figure \ref{fig:timealgebra}. Time evolution shifts the interval and their corresponding algebras in time:
\begin{figure}[t]
    \centering
    \includegraphics[width=0.8\linewidth]{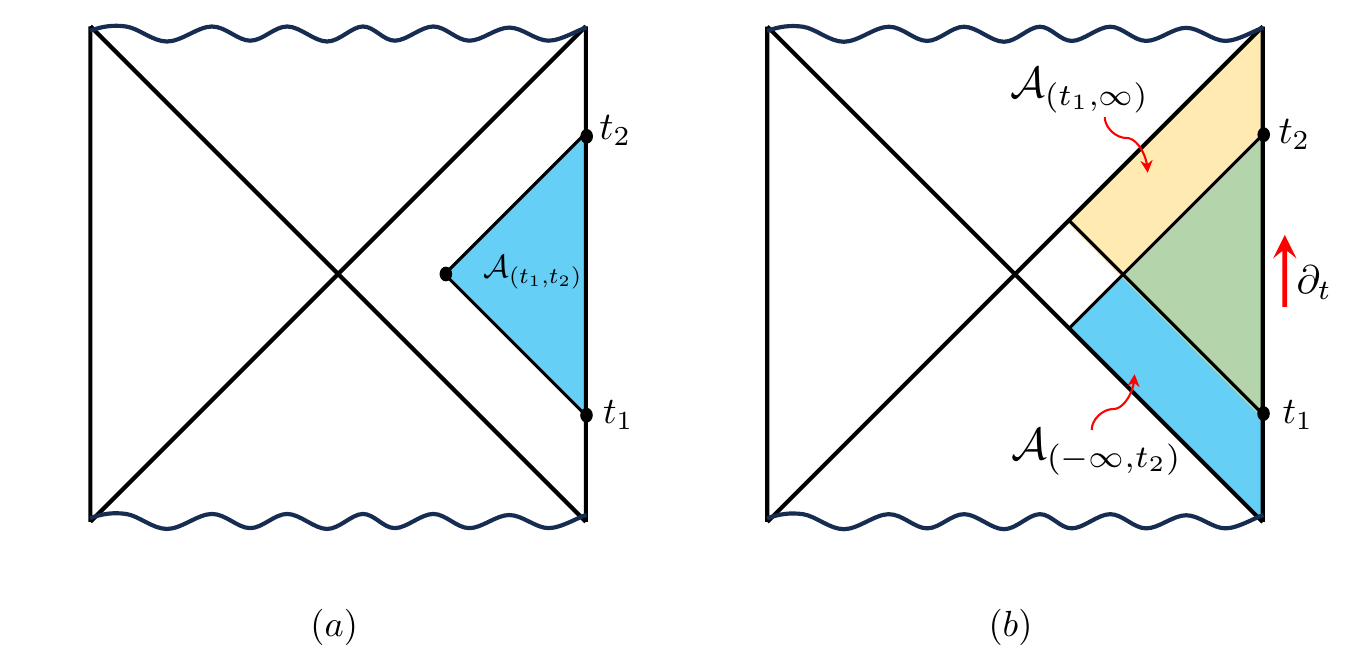}
    \caption{\small{(a) The time interval algebras of GFF above the Hawking-Page phase transition. (b) The algebra $\mA_{(t_1,\infty)}$ is a future subalgebra and $\mA_{(-\infty,t_2)}$ is a past subalgebra.}}
    \label{fig:timealgebra}
\end{figure}
\begin{eqnarray}
    \mA_{(t_1,t_2)}(s):=e^{i Hs}\mA_{(t_1,t_2)}e^{-i Hs}=\mA_{(t_1+s,t_2+s)}\ .
\end{eqnarray}
The events and perturbations that can occur in the entire future (past) form the future (past) von Neumann subalgebras (see Figure \ref{fig:timealgebra}):
\begin{eqnarray}\label{futurepastdef}
\mA_{(t_1,\infty)}&=&\lb\vee_{s> 0}\mA_{(t_1,t_2)}(s)\rb''\subset \mA_{(-\infty,\infty)}\nn\\
\mA_{(-\infty,t_2)}&=&\lb\vee_{s<0}\mA_{(t_1,t_2)}(s)\rb''\subset \mA_{(-\infty,\infty)}\ .
\end{eqnarray}
% In the expressions above, the algebraic union can be defined with taking the double commutant to define von Neumann subalgebras
The future and past subalgebras $\mA_{(t_1,\infty)}$ and $\mA_{(-\infty,t_2)}$, respectively, have the following properties:
\begin{enumerate}
\item {\bf Ergodicity:} The algebraic union of the future and the past subalgebras includes all observables on the right boundary:
\begin{eqnarray}
    \mA_{(-\infty,t_2)}\vee \mA_{(t_1,\infty)}=\vee_{s\in \mathbb{R}}\mA_{(t_1,t_2)}(s)=\mA_{(-\infty,\infty)}\ .
\end{eqnarray}
The orbit of the subalgebra gets us almost everywhere \footnote{Note that if we define the algebraic union as the closure in operator norm topology we obtain C$^*$-algebras, whereas if we close them in weak operator topology, i.e. take the double commutant, we obtain von Neumann algebras.}. 

\item {\bf Strong Mixing (Information loss):} The infinity limit of future algebras is trivial:
\begin{eqnarray}\label{mixingGFF}
    \mA_\infty=\wedge_{s>0}\mA_{(t,\infty)}(s)=\lim_{t\to \infty}\mA_{(t,\infty)}=\lambda 1\ .
\end{eqnarray}
This property is crucial for the correlators to cluster in time (information loss). 

    \item {\bf Half-sided Inclusions (Second law):} The future and past semigroup of time evolution is the restriction map (partial trace):
\begin{eqnarray}
  &&  \forall s>0: \mA_{(t,\infty)}(s)\subset \mA_{(t,\infty)}\nn\\
   && \forall s<0: \mA_{(-\infty,t)}(s)\subset \mA_{(-\infty,t)}\ .
\end{eqnarray}
This property is crucial for the emergence of a second law. 
\end{enumerate}
We will see in section \ref{sec:ergodicity} that the above properties for a general quantum dynamical system define a class of ergodic systems called {\it quantum K-system}.

\paragraph{Strong mixing:} In a KMS state, time evolution is the modular flow. The mixing property in (\ref{mixingGFF}) implies that there are no conserved charges in the algebra (the centralizer of modular flow is trivial). In \cite{furuya2023information}, we used this property to prove that the observable algebra $\mA_{(-\infty,\infty)}$ is a type III$_1$ factor. The same argument implies that general quantum K-systems have no Poincar\'e recurrences (quasiperiodic orbits) at all. See also \cite{narnhofer1989quantum}.

\paragraph{Second law:} To prove a second law of thermodynamics, we consider the thermofield double GFF dual to the eternal black holes. Consider the mutual information between the future algebra of the right boundary $\mA_{(t,\infty)}$ and any subalgebra of the left boundary $\mB \in \mA_{(-\infty,\infty)}'$ as an {\it entropy} function:\footnote{For our discussion of the second law, we only demand the entropy function to be monotonic in time. One may demand the entropy function to satisfy stronger `entropy-like' constraints, such as subadditivity. The investigation of these properties for our proposed entropy functions is beyond the scope of this work.}
\begin{eqnarray}
    \mS(t):=I(\mA_{(t,\infty)}:\mB)\ .
\end{eqnarray}
The half-sided inclusion property implies that forward time evolution with $s>0$ on the right corresponds to the restriction map sending $\mA_{(t,\infty)}$ to its subalgebra $\mA_{(t+s,\infty)}$; see Figure \ref{fig:secondlaw}. Then, it follows from strong subadditivity of entanglement entropy (the monotonicity of mutual information under partial trace) that %\MM{(Should not entropy increase with time? Also $s$ is used as a different parameter in the same paragraph so it may cause confusion. )}
\begin{eqnarray}\label{secondlawmono}
\forall t_1\leq t_2:\qquad \mS(t_1)=I(\mA_{(t_1,\infty)}:\mB)\geq I(\mA_{(t_2,\infty)}:\mB)=\mS(t_2)\ .
\end{eqnarray}
This is a second law of thermodynamics. In fact, any relative entropy of the bulk QFT is monotonic under restriction. This was the idea behind the proof of the generalized second law in \cite{wall2012proof}.

\begin{figure}[t]
    \centering
    \includegraphics[width=.7\linewidth]{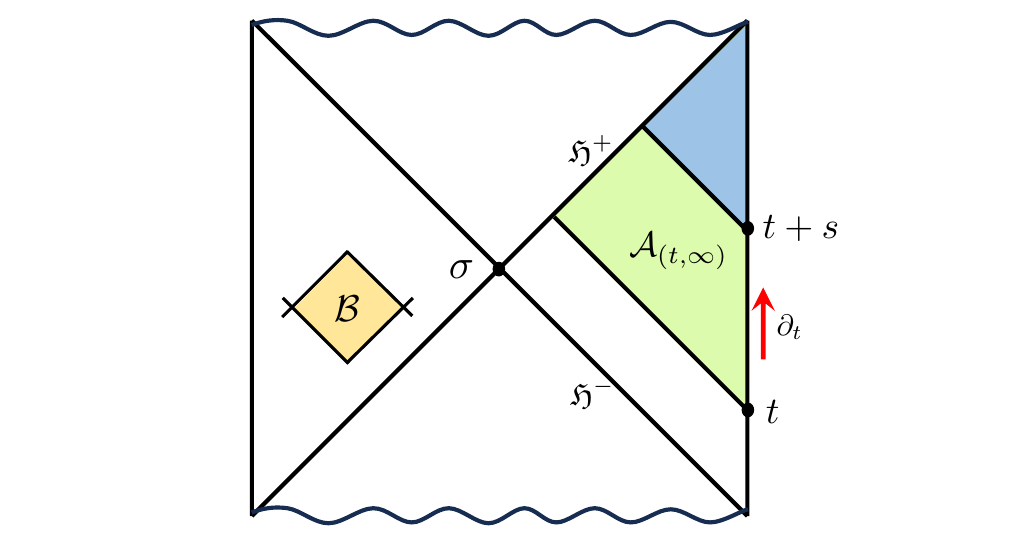}
    \caption{\small{Under half-sided time evolution in the boundary, we have a net of subalgebras $\mA_{(t+s,\infty)}\subset \mA_{(t,\infty)}$ for $s>0$. Correspondingly in the bulk, the mutual information between the future subalgebra $\mA_{(t,\infty)}$ and any subsystem $B$ in the commutant decreases monotonically in time, constituting a second law of thermodynamics.}}
    \label{fig:secondlaw}
\end{figure}

\paragraph{Exponential decay:} The thermal one-point function of GFF in this background vanishes because the expectation value of bulk fields in an eternal black hole is zero. The connected thermal two-point functions of GFF are found from retarded Green's functions of the wave-equation on the black hole background \cite{festuccia2007arrow,Dodelson:2023vrw}. Perturbations can be expanded in the basis of quasinormal modes $\mO_\omega$ whose connected correlators decay exponentially fast at late times:
\begin{eqnarray}
    \braket{\mO_\omega(0)\mO_\omega(t)}_\beta\sim e^{(i \omega_R -\omega_I) t}\ .
\end{eqnarray}
Therefore, for a dense set of observables, the connected correlator is expected to decay exponentially fast.

\paragraph{Local Poincar\'e group:} We saw in Theorem \ref{thm:summerscurved} that in the Hartle-Hawking state of QFT in an eternal AdS black hole, the modular flow of the wedge $W$ is the Killing flow generated by $B=i(U\p_U-V\p_V)$. Near the boundary, this Killing flow becomes the generator of the boundary time evolution $\beta\p_t$. 

With respect to the modular flow of the right wedge $W$, the algebras of regions $W(z^+)$ and $W(z^-)$ are the modular future and past subalgebras, respectively. In the near horizon limit, we are considering the regions $W(e^{-s}z^\pm)$ with large $s$; see Figure \ref{fig:localpoincare}. These algebras are dual to the boundary GFF time interval algebras $(-s \beta,\infty)$ and $(-\infty, s\beta)$. Therefore, Theorem \ref{newthmapprox} implies that in the long time limit $s\gg 1$ we obtain an approximate Poincar\'e group on the boundary that is reminiscent of the local Poincar\'e group near the bifurcate Killing horizon.

\begin{figure}[t]
    \centering
    \includegraphics[width=.9\linewidth]{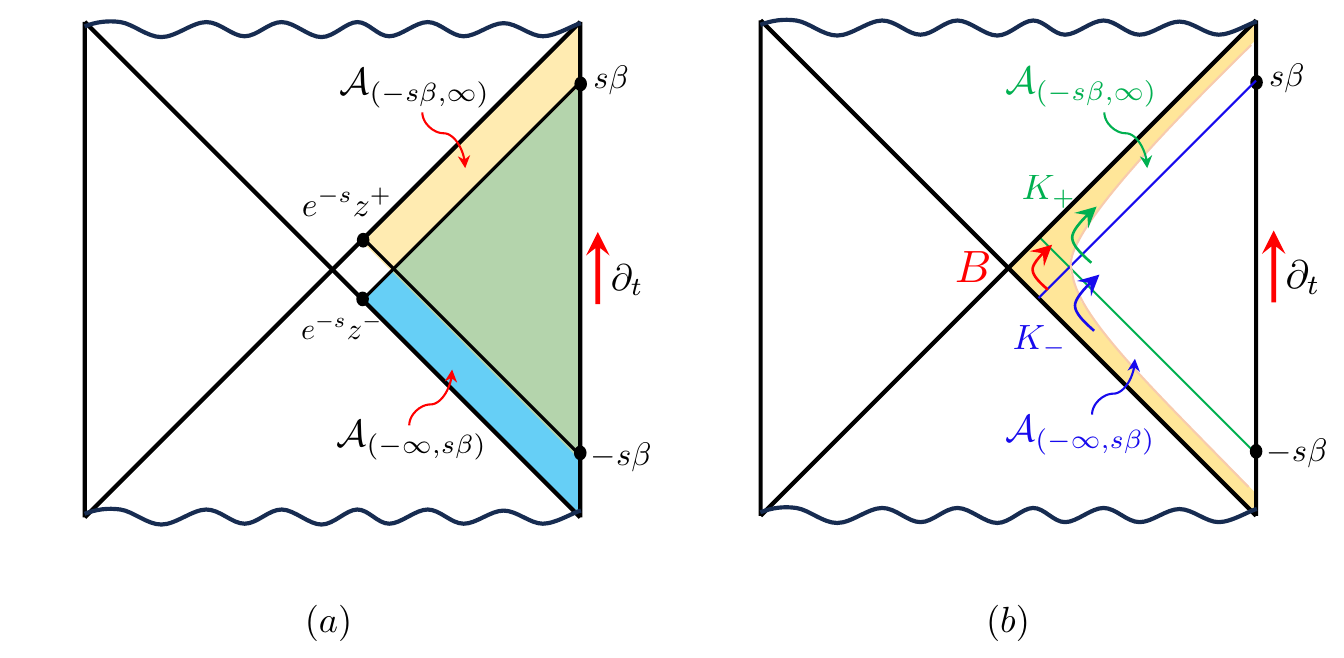}
    \caption{\small{(a) Near-horizon limit in the Hartle-Hawking state of an eternal AdS black hole dual to boundary GFF time interval algebras. (b) In the vicinity of the bifurcation surfaces, we obtain an emergent Poincar\'e group which has a corresponding analog in the boundary.}}
    \label{fig:localpoincare}
\end{figure}

Consider a time interval on the boundary. Without loss of generality, we will choose it to be $I=(-t,t)$. The von Neumann subalgebras of GFF $\mA_{(-t,t)}$ are dual to the algebra of half-spaces defined by a point (sphere in higher dimensions)  $(z^+(t),z^-(t))$ that is null separated from both the boundary points at $\pm t$; see Figure \ref{fig:epsilontimealgebra}. The Theorem \ref{newthmapprox} does not apply directly because the modular flow of the region $W(z^+(t),z^-(t))$ is no longer geometric.
We choose boundary time intervals $I^+_\ep=(-t(1-\ep),t)$ and $I^-_\ep=(-t,t(1-\ep))$ for some $\ep\ll 1$. In the vicinity of the edge $(z^+(t),z^-(t))$, we expect the modular flow of $W(z^+,z^-)$   to be well-approximated by a local boost around this bulk point; see Figure \ref{fig:epsilontimealgebra}. 
We have an approximate notion of half-sided modular inclusion relations, but with respect to the subspace of observables localized near $(z^+(t),z^-(t))$.
This suggests that there might be a generalization of Theorem \ref{newthmapprox} that will explain the emergence of the local Poincar\'e algebra near any point $(z^+,z^-)$ in the geometry.
We postpone further exploration of this to future work.

\begin{figure}[t]
    \centering
    \includegraphics[width=.6\linewidth]{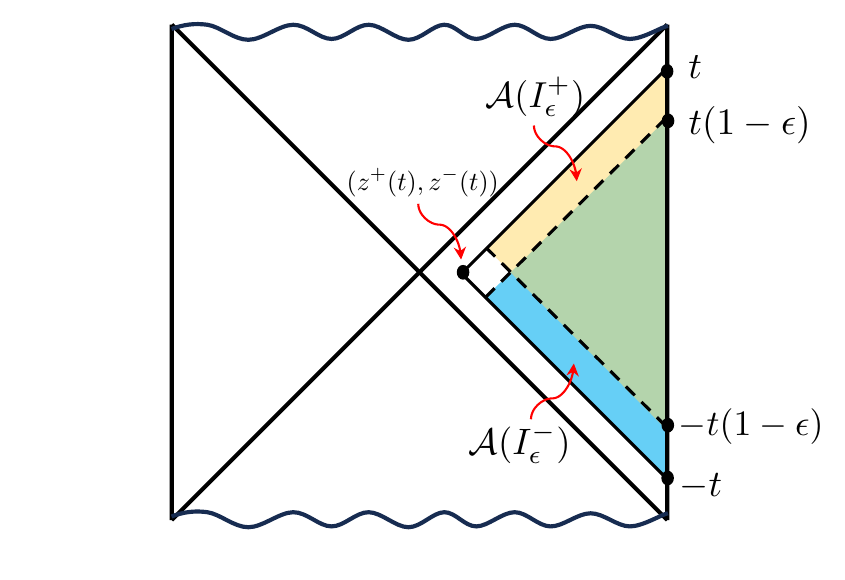}
    \caption{\small{The modular future and past subalgebras, dual to boundary time intervals $I^\pm_\ep$, for wedge subregions away from the Killing horizon.}}
    \label{fig:epsilontimealgebra}
\end{figure}

\subsection{Forward lightcone}

In any QFT, there are von Neumann observable algebras associated with causal developments of ball-shaped regions $D$ that we denote by $\mA(D)$. Every ball-shaped region is defined in terms of its past and future tips; $x_1^\mu$ and $x_2^\mu$ such that $x_2^\mu-x_1^\mu$ is time-like. We denote this algebra by $\mA_{D(x_1^\mu,x_2^\mu)}$; see Figure \ref{fig:forwardlightcone}. 

\paragraph{Second Law:} The future and past algebras of $\mA(D)$ with respect to time evolution are C$^*$-subalgebras 
\begin{eqnarray}
    &&\mA_{D(x_1^\mu,\infty)}=\vee_{t>0}\mA_{D(x^\mu_1,x^\mu_2)}(t)\nn\\
    &&\mA_{D(-\infty,x_2^\mu)}=\vee_{t<0}\mA_{D(x^\mu_1,x^\mu_2)}(t)\nn\\
    &&\mA_{D(x_1^\mu,x_2^\mu)}(t)=e^{i Ht}\mA_{D(x_1^\mu,x_2^\mu)} e^{-i Ht}
\end{eqnarray}
where the algebraic union is defined by taking the closure in operator norm topology. The future/past subalgebras above are C$^*$-subalgebras.
Then, by the same argument as in (\ref{secondlawmono}) the mutual information between the future algebra and any subalgebra in its commutant constitutes a second law; see Figure \ref{fig:forwardlightcone}. 

\begin{figure}[t]
    \centering
    \includegraphics[width=\linewidth]{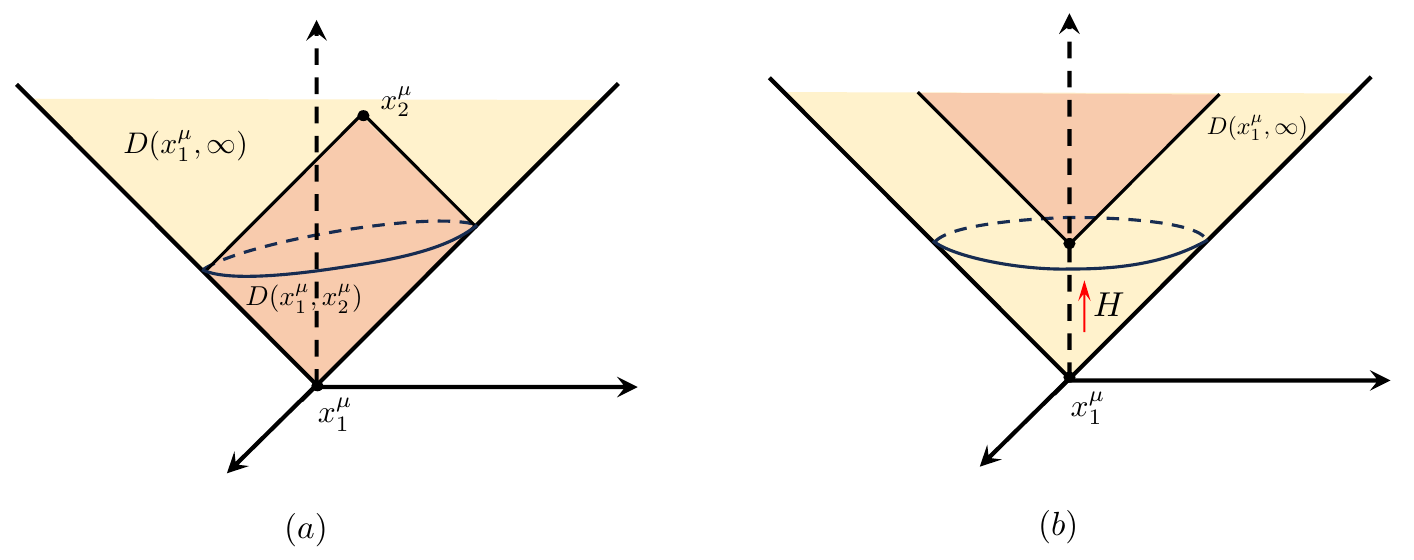}
    \caption{\small{(a) The future algebra of the ball-shaped region $D(x^\mu_1,x^\mu_2)$ with respect to time-evolution. (b) Time evolution is the restriction map on the future subalgebra. At finite temperature this implies that the mutual information between the subalgebra of $D(x_1^\mu,\infty)$ and any subalgebra in the canonically purified copy decreases monotonically in time.}}
    \label{fig:forwardlightcone}
\end{figure}

\begin{figure}[b]
    \centering
    \includegraphics[width=.8\linewidth]{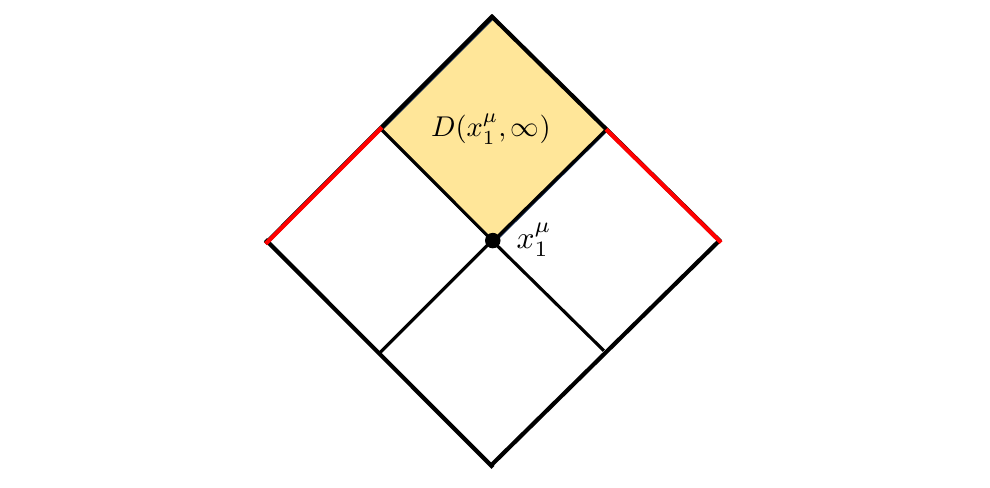}
    \caption{\small{Conformal diagram of the Minkowski spacetime showing that the commutant of the  C$^*$-algebra $D(x_1^\mu,\infty)$ in Minkowski space are operators localized on null infinity. In massive QFT, there are no such operators, therefore taking the double commutant of this C$^*$-algebra gives the von Neumann algebra of all observables on the Cauchy slice: $B(\mH)$.}}
    \label{fig:conformallightcone}
\end{figure}

In a theory of massless free fields, the future and past C$^*$-subalgebras defined above are von Neumann algebras. In this case, the commutant of the future algebra $\mA_{D(0,\infty)}$ is the past algebra $\mA_{D(-\infty,0)}$ simply because for massive free fields time-like and spacelike commutators both vanish. In this case, the modular flow of the future algebra is dilatation $D$, and they satisfy the algebra \cite{brunetti1993modular}
\begin{eqnarray}
    e^{i t D}e^{i sH}e^{-i t D}=e^{ i e^{2\pi t} s H}\ .
\end{eqnarray}
These relations describe a mode $H$ that grows exponentially under modular dynamics $e^{i t D}$, analogous to one of the conditions in (\ref{Borchrel}).
In a massive theory, the double commutant of the future (past) C$^*$-subalgebras is the observable algebra of the whole spacetime, $B(\mH)$. Physically, this happens because the only operators that can be in the commutant of the future algebra $\mA_{(0,\infty)}$ will have to be localized on segments of null infinity; see Figure \ref{fig:conformallightcone}. However, in a massive theory, there are no such excitations. Therefore, the double commutant of the future algebra $\mA_{(0,\infty)}$ in the free field theory is the whole algebra $B(\mH)$.

\paragraph{Local Poincar\'e group:}
It follows from Theorem \ref{newthmapprox} that every modular flow with future and past subalgebras leads to an emergent Poincar\'e algebra. Causal developments of ball-shaped regions in the vacuum of conformal field theory provide an example. Consider the algebra $\mA_{D(x_1^\mu,x_2^\mu)}$ of a ball-shape region $A$ of radius $R$ centered at the origin $x^\mu=0$ in a CFT with the past and future tips are at $x^\mu_1=(-R, \vec{0})$ and $x^\mu_2=(R,\vec{0})$.
 The modular flow of this algebra is local and generated by the conformal Killing vector \cite{brunetti1993modular,haag2012local,Casini:2011kv}: 
\begin{eqnarray}
   K=\frac{i}{2R}\lb\lb R^2-(x^0)^2 -|\vec{x}|^2\rb \p_0-2x^0 x^i \p_i\rb\ .
\end{eqnarray}
The algebra of any ball-shaped region with the same future (past) tip and its past (future) tip inside $A$ is a modular future (past) subalgebra. For simplicity, we consider the case of $1+1d$ CFT, but generalization to arbitrary dimensions is straightforward. We introduce radial null coordinates $x^\pm= x^1 \pm x^0$. In these coordinates, the modular flow is given by
\begin{figure}[t]
    \centering
    \includegraphics[width=\linewidth]{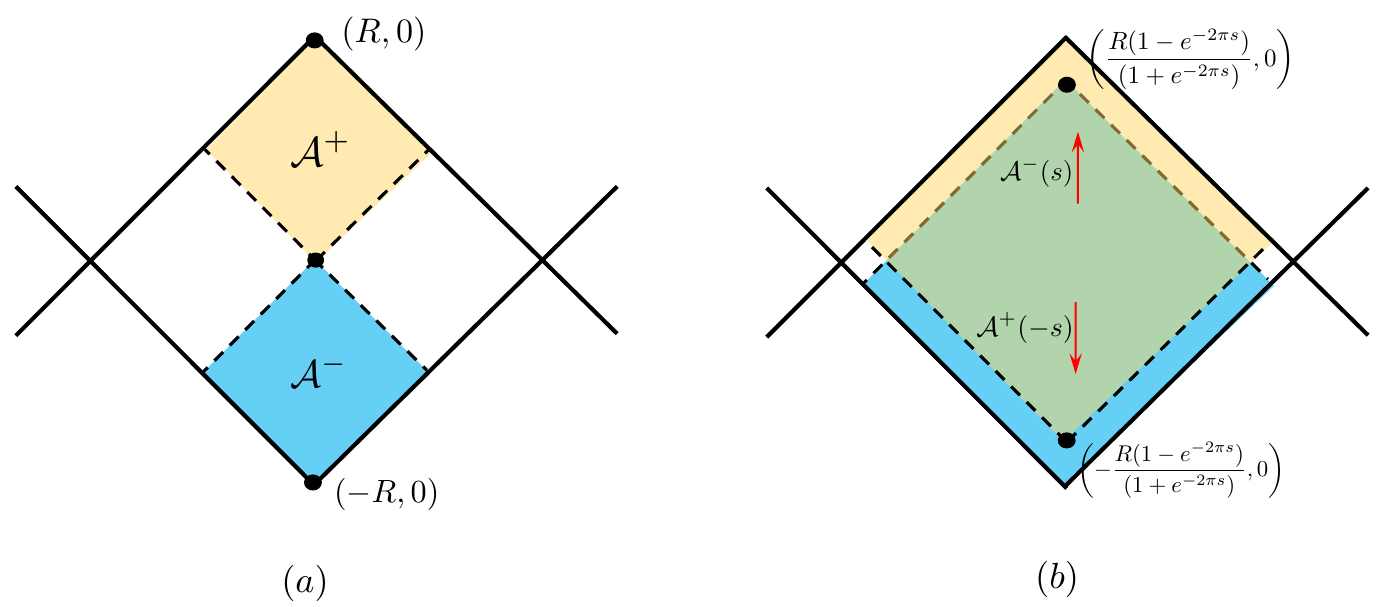}
    \caption{\small{(a) The modular future/past subalgebras $\mA^\pm \subset\mA_{D(x_1^\mu,x_2^\mu)}$ of the causal diamond. (b) The modular flow of the subalgebras $\mA^\pm(\mp s)$ in the large $s$ limit for the emergent local Poincar\'e algebra. The coordinates are shown in the $(x^0, x^1)$ basis.} }
    \label{fig:causaldiamond}
\end{figure}
\begin{eqnarray}
    &&x^\pm(s)=R\frac{\lb 1-e^{\mp2\pi s} \frac{(R-x^\pm)}{(R+x^\pm)}\rb}{\lb 1+e^{\mp2\pi s}\frac{(R-x^\pm)}{(R+x^\pm)}\rb}\nn\\
    && K =\frac{i}{2R}\left(\lb R^2-(x^+)^2 \rb \p_+ - \lb R^2-(x^-)^2 \rb\p_-\right)  \ .
\end{eqnarray}
Now consider another ball shaped region $\Tilde{A}$ centered at the origin with future tip $ (R/2,0)$ and past tip $(-R/2,0)$. The modular future/past subalgebras $\mA^\pm$ correspond to the algebra of $\Tilde{A}$ translated by $\pm R/2$ in the $x^0$-direction as shown in Figure \ref{fig:causaldiamond}. The generators of the modular flow for $\mA^\pm$ are 
\begin{eqnarray}
  K_{\pm} &=&  \frac{i}{R}\left[\lb \lb \frac{R}{2}\rb^2-\lb x^+ \mp\frac{R}{2} \rb^2 \rb \p_+ - \lb \lb \frac{R}{2}\rb^2-\lb x^- \pm \frac{R}{2}\rb^2 \rb\p_-\right]\nn\\
  &=& \frac{i}{R}\left(\lb \pm Rx^+ - (x^+)^2 \rb \p_+ - \lb \mp Rx^- - (x^-)^2 \rb\p_-\right)\ .
\end{eqnarray}
The corresponding modular scrambling modes
\begin{eqnarray}
    \pm G_\pm &=& K - K_{\pm}= \frac{i}{2R} \lb (x^+ \mp R)^2 \p_+ - (x^-\pm R)^2 \p_- \rb
\end{eqnarray}
satisfy the commutation relations
\begin{eqnarray}
    [G_-,G_+] &=&  2iK, \qquad [G_\pm, K]=\pm i G_\pm \ .
\end{eqnarray}
Corresponding to the action of the modular flow of $A$ on $A^\pm$
\begin{eqnarray}
    A^\pm(s) =  e^{ i 2\pi s K } A^\pm e^{- i 2\pi s K},
\end{eqnarray}
we have the generators of the modular flow for $A^\pm(s)$ given by
\begin{eqnarray}
    K_\pm (s) = e^{ i 2\pi s K } K_\pm e^{- i 2\pi s K}\ .
\end{eqnarray}
Following Theorem \ref{newthmapprox}, we consider the modular flow $A^\pm(\mp s)$ such that in the large $s$ limit, both regions are almost equal to $A$; see Figure \ref{fig:causaldiamond}. Under this modular flow, the $(x^+,x^-)$ coordinates of the centers of the causal developments $D(A^\pm(s))$ change according to $\lb\frac{\pm R}{1+e^{2\pi s}}, \frac{\mp R}{1+e^{2\pi s}}\rb$ and their radii change as $\frac{R}{1+e^{-2\pi s}}$. 
Therefore,
\begin{eqnarray}
    K_{\pm}(\mp s) = && \frac{i (1 + e^{-2\pi s})}{2R}\lb \lb \frac{R}{1+ e^{-2\pi s}}\rb^2 - \lb x^+ \mp \frac{R}{1+e^{2\pi s}} \rb^2 \rb \p_+ \nn \\ &&- \frac{i (1 + e^{-2\pi s})}{2R}\lb \lb \frac{R}{1+ e^{-2\pi s}}\rb^2 - \lb x^- \pm \frac{R}{1+e^{2\pi s}} \rb^2 \rb\p_- \nn \\ 
     = && \frac{i}{2R} \lb \lb 1 - e^{-2\pi s}\rb R^2 \pm 2Re^{-2\pi s}x^+ - (1+e^{-2\pi s}) (x^+)^2 \rb \p_+ \nn \\ &&- \frac{i}{2R}\lb \lb 1 - e^{-2\pi s}\rb R^2 \mp 2Re^{-2\pi s}x^+ - (1+e^{-2\pi s}) (x^+)^2 \rb\p_-\ .
\end{eqnarray}
The corresponding modular scrambling modes evolve according to
\begin{eqnarray}
    \pm G_\pm(\mp s) = && K - K_{\pm}(\mp s)\nn\\
    = && \frac{i}{2R} \lb(R^2 - (x^+)^2\p_+ - (R^2-(x^-)^2 \p_-)\rb - K_\pm(\mp s) \nn\\
    = && \frac{ie^{-2\pi s}}{2R} \lb (x^\pm \mp R)^2 \p_+ - (x^-\pm R)^2 \p_- \rb \nn\\
    = &&  e^{-2\pi s}(\pm G_\pm)\ . 
\end{eqnarray}
The commutator of the modular scrambling modes evolves according to
\begin{eqnarray}
    [G_+(-s),G_-(s)] = e^{-4\pi s}[G_+,G_-]\ .
\end{eqnarray}
Thus in the large $s$ limit, they approximately commute and there is an emergent Poincar\'e algebra as expected from Theorem \ref{newthmapprox}.

\section{Quantum Ergodicity and Future/Past Algebras}\label{sec:ergodicity}

Up to now, we argued for the emergence of a local Poincar\'e group in systems with modular future/past algebras. In the remainder of this paper, we elaborate on the ergodic properties of quantum dynamical systems with future/past subalgebras and how they lead to an exponential decay of correlators, a second law of thermodynamics, and maximal modular chaos.

\subsection{Classical ergodic hierarchy}

Drop a handful of blueberries in a glass of water and stir with a straw. We refer to the glass of water as our system $R$, and denote by $A$ the subregion where we initially dropped $A$. The stirring is a dynamical map $T_t:R\to R$ that describes the motion of blueberries in $R$. We say that the system is {\it ergodic} if the blueberries visit every point in the glass.
Ergodic systems mix on average, meaning that from time to time there are moments where a few blueberries are localized in a small corner of the glass, but such fluctuations average out over time. There is no particular state to which the system settles, and if we wait long enough, we will arbitrarily come close to any particular blueberry configuration in the glass. This system is {\it almost periodic}.

If instead of blueberries, we drop a droplet of ink, it gradually spreads until it becomes uniformly distributed over the whole glass. We refer to this final state as an equilibrium state. If the equilibrium state is unique, all states evolve to this unique equilibrium state, and we say that we have {\it strong mixing}.\footnote{The mathematical distinction between the example of blueberries and ink is that the particles of blueberry have finite volume, whereas we take the particles of ink to be pointlike (not a measurable set). If $\rho(x)$ is the density function, in reversible dynamics, $\sup_x\rho(x)$ never changes. However, we only keep track of the total volume of ink in an open neighborhood of a point (a measurable set). Physically, this means that our observables are always coarse-grained on an open set. The decay of coarse-grained observables is not in contradiction with the reversibility of the dynamics.} To keep track of how the system mixes, we use the {\it connected correlation} measure\footnote{Note that $A$ and $B$ are measurable subsets of $R$ and are not to be confused with the Killing field $B$.}
\begin{eqnarray}\label{strongmixing}
    C(A:B_t)=\mu(A\cap B_t)-\mu(A)\mu(B_t)=\mu(A\cap B_t)-\mu(A)\mu(B)
\end{eqnarray}
with $\mu$ the Haar measure in $R$ and $A_t = T_t(A)$. See appendix \ref{app:correlation} for the motivation for this definition.

Physically, strong mixing means that events such as dropping a drop of ink in a particular corner of the glass become increasingly irrelevant to the state of the system in the distant future. In other words, the connected correlators decay at late times. Of course, since the glass of water has finite volume, even though the connected correlators decay, the eventual equilibrium cannot truly forget about the initial droplet because the total amount of ink in the initial droplet decides the final color of the water. This is because the one-point function $\frac{1}{\vol(R)}\int_R \rho(x)$ is invariant with time evolution. Strictly speaking, if one wants the final state to be independent of the initial amount of ink as well, one needs an infinite water reservoir to forget about the entire past. Initially, the drop of ink spreads exponentially in time, and the correlations decay as $e^{-\lambda t}$. In finite volume, we can expect independence from the entire past only in the approximate sense for $1\ll e^{\lambda t}\ll \text{vol}(R)$. For larger times, the ink reaches the walls of the glass and starts folding back. 

Kolmogorov introduced the class of dynamical systems, called {\it Kolmogorov systems}, or in short, {\it K-systems}, in which the system forgets its {\it  entire} far past history (the K-mixing property). As we argued above, forgetting about about the {\it entire past} can only emerge in the thermodynamic limit where we first send $\vol(R)\to \infty$ and then send $t\to\infty$. To understand what independence from the entire past means, we drop two droplets of ink at times $0$ and $t$. We expect the state in the far future to forget both events. In other words, it follows from the definition in (\ref{strongmixing}) that
\begin{eqnarray}
    \lim_{s\to\infty}C(A^{(1)}\cap A^{(2)}_t:B_{t+s})=0\ .
\end{eqnarray}
One can ask what happens if we wait for an infinite amount of time in between the two droplets? The system forgets about both droplets in the future only if 
\begin{eqnarray}
    \lim_{s\to \infty}\lim_{t\to \infty}C(A^{(1)}\cap A^{(2)}_{t}:B_{t+s})=0\ .
\end{eqnarray}
This is equivalent to saying that the three-region correlation measure decays away:
\begin{eqnarray}
    &&\lim_{s\to \infty}\lim_{t\to \infty} C(A^{(1)}: A^{(2)}_t: B_{t+s})=0\nn\\
    &&C(A^{(1)}:A^{(2)}:A^{(3)})=\mu(A^{(1)}\cap A^{(2)}\cap A^{(3)})-\mu(A^{(1)})\mu(A^{(2)})\mu(A^{(3)})\ .
\end{eqnarray}
The strong mixing in (\ref{strongmixing}) is sometimes called {\it strong 2-mixing}, and the property above is called {\it strong 3-mixing}.\footnote{In the zoo of mixing dynamical systems, there are dynamical systems that are strong 2-mixing, but not even strong 3-mixing. See \cite{Tao} for some examples.} It is straightforward to generalize the definition above to {\it strong n-mixing}:
\begin{eqnarray}
&&   \lim_{t_n\to \infty}\cdots \lim_{t_1\to \infty} C(A_{t_1}^{(1)},A_{t_1+t_2}^{(2)},\cdots ,A_{t_1+\cdots+t_n}^{(n)})=0,\nn\\
  &&  C(A^{(1)},A^{(2)},\cdots A^{(n)})=\mu(A^{(1)}\cap A^{(2)}\cap \cdots A^{(n)})-\mu(A^{(1)})\cdots \mu(A^{(n)})\ .
\end{eqnarray}
We expect that dropping ink droplets is an $n$-mixing system for all $n\in \mathbb{N}$.\footnote{A dynamical system that is strong $n$-mixing for all $n$ is called {\it strong mixing of all order}.}
This still does not mean that the system is independent of its {\it entire past}. It is possible that the effect of a perturbation at time $t=0$ can be mimicked, arbitrarily well, by a countably infinite set of events in the far past. In many ergodic systems given any pair of subsystems $A$ and $B$ as we evolve $A$ forward and backward in time, we will eventually end up overlapping with $B$. K-systems emerge when the evolution of $A$ restricted to the past explores only a subset of all possible events. 

The defining property of K-systems is the notion of {\it past subalgebra} (or {\it future subalgebra}). Past subalgebra is formed by the set of all perturbations (events) that can be created in the entire far past. We define the entire past as any countable union of events in the past, i.e. $\cup_{t<0} A_t$. More formally, we define $B(t_0)$ the $\sigma$-algebra of the past as the set generated by a set of events $A_t$ with $t<t_0$ that includes all countable unions and countable intersections of $A_t$ and is closed under complements.\footnote{In ergodic theory, one often considers a further generalization of this definition by choosing an initial set of subsystems $\{A^{(1)}, A^{(2)},\cdots A^{(r)}\}$ and the $\sigma$-algebra generated by this set $B(t_0,r) = \{A^{(i)}_t| t< t_0, i=1,\cdots,r\}$ replaces $B(t_0)$.} In K-systems, the past subalgebra is a strict subalgebra of all perturbations. Famous examples of K-systems include Sinai billiards and the ideal free gas of particles with elastic collisions \cite{szasz1993ergodicity,kramli1991k}.

We are now ready to present some definitions:
\begin{definition}
    A classical dynamical system is a measure space $(X, \Sigma, \mu)$ with a space $X$, a $\sigma$-algebra $\Sigma$\footnote{Not to be confused with the notation for a Cauchy slice.} of measurable sets, a measure $\mu$ and a measure-preserving flow $T_t$ on $X$.
\end{definition} 

\begin{definition}
    %We say a classical dynamical system (flow on $R$) is
    We say a classical dynamical system is
    \begin{itemize}
        \item {\bf Ergodic:} if all correlations averaged over time decay to zero:
        \begin{eqnarray}
            \forall A,B\in \Sigma:\qquad \lim_{T\to \infty}\frac{1}{T}\int_0^Tdt\: C(A:B_t)=0\ .
        \end{eqnarray}
        \item {\bf Strong Mixing:} if the correlation functions decay to zero in the far future:
\begin{eqnarray}
    \forall A,B\in \Sigma:\qquad \lim_{t\to \infty}C(A,B_t)=0\ .\qquad\qquad
\end{eqnarray}  
\item {\bf K-mixing:} if the correlations with the entire past decay to zero:
\begin{eqnarray}\label{ksystemclass}
    %\forall A, A^{(1)}, A^{(2)}, \dots, A^{(r)}\in \Sigma, r\geq0:\qquad \lim_{t\to -\infty}\sup_{B\in \sigma(t)}|C(A:B)|=0
    \forall A, \tilde{A}\in \Sigma:\qquad \lim_{t\to -\infty}\sup_{B\in B(t)}|C(A:B)|=0
\end{eqnarray}
where $B(t)$ is the $\sigma$-algebra generated by $\{\tilde{A}_{t'}|t'< t\}$.%\footnote{It is also common to define in the opposite direction for mathematical physics, i.e. $t'>t$ and $t\to\infty$ instead.}
    \end{itemize}
\end{definition}
For more details on classical ergodic theory see appendix \ref{app:classical_dynamics}.
The definition of K-mixing above is equivalent to the definition of a K-system:
\begin{definition}
    Consider a classical dynamical system $(X, \Sigma, \mu)$ and a measure-preserving flow ${T_t}$. It is a classical K-system if there exists a $\sigma$-algebra of measurable sets $\Sigma_0\subset \Sigma$ such that
    \begin{enumerate}
        \item $\vee_{t\in \mathbb{R}} T_t \Sigma_0=\Sigma$.
        \item $\wedge_{t\in\mathbb{R}} T_t \Sigma_0=\{\emptyset, X\}$.
        \item For all $t>0$ we have $T_t\Sigma_0\subset \Sigma_0$.
    \end{enumerate}
    
\end{definition}

In K-systems, all correlations cluster in time, but the decay can be arbitrarily slow. It is an observed fact that in many dynamical systems in physics, the correlators of relevant observables decay exponentially fast.
For example, when we ring a bell, perturbations can be expanded in a basis of {\it quasi-normal modes} that decay exponentially fast. In classical models with discrete time evolution, $T_n=(T_1)^n$ the exponentially decaying (growing) modes correspond to the eigenvectors of $T_1$ with eigenvalues less (more) than one. In continuous time evolution, they correspond to the eigenfunctions of $\p_t$ operator with negative (positive) eigenvalues. Informally, one can think of classical {\it Anosov systems} as a special class of K-systems for which a dense set of correlators decay exponentially fast. Since the Anosov systems are ergodic, show exponential sensitivity to perturbations, and forget about their entire past we will refer to them as {\it classically chaotic systems}.\footnote{In dynamical systems, there are various definitions of classical chaos, however, according to most definitions classical Anosov systems are chaotic.}

Dynamical systems of classical physics are described in terms of flows on a phase space $X$.\footnote{See appendix \ref{app:classical_dynamics} for a brief review.} For example, the phase space is the space of the position and momenta of blueberries or particles of ink.\footnote{The cotangent bundle of $R$.} Bounded measurable functions on the phase space are classical observables of the system. To every smooth observable (function) on the phase space, we can associate a flow along its integral curves that preserves the volume (measure) on the phase space.  Time evolution corresponds to a distinguished smooth function $h$ called the Hamiltonian. The observables $g_\pm$ that satisfy the Poisson bracket relation
 \begin{eqnarray}\label{PoissonHam}
     \{g_\pm, h\}=-\lambda_\pm g_\pm
 \end{eqnarray}
grow/decay exponentially in time. If we denote by $G_\pm$ and $H=\p_t$ the vectors that generate flows along the integral curves of $g_\pm$ and $h$, respectively, we have the commutation relation
\begin{eqnarray}
    [G_\pm,H]=-\lambda_\pm G_\pm\ .
\end{eqnarray}
We obtain a flow on the phase space with expanding/contracting directions; see Figure \ref{fig:lyapunov}.

\begin{figure}[t]
    \centering
    \includegraphics[width=0.85\linewidth]{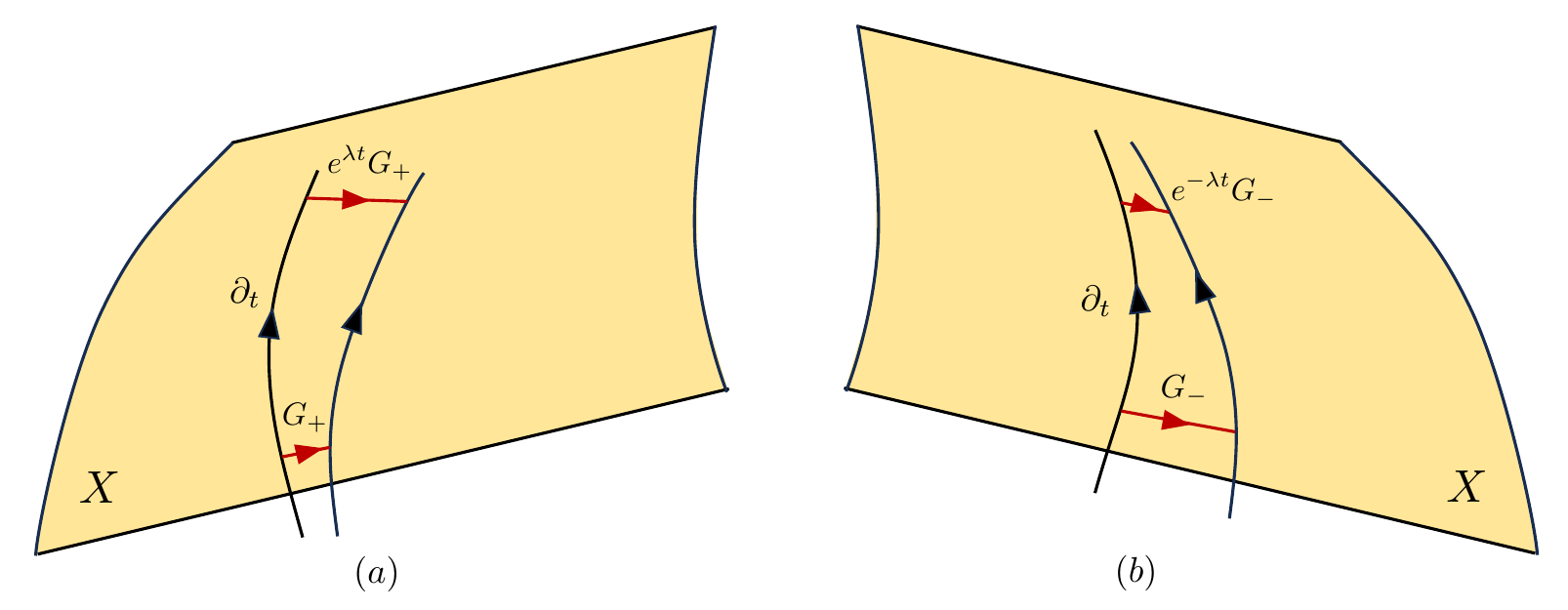}
    \caption{\small{In the phase space $X$ and a dynamical flow $\p_t$, there can be (a) exponentially
    growing modes and (b) exponentially decaying modes.}}
    \label{fig:lyapunov}
\end{figure}
A special class of Anosov systems that play an important role for us are those with $\lambda=2\pi$. 
The connection between the Poincar\'e group and the Anosov systems comes from the commutation relations
\begin{eqnarray}
    &&[B,G_\pm]=\pm 2\pi G_\pm 
\end{eqnarray}
with $B$ playing the role of $H$.
Compare Figure \ref{fig:lyapunov} to Figure \ref{fig:rindleranosov}. Motivated by the modular chaos results in \cite{de2020holographic,faulkner2019modular} we call them {\it maximally chaotic systems}. 
In summary, we have a hierarchy of ergodic systems\footnote{For a review on K-systems see \cite{savvidy2020maximally}.}
\begin{eqnarray}
\text{Maximal Chaos}   \subset\text{Anosov} \subset \text{Kolmogorov}\subset \text{Strong Mixing}\subset 
   %\text{Weakly Mixing}\subset 
   \text{Ergodic}\ .
\end{eqnarray}

\subsection{Quantum ergodic hierarchy}

As a first example of a quantum dynamical system, we start with a quantum analog of the example of glass and blueberries. We consider a $d$-dimensional lattice $R$ with a qubit at each site. The lattice is to be compared to a pixelated glass of water and an excitation on a site corresponds to blueberries in a pixel. In the case of a lattice, to every subsystem $A\subset R$ we associate an observable algebra, which is the quantum (non-commutative) analog of the algebra of bounded functions on $A$. Here, the subalgebra on $A$ corresponds to the observable algebra $B(\mH_A)$ where $\mH_A=\otimes_{x\in A}\mH_x$ and $\mH_x$ is the Hilbert space of the qubit at the site $x\in R$. Time evolution is given by the unitary flow $e^{i Ht}$ where $H$ is some Hamiltonian on the lattice that, for example, couples the nearest neighbors. The quantum analog of the example of ink droplets is the continuum limit of the lattice model above, a finite temperature quantum field that lives in some background manifold $\mM$. The von Neumann algebra of observables of the QFT is $\mathcal{R}$ with a state $\omega$ which is represented in the GNS Hilbert space $\mH_\omega=\overline{\mathcal{R}\ket{\Omega}}$. At finite temperature, the set of all observables in a Cauchy slice form a von Neumann algebra $\mR$ in a KMS state represented in the double copy Hilbert space as the thermofield double vector $\ket{\Omega}_{RR'}$. If we take $\mR$ to be the algebra of the right wedge, then $\ket{\Omega}$ is the vacuum or any other CRT symmetric state of QFT. We can choose dynamics to be modular flow, or more generally, any symmetry group $\mathbb{R}$ ($\mathbb{Z}$) of state-preserving transformations defines a  continuous (discrete) quantum dynamical system.

Similar to the classical case, we define connected correlation functions for every pair of observables $a,b\in \mR$ in the state $\ket{\Omega}$
\begin{eqnarray}
    f^{conn}_{ab}(t)=\braket{a\Omega| \Delta^{1/2+it}_\mR b\Omega}-\braket{a\Omega|\Omega}\braket{\Omega|b\Omega}\ .
\end{eqnarray}
Here, we are using a convention with a factor  $\Delta_\mR^{1/2}$  in the definition of the correlator to make the analytic extension of the modular correlators symmetric. They can be interpreted as Left-Right correlators \cite{furuya2023information}.
In a thermalizing system, the expectation is that all connected correlators of operators (except for conserved charges) decay to zero in time. This property is called {\it quantum strong $2$-mixing}: 
\begin{eqnarray}
    \forall a,b \in \mR:\qquad\lim_{t\to \infty}f_{ab}^{conn}(t)=0\ .
\end{eqnarray}
We say a system is quantum strong $n$-mixing if 
\begin{eqnarray}\label{strongquntumnnmixing}
   \lim_{t_1, t_2, \dots, t_{n-1} \to \infty}\braket{a_1(t_1)a_2(t_1+t_2)\cdots a_{n-1}(\sum^{n-1}_{i=1}t_i)\Omega|\Delta^{1/2}_\mR b\Omega}=\braket{\Omega|b\Omega}\prod_{i=1}^{n-1}\braket{a_i\Omega|\Omega}\ .
\end{eqnarray}
If we want the late time observable to be independent of the entire past of the system, we need to generalize the notion of K-systems in (\ref{ksystemclass}) to the quantum realm. 
In the quantum case, the algebraic union replaces the notion of $\sigma$-algebra for classical subregions.\footnote{An important distinction between the classical and quantum dynamical systems is that in the quantum case, under the algebraic union, two finite-dimensional subalgebras that do not commute can generate an infinite-dimensional algebra.} Therefore, we define the future and past subalgebras as:
 % \KL{(Not sure if it matters. Is it $\leq$ or $<$, $(-\infty,t)$ or $(-\infty,t])$}
\begin{definition}[Future/Past subalgebras]
Consider $\mA$ a proper subalgebra of a von Neumann algebra $\mR$ in the standard representation $\mH=\overline{\mR\ket{\Omega}}$. Assume that the dynamics is given by an automorphism of $\mR$ realized by unitaries $e^{iHt}$ that leaves the state $\ket{\Omega}$ invariant, and flows the subalgebra according to $\mA_s=e^{i Hs}\mA e^{-i Hs}$.
\begin{enumerate}
    \item {\bf Future/Past C$^*$-subalgebras:}
If the closure of $\vee_{s< t}\mA_s$ in norm operator topology is a proper subalgebra of $\mR$ we call it the past C$^*$-algebra of $\mA$ in $\mR$. The future C$^*$-algebra of $\mA$ is defined the same with $s< t$ replaced with $s> t$. 
\item {\bf Future/Past von Neumann subalgebras:} If the closure of $\vee_{s< t}\mA_s$ in weak operator topology is a proper subalgebra of $\mR$ we call it the past von Neumann algebra of $\mA$ in $\mR$:
\begin{eqnarray}
    \mA_{(-\infty,t)}=\lb \vee_{s< t}\mA_s\rb''\ .
\end{eqnarray}
The future von Neumann algebra of $\mA$ is defined the same with $s< t$ replaced with $s> t$:
\begin{eqnarray}
       \mA_{(t,\infty)}=\lb \vee_{s>t}\mA_s\rb''\ . 
\end{eqnarray}
\end{enumerate}
\end{definition}
For an ergodic quantum dynamical system, we have
    \begin{eqnarray}
        \mR=\lb \mA_{(-\infty,t)}\vee \mA_{(t,\infty)}\rb''\ .
    \end{eqnarray}
We are now ready to define the following ergodicity classes: 
\begin{definition}
    We say a quantum dynamical system $\mR$ with a dynamical group that is a subgroup of $\mathbb{R}$ is %\MM{(Confusion: What is $\Delta^{it}$ in the following equations? Is it the modular flow of $\mR$? )}
    \begin{itemize}
        \item {\bf Quantum Ergodic:} if all correlations averaged over time decay to zero:
        \begin{eqnarray}\label{qergodic}
            \forall a,b\in \mR:\qquad \lim_{T\to \infty}\frac{1}{T}\int_0^Tdt\: f_{ab}^{conn}(t)=0\ .
        \end{eqnarray}
        \item {\bf Quantum Strong Mixing:} if the correlation functions decay to zero in far future:
\begin{eqnarray}\label{qstrongmix}
    \forall a,b\in \mR:\qquad \lim_{t\to \infty}f_{ab}^{conn}(t)=0\ .
\end{eqnarray}    
\item {\bf Quantum K-Mixing:} if the correlations with the entire future decay to zero:
\begin{eqnarray}
\forall \mA\subset \mR, \forall a\in \mR:\qquad \lim_{t\to \infty}\sup_{b\in \mA_{(t, \infty)}}|f_{ab}^{conn}|=0\ .
\end{eqnarray}
    \end{itemize}
\end{definition}
Similar to the classical case, we define quantum K-systems using the three properties we saw in the example of GFF in section \ref{sec:largeN}:
\begin{definition} \label{qksystem}
    Consider a von Neumann algebra $\mR$ represented in a Hilbert space in the standard representation $\mH=\overline{\mR\ket{\Omega}}$, and a unitary flow $e^{i Ht}$ that preserves the vacuum $e^{iH t}\ket{\Omega} =\ket{\Omega}$. We say this quantum dynamical system is a quantum K-system if it has a proper subalgebra $\mA\subset\mR$ with the following properties:
    \begin{enumerate}
        \item {\bf Ergodicity:} $\lb\vee_{s\in \mathbb{R}}\mA_s\rb''=\mR$.
        \item {\bf Strong Mixing:} $\wedge_{s\in\mathbb{R}}\mA_s=\lambda 1$.
        \item {\bf Half-sided translation:} For all $s>0$ (or $s<0$ but not both) we have $\mA_s\subset \mA$.
    \end{enumerate}
\end{definition}
A subtlety that arises is that, as opposed to the classical case, quantum K-systems, and quantum K-mixing, while intimately related, are not equivalent.\footnote{See appendix \ref{app:Ksystem} for more detail.} In this work, we focus on quantum K-systems. As we will show in Lemma \ref{lemma:equivfuturealgebra}, the assumption of half-sided translation above is equivalent to the statement that $\mA$ is a future subalgebra.

Future subalgebras $\mA_{(s,\infty)}$ correspond to all observables we can measure from time $s$ until eternity. They are very special, as when they exist, forward time evolution acts on them as the restriction map which is a unital completely positive (CP) map (the Heisenberg picture of a quantum channel) \cite{furuya2022real}
% \KL{(I do not follow how to show it is a unital CP map. )}: 
\begin{eqnarray}
&&\forall t>0:\qquad e^{i Ht}\mA_{(s,\infty)}e^{-iHt}\subset \mA_{(s,\infty)}\ .
% \nn\\
% &&\forall t<0:\qquad e^{i Ht}\mA_{(-\infty,s)}e^{-iHt}\in \mA_{(s,\infty)}\ .
\end{eqnarray}
The spectrum of every unital CP map is inside the unit disk:\footnote{This follows from the fact that in the GNS Hilbert space, any unital CP map is represented by a contraction \cite{furuya2022real}.}
\begin{eqnarray}
    \forall t>0, \forall a\in \mA_{(s,\infty)}:\qquad \|e^{i Ht}ae^{-i t H}\|\leq\|a\|
\end{eqnarray}
and the map is uniquely fixed in terms of its spectrum $\lambda\in \mathbb{C}$ such that $\Im(\lambda)\geq 0$
\begin{eqnarray}
    e^{i Ht}a_\lambda e^{-i Ht}=e^{i \lambda t}a_\lambda\ .
\end{eqnarray}
Any mode with $\Im(\lambda)>0$ decays exponentially fast towards the future, and corresponds to a quasi-normal mode.
They satisfy the algebra $[H,a_\lambda]=\lambda a_\lambda$ with the Hamiltonian. 

As opposed to the classical case, in the quantum case, there is no simple way of associating ``smooth" observables with state-preserving unitary flows in the Hilbert space. In defining quantum Anosov systems, we focus on state-preserving flows (symmetries):\footnote{In the quantum world, the volume of the phase space is replaced with the trace on the observable algebra. For general operator algebras, there might not exist a trace. More generally, the quantum analog of measure-preserving flows are unitary flows with a symmetry generator $G\ket{\Omega}=0$ where $\ket{\Omega}$ is the GNS vacuum.}
\begin{definition}
We define a quantum Anosov system as a quantum K-system with expanding and contracting unitary flows that preserve the state, i.e. $e^{i s G_\pm}\ket{\Omega}=\ket{\Omega}$, such that
\begin{eqnarray}
    U(t) e^{i s G_\pm} U^\dagger(t)= e^{ie^{\pm \lambda t}sG_\pm}\ .
\end{eqnarray}
\end{definition}
In the quantum world, we have a similar ergodic hierarchy 
\begin{eqnarray}
    \text{Q. Anosov} \subset \text{Q. Kolmogorov}\subset \text{Q. Strong Mixing}\subset \text{Q. Weak Mixing}\subset \text{Q. Ergodic}\nn
\end{eqnarray}
where the letter Q stands for quantum. In this work, we focus on two classes of measure-preserving quantum dynamical systems: 
\begin{enumerate}
    \item Modular flow.
    \item Flow with a positive generator: $G\geq 0$.
\end{enumerate}
We will see in Theorem \ref{thm:HST} that modular flow K-systems are equivalent to K-systems with positive generators. In the language of ergodic theory, this means that half-sided translations imply half-sided modular inclusions and vice-versa.

\subsection{Algebra types and quantum ergodicity}\label{sec:algebratypes}

The set of (modular) time-invariant operators generates a subalgebra of observables that we refer to as the {\it algebra of conserved charges} $\mathcal{R}^\rho\subset \mR$. Of course, correlators of conserved charges can never decay. To focus on ergodic properties such as strong mixing we need to project to a sector where all the charges are fixed. Fixing all charges corresponds to applying a minimal projection $p\in \mR^\rho$ to the Hilbert space $\mH\to p\mH$. For example, if we have a set of commuting charges $Q_1,\cdots Q_n$ the projection  $\ket{q_1,\cdots q_n}\bra{q_1,\cdots q_n}$ projects to the sector with fixed charges $Q_ip\mH=q_i p\mH$. In other words, we can replace the state with $\rho(p \cdots p)$ so that $p\mR p$ contains no non-trivial time-invariant operators. From now on, we will always project out states so that there are no conserved charges: $\mR^\rho=\lambda 1$.\footnote{In modular theory terminology, we say that the state $\rho_p$ has a trivial centralizer.}

It follows immediately that there cannot be any ergodicity in any finite-dimensional or type I quantum system. In such systems, the Hamiltonian has eigenvectors $\ket{E_n}$, and projections $\ket{E_n}\bra{E_n}$ generate an algebra of time-invariant operators.  If we project to a sector with fixed energy we are left with operators that are also time-invariant. To have ergodicity, we need the Hamiltonian spectrum to be continuous with no point spectrum (normalizable eigenvectors). This can be seen from the definition of quantum ergodicity in (\ref{qergodic}).\footnote{In a quantum ergodic system there is no nontrivial proper subalgebra that is left invariant by the dynamics. In other words, time evolution acts on the algebra irreducibly.} In finite-dimensional quantum systems, or more generally type I von Neumann algebras, infinite time-average decoheres the operators in the energy eigenbasis, and one cannot satisfy the condition (\ref{qergodic}) for all operators. In fact, a stronger statement holds:
\begin{lemma}\label{lemma:Qergodicity}
    Consider a von Neumann algebra $\mR$ and reversible quantum dynamics $U(t)=e^{i Gt}$ that leaves the normalizable GNS (cyclic and separating) vector $\ket{\Omega}$ invariant. The system is quantum ergodic only if both conditions below hold:
    \begin{enumerate}
        \item Either the algebra is type III or type II$_1$ in the tracial state.
        \item The generator $G$ is not a positive operator. 
    \end{enumerate}
\end{lemma}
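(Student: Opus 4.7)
The plan is to prove the two necessary conditions separately by producing, in each excluded case, a non-trivial $U(t)$-invariant operator in $\mR$ that populates the centralizer, contradicting the triviality of $\mR^\rho$ required at the start of Section~\ref{sec:algebratypes} for \eqref{qergodic} to hold.

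For Condition~1, in any semifinite factor (type I or type II) with faithful normal state $\omega$ there is a positive operator $\rho$ affiliated with $\mR$ satisfying $\omega(a)=\tau(\rho a)$ for the (semi-)finite trace $\tau$. State-preservation $U(t)_*\omega=\omega$ together with uniqueness of the Radon--Nikodym derivative forces $[U(t),\rho]=0$, so every bounded spectral projection $\chi_I(\rho)\in\mR$ is $U(t)$-invariant. Such projections are non-trivial whenever $\rho$ is not a scalar, which rules out type I$_\infty$, type II$_\infty$, and any non-tracial state on type II$_1$. For type I$_n$ with $n<\infty$ in the tracial state $\rho=I/n$ is scalar, but then the dynamics is governed by a finite-dimensional Hermitian $G$ whose eigenprojections lie in $\mR=M_n(\mathbb{C})$ and commute with $U(t)$, again producing non-trivial invariants. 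This leaves only type III factors and type II$_1$ in its unique tracial state.

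For Condition~2, assume $G\geq 0$ and apply the mean ergodic theorem to $U(t)=e^{iGt}$: the time average converges strongly to the spectral projection $E_{\{0\}}$ onto $\ker G$, which contains $\ket{\Omega}$. Ergodicity \eqref{qergodic} then requires $E_{\{0\}}=\ket{\Omega}\bra{\Omega}$, i.e.\ $G$ must be strictly positive on $\{\Omega\}^\perp$. Because $\alpha_t=\mathrm{Ad}(U(t))$ is state-preserving and automorphic, $U(t)$ is the canonical implementation and commutes with both $J_\mR$ and $\Delta_\mR^{is}$, giving $J_\mR G J_\mR = G$ and $[G,K_\mR]=0$. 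Jointly diagonalizing $G$ and $K_\mR$ and using $J_\mR K_\mR J_\mR=-K_\mR$ shows that the joint $(\mu,\lambda)$-spectrum on $\mH$ is symmetric under $\mu\mapsto-\mu$ while being confined to $\lambda\geq 0$. Comparing the $(\mu,\lambda)$-decompositions of $a\Omega$ and $a^*\Omega=J_\mR \Delta_\mR^{1/2}a\Omega$ for arbitrary $a\in\mR$ and invoking cyclicity of $\mR\ket{\Omega}$ then forces the $G$-spectral measure to concentrate at $\lambda=0$, giving $G\equiv 0$ and contradicting the existence of non-trivial ergodic dynamics.

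The main obstacle is that last step: extracting $G=0$ from the $J_\mR$-symmetry of the joint spectrum is essentially a positive-energy rigidity statement of the Borchers--Longo flavor, asserting that the only state-preserving one-parameter automorphism group of a von Neumann algebra with cyclic/separating invariant vector and a positive generator is the trivial one. I expect Condition~1 to go through cleanly by the density-matrix argument, while closing Condition~2 will require invoking either this rigidity theorem directly, or equivalently the absence of positive-energy eigenvectors of automorphism generators in the natural cone of a type III factor.
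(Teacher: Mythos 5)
Your argument for Condition~1 takes a genuinely different route from the paper's. The paper first uses the von Neumann ergodic theorem to show that ergodicity forces $\lambda\ket{\Omega}$ to be the \emph{only} $U(t)$-invariant ray, and then cites Longo's theorem (unique invariant vector plus cyclic/separating implies type III or tracial), discarding type II$_\infty$ because its trace is not a normal state. Your Radon--Nikodym argument is a workable elementary substitute, but two points need care: (i) for a II$_\infty$ factor an automorphism may rescale the semifinite trace, $\tau\circ\alpha_t=c_t\tau$, in which case state preservation gives $\alpha_t(\rho)=c_t\rho$ and the spectral projections of $\rho$ are permuted rather than fixed, so you do not immediately get an invariant element of $\mR$; (ii) in the type I$_n$ tracial case the generator $G$ acts on the GNS space $M_n(\mbC)$, not on $\mbC^n$, so the correct statement is that every trace-preserving automorphism of $M_n(\mbC)$ is inner, $\alpha_t=\mathrm{Ad}(u_t)$ with $u_t\in\mR$, and the spectral projections of $u_t$ furnish the invariant elements.

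For Condition~2 there is a genuine gap, and it originates in a sign error. Since $U(t)$ implements an automorphism and fixes the cyclic and separating vector, it indeed commutes with $J_\mR$ and $\Delta_\mR^{is}$; but $J_\mR$ is \emph{antiunitary}, so $J_\mR e^{iGt}J_\mR=e^{-i(J_\mR GJ_\mR)t}$ and the relation $J_\mR U(t)J_\mR=U(t)$ yields $J_\mR GJ_\mR=-G$, not $+G$ as you wrote. With the correct sign the proof closes in one line and no ``rigidity theorem'' is needed: an antiunitary conjugation preserves positivity, $\braket{\psi|J_\mR GJ_\mR|\psi}=\overline{\braket{J_\mR\psi|G|J_\mR\psi}}\geq 0$, so $-G\geq 0$ together with $G\geq 0$ forces $G=0$, i.e.\ trivial dynamics, contradicting ergodicity. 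As written, your joint-$(\mu,\lambda)$-spectrum argument does not reach a conclusion --- you acknowledge this yourself --- and the ``positive-energy rigidity statement'' you propose to invoke is essentially the assertion being proved, so the argument is circular as it stands. For comparison, the paper closes this step differently: positivity of $G$ makes the spectrum of $U(t)$ asymmetric, $\text{Spec}(U)\cap\text{Spec}(U^{-1})=\{1\}$, Lemma~1 of \cite{Longo:1982zz} then forces $U(t)$ to be inner, and the separating property of $\ket{\Omega}$ applied to $(U(t)-1)\ket{\Omega}=0$ kills the dynamics.
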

\begin{proof}
{\bf 1:} Quantum ergodicity in (\ref{qergodic}) means that\footnote{We have defined quantum ergodicity in (\ref{qergodic}) in terms of $f_{ab}^{conn}(t)$ in which the inner product is computed with a factor of $\Delta^{1/2}$. This factor of $\Delta^{1/2}$ is equivalent to replacing operator $b$ with its mirror image $b_J$. However, this replacement of $b$ to $b_J$ is not necessary for the following discussion. }
\begin{eqnarray}
    \forall a, b \in \mR\qquad \lim_{T\to \infty}\frac{1}{T}\int_0^T dt\:\braket{a\Omega|e^{i G t}b}=\braket{a\Omega|\Omega}\braket{\Omega|b\Omega}\ .
\end{eqnarray}
We argued before that quantum ergodicity can happen only if the algebra is infinite-dimensional.
Since the integral on the left-hand-side projects to the subspace of the Hilbert space that is invariant under $e^{i Gt}$, and $\ket{\Omega}$ is cyclic and separating we learn that $\lambda \ket{\Omega}$ is the only ray in the Hilbert space that is invariant under $e^{i G t}$. Then, it follows from Theorem 1 of \cite{longo1979notes} that $\mR$ is either type III or $\ket{\Omega}$ corresponds to a tracial state on $\mR$ (see also Theorem 1 of \cite{Longo:1982zz}). Infinite-dimensional type II algebras are either type II$_1$ or II$_\infty$, however, the trace in type II$_\infty$ is not normalizable. Hence, we find that either the algebra is type III or it is type II$_1$ in a tracial state. 

{\bf 2:} Consider the spectrum of $G$ as a set of $\lambda \in \mathbb{C}$.  Then $e^{i \lambda t}$ is in the spectrum of $U(t)=e^{i G t}$ and $e^{-i\lambda t}$ is in the spectrum of $U(t)^{-1}$. The intersection of the spectrum of $U(t)$ and $U(t)^{-1}$ is trivial if and only if for all $\lambda_1,\lambda_2 \in \text{Spec}(G)$ we have
 \begin{eqnarray}
     e^{i \lambda_1t}\neq e^{-i\lambda_2 t}\ .
 \end{eqnarray}
 We prove by contradiction.
 Assume $G\geq 0$ and we have quantum ergodicity. It follows from positivity of the spectrum of $G$ that $e^{i\lambda_1t}\neq e^{-i \lambda_2 t}$ for all $\lambda_1,\lambda_2 \in \text{Spec}(G)$. We say that $U(t)$ has asymmetric spectrum meaning $\text{Spec}(U)\cap \text{Spec} (U^{-1})=\{1\}$. Then, it follows from Lemma 1 of \cite{Longo:1982zz} that $U(t)$ is an inner automorphism. However, this is in contradiction with the separating condition because an inner automorphism $U(t)$ would imply
 \begin{eqnarray}
     (U(t)-1)\ket{\Omega}=0
 \end{eqnarray}
 and $U(t)-1\in \mR$. As a result, $G$ is not a positive operator.
\end{proof}
It is instructive to discuss the implications of the theorem above in local many-body quantum systems. Consider an infinite lattice with a boson degree of freedom on each site (lattice scalar field theory) in a finite-temperature state of a local Hamiltonian. The vector $\ket{\Omega}$ is the thermofield double. It is invariant under $e^{i (H_L-H_R)t}$ which is expected to induce ergodic and strongly mixing dynamics due to clustering in time. Then, the theorem above implies that the algebra is type III, and indeed the $H_L-H_R$ operator is not positive. If we take the state to be the vacuum of QFT in Minkowski space and the algebra to be a wedge $W$. Then, both null translations $P_\pm>0$ preserve the vector $\ket{\Omega}$, and we have strong mixing of the correlators, however, they do not generate an automorphism of the algebra. 

The above result suggests that if we want to insist on ergodic dynamics with a positive Hamiltonian ($G\geq 0)$ we need to relax the assumption of invertibility. 
Motivated by the case of null translations and the algebra of wedges in QFT we consider the case of semigroup of positive or negative half-line:
\begin{eqnarray}
    \forall t>0,\qquad U(t)\mR U(t)^\dagger\subset \mR\ .
\end{eqnarray}
The definition above naturally motivates future and past subalgebras. We will see in section \ref{sec:HST/HSMI} that such a dynamical system is equivalent to modular dynamics. 

The failure of ergodicity means that there exists a proper subalgebra $\mA\subset \mR$ that is invariant under the flow. In the case of modular flow of $\mR$ this invariance implies that there exists a normalizable conditional expectation from $\mE:\mR\to \mA$. Physically, one can interpret the failure of quantum ergodicity as an exact quantum error correction code \cite{furuya2022real}.

Quantum strong mixing (the assumption in (\ref{qstrongmix})) is much stronger than quantum ergodicity. To make this concrete, we specialize our discussion to modular flows. Strong mixing in a KMS state (or more generally, in modular time) can occur only in type III$_1$ von Neumann algebras:
\begin{theorem}\label{thm:typeiii1}
Consider the modular flow of a von Neumann algebra $\mR$ as quantum dynamical system. Strong quantum mixing implies that the algebra is type a III$_1$ in a state with a trivial centralizer.
\end{theorem}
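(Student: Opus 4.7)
The plan splits into two steps: first show strong mixing forces $\omega$ to have trivial centralizer (so $\mR$ is a factor), then rule out every factor type except III$_1$ by spectral analysis of $\log\Delta$.

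For the centralizer, I would take any self-adjoint $a\in\mR^\omega$. Since $a$ is fixed by every $\sigma_t^\omega$, analytic continuation of the modular flow gives $\Delta^{1/2}a\Omega = a\Omega$, and membership of $a\Omega$ in the natural positive cone gives $Ja\Omega = a\Omega$. Then
\begin{equation*}
\braket{a\Omega|\Delta^{1/2+it}a\Omega} = \braket{a\Omega|a\Omega} = \omega(a^2)
\end{equation*}
is constant in $t$. Strong mixing equates this constant to $\omega(a)^2$, so $\omega((a-\omega(a)\cdot 1)^2)=0$, and faithfulness of $\omega$ together with the separating property of $\Omega$ yield $a=\omega(a)\cdot 1$. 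Decomposing any $x\in\mR^\omega$ into self-adjoint real and imaginary parts, which also lie in the $*$-algebra $\mR^\omega$, reduces the general case to the self-adjoint one and gives $\mR^\omega = \mathbb{C}\cdot 1$. Since $Z(\mR)\subset \mR^\omega$, $\mR$ is a factor.

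For the type, I would first rephrase strong mixing as a spectral statement: for vectors of the form $\xi = (a-\omega(a))\Omega \in \Omega^\perp$, the correlator $\braket{\xi|\Delta^{it}\xi}$ is the Fourier transform of the spectral measure of $\log\Delta$ relative to $\xi$. Its decay to zero, obtained from strong mixing after a routine rearrangement that absorbs the $\Delta^{1/2}$ factor, is equivalent by Wiener's lemma to the absence of atoms in this measure. Since $\Omega^\perp$ is the closure of $\{(a-\omega(a))\Omega : a\in\mR\}$ by cyclicity of $\Omega$, the operator $\log\Delta$ restricted to $\Omega^\perp$ has purely continuous spectrum. Combined with Lemma \ref{lemma:Qergodicity}, which restricts us to type III or the tracial state of type II$_1$, this already kills type II$_1$ (where $\Delta=1$ would put an atom at $0$ on every vector of $\Omega^\perp$) and type III$_\lambda$ with $0<\lambda<1$ (where Connes' structure theorem supplies a non-zero operator $u$ with $\sigma_t^\omega(u) = \lambda^{it}u$, producing the eigenvector $\Delta(u\Omega) = \lambda\, u\Omega$).

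It remains to rule out type III$_0$. I would invoke the Connes-Takesaki characterization: among factors with a faithful normal state of trivial centralizer, being type III$_1$ is equivalent to the Arveson spectrum of $\sigma^\omega$ (equivalently, the spectrum of $\log\Delta$ on $\Omega^\perp$) being all of $\mathbb{R}$. For type III$_0$, the flow of weights is a non-trivial ergodic flow on a non-trivial Lebesgue space, and this structure forces the spectrum of $\log\Delta$ in an extremal state to carry a component that is incompatible with the full-support, purely continuous spectrum produced by strong mixing. The main obstacle is exactly this last step: disposing of type III$_0$ cannot be handled by exhibiting an explicit eigenoperator as in the III$_\lambda$ ($\lambda>0$) case, and one must import the Connes-Takesaki flow-of-weights machinery (or an equivalent Longo-type asymptotic-abelianness characterization) to pin down the factor type from the mixing hypothesis alone.
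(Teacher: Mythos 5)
The paper gives no in-text proof of this theorem---it defers entirely to \cite{furuya2023information}---so there is nothing here to compare line by line; your two-step strategy (strong mixing forces a trivial centralizer, then the factor type is pinned down by spectral analysis of $\log\Delta$) is the standard route and matches the structure of the cited argument. Step one is correct as written: for self-adjoint $a$ in the centralizer the connected correlator is the constant $\omega(a^2)-\omega(a)^2$, and mixing forces it to vanish, so $\mR^\omega=\mbC 1$ and $\mR$ is a factor. The elimination of types I and II$_\infty$ via Lemma \ref{lemma:Qergodicity}, of II$_1$ (where $\Delta=1$ makes every connected correlator constant), and of III$_\lambda$ with $0<\lambda<1$ (an eigenoperator $u$ with $\sigma^\omega_t(u)=\lambda^{it}u$ injects a non-decaying oscillation $\lambda^{1/2}e^{it\log\lambda}\braket{u\Omega|b\Omega}$ into the correlator) are all sound, granted the routine domain bookkeeping needed to move $\Delta^{1/4}$ across the inner product before applying Wiener's lemma.

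The one genuine soft spot is the one you flag yourself: type III$_0$. Your proposed resolution is both vaguer and harder than necessary, and as stated it is circular---you appeal to incompatibility with the ``full-support, purely continuous spectrum produced by strong mixing,'' but mixing only yields absence of atoms; full support of $\mathrm{Spec}(\Delta_\omega)$ \emph{is} the III$_1$ conclusion, not an input. The clean fix uses the same Connes result you already lean on for III$_\lambda$: if the centralizer $\mR^\omega$ is a factor (in particular trivial), then $\mathrm{Spec}(\Delta_\omega)$ equals the Connes invariant $S(\mR)$. For a III$_0$ factor $S(\mR)=\{0,1\}$, so $\Delta_\omega$ would be the spectral projection onto its kernel's complement; non-singularity of $\Delta_\omega$ then forces $\Delta_\omega=1$, i.e.\ $\omega$ is a trace, contradicting type III. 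In other words, a III$_0$ factor admits no faithful normal state with trivial centralizer at all, so this case already dies at your step one---it is the easiest case, not the hardest, and no flow-of-weights analysis is needed. The same identity $\mathrm{Spec}(\Delta_\omega)=S(\mR)$ also streamlines the III$_\lambda$ step: $S(\mR)=\{0\}\cup\lambda^{\mbZ}$ is countable with isolated nonzero points, each of which must then be an eigenvalue of $\Delta_\omega$, directly contradicting the no-atoms conclusion from Wiener's lemma.
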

\begin{proof}
    See \cite{furuya2023information} for a proof.\footnote{Note that it is also true that every type III$_1$ algebra admits states with trivial centralizer besides many other states with centralizers that are factors \cite{marrakchi2023ergodic} (We thank Hong Liu for pointing out this reference to us).  For example, Powers factors type III$_1$ is in a state with a centralizer that is a type II$_1$ factor.}
\end{proof}
KMS states (modular dynamics) is also special in that strong mixing implies the decay of the commutator, the {\it Weak Asymptotic Abelianness}: 
\begin{eqnarray}
    \forall a,b\in \mR:\qquad \lim_{t\to \infty}\braket{\Omega|[a(t),b]|\Omega}=0\ .
\end{eqnarray}
In a quantum system with non-modular dynamics generated by a non-positive generator, we can have type III and type II$_1$ algebras with strong mixing. Examples include translations in lattice quantum systems. This dynamics is strong mixing if we have clustering in space.
We introduced the notion of future/past algebras in connection to quantum K-systems. The motivation was to make rigorous the intuition that late-time observables are independent of the entire past of the system. Independence does not necessarily mean that the late-time observable $a(t)$ commutes (or almost commutes) with the whole past algebra. Such an assumption is called the {\it Norm Asymptotic Abelianness}:
\begin{eqnarray}
    \lim_{t\to \infty}\|[a(t),b]\|=0\ .
\end{eqnarray}
The connection between K-systems and asymptotic Abelianness is not fully understood, however, in the case of type II$_1$ algebras, K-system property ensures {\it Strong Asymptotic Abelianness}
\begin{eqnarray}
    \lim_{t\to \infty}\|[a(t),b]\ket{\Omega}\|=0
\end{eqnarray}
which guarantees the decay of the four-point functions, including the out-of-time-ordered ones \cite{benatti1991strong}. For a discussion of the connection between quantum K-systems and strong asymptotic Abelianness, see appendix \ref{app:Ksystem}.

\subsection{Second law in quantum K-systems}

In the thermodynamic limit, the expectation is that genuinely interacting quantum systems thermalize. In particular, besides strong mixing, we expect the emergence of a second law of thermodynamics that postulates the existence of a non-negative function of the state called {\it entropy} that grows monotonically in time.

We saw that every KMS state with strong modular mixing corresponds to a type III$_1$ algebras. Therefore, in quantum thermalizing systems, the fine-grained entropy of the system diverges. The entropy in the second law must be a coarse-grained notion.
Intuitively, we can justify the emergence of the second law as follows: consider the subspace of all the observables that we can access from time $t$ to eternity and denote it by $\mS_{(t,\infty)}$. This is a {\it future operator system}.\footnote{An operator system is a $*$-closed subspace of observables.} If there are no Poincar\'e recurrences, this provides only partial knowledge about the state of the system, and the coarse-grained entropy in the second law is some information-theoretic measure of the amount of information the observer is missing. Forward time evolution is the restriction map on the future observables:
\begin{eqnarray}
    \forall s>0:\qquad  e^{is H}\mS_{(t,\infty)}e^{-is H}\subset \mS_{(t,\infty)}\ .
\end{eqnarray}
We say time-evolution is a {\it half-sided translation} of the future operator system $\mS_{(t,\infty)}$.
Any information-theoretic measure that satisfies the data-processing inequality decreases monotonically over time.\footnote{See \cite{furuya2023monotonic} for a large class of quantum measures that satisfy the data processing inequality.}  Multiplying any such measure by a minus sign, we obtain a monotonically increasing coarse-grained entropy; otherwise known as a second law of thermodynamics. 

As a first example of an entropy function, consider
\begin{eqnarray}
    D(t)=1-\sup_{\substack{a\in \mS_{(-\infty,0)}\\
    b\in \mS_{(t,\infty)}}}\frac{|\braket{a|\Delta^{1/2}|  b}^{conn}_\beta|}{\|a\|\|b\|}
\end{eqnarray}
for any $t>0$, and the past and future operator systems $\mS_{(-\infty,0)}$ and $\mS_{(t,\infty)}$, respectively.
As we increase $t$ the set of observables $\mS_{(t,\infty)}$ shrink, therefore the supremum decreases, and $D$ increases. In the asymptotic limit $t\to \infty$ if there are no conserved charges we expect $D\to 1$. More generally, for any pair of non-overlapping time intervals $A$ and $B$ we can define the correlation\footnote{If there is a tensor product between $\mS_A$ and $\mS_{B}$ we can rewrite the measure above in terms of trace distance of the reduced states 
\begin{eqnarray}
    D(A,B)=1-\|\psi_{AB_J}-\psi_A\otimes \psi_{B_J}\|
\end{eqnarray}
where we consider the canonical purification of the state and the mirror operator $B_J = JBJ$ in the purifying copy. $J$ is the modular conjugation operator. $D(A,B)$ is monotonic under partial trace, or more generally, satisfies data processing inequality for any quantum channel. }
\begin{eqnarray}
   D(A,B)=1-\sup_{\substack{a\in \mS_A\\
    b\in \mS_B}}\frac{|\braket{a|\Delta^{1/2} b}^{conn}_\beta|}{\|a\|\|b\|}\ .
\end{eqnarray}

The coarse-grained measures we defined above are smooth functions of the state and are always bounded by one. In thermodynamics applications, it is desirable that the coarse-grained entropy is {\it extensive} so that it can grow forever in infinite systems. When the observables $\mS_A$ and $\mS_B$ are C$^*$-algebras, a simple example of such a measure is the mutual information: 
\begin{eqnarray}
    I(A:B)=S(A)+S(B)-S(AB)=S(\psi_{AB}\|\psi_A\otimes \psi_B)
\end{eqnarray}
where $S(\rho_{AB}\|\psi_A\otimes \psi_B)$ is the relative entropy of a C$^*$-algebra defined by \cite{uhlmann1977relative} and \cite{belavkin1982c}, or more generally by \cite{araki1973relative} for von Neumann algebras. Once again, mutual information is monotonically decreasing under restriction. This is the celebrated strong subadditivity of von Neumann entropy:\footnote{Recall that the strong subadditivity inequality is $I(A:BC)\geq I(A:B)$.}
\begin{eqnarray}
&&\forall t>s:\qquad B_s\supseteq B_t\Rightarrow I(A:B_s)\geq I(A:B_t)\ .
\end{eqnarray}
In summary, we find that when the future and past subalgebras exist their fine-grained mutual information gives a coarse-grained entropy that satisfies a second law dictated by strong subadditivity. 
Note that the mutual information we defined above is a generalization of {\it mutual information in time} defined in \cite{hosur2016chaos}.

More generally, we can say any quantum K-system has a second law. 
The following lemma clarifies the equivalence of the assumption of future/past subalgebras, half-sided translations, and the semigroup of quantum channels that we used to prove the second law:
\begin{lemma}\label{lemma:equivfuturealgebra}
    Consider a dynamical flow $e^{i G t}$ acting on the Hilbert space $\mH$ and a proper subalgebra $\mA\subset B(\mH)$.\footnote{Note that the generator of the flow need not be positive.} The following three properties are equivalent and can be used as the definition of a future algebra:
    \begin{enumerate}
        \item {\bf Future subalgebra of $\mB$:} the algebra $\mA$ is a proper subalgebra of $B(\mH)$ such that there exists a subalgebra $\mB$ satisfying $\mA=\vee_{t>0}\mB_t$.
        \item {\bf Half-sided translations:} For all $s>0$ we have $\mA_s \subset \mA$.
        \item {\bf Semigroup of Quantum Channels:} For all $s>t>0$ we have $\mA_s\subset \mA_t$.
    \end{enumerate}
\end{lemma}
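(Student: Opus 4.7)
The strategy is a circular chain $(1) \Rightarrow (2) \Leftrightarrow (3) \Rightarrow (1)$, where the first three arrows are purely algebraic consequences of the group law $e^{iG(s+t)} = e^{iGs} e^{iGt}$ and only the last one requires a continuity argument.

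For $(1) \Rightarrow (2)$ I would simply apply the inner automorphism $\mathrm{Ad}(e^{iGs})$ to $\mA = \vee_{t > 0} \mB_t$, obtaining $\mA_s = \vee_{t > 0} \mB_{t+s} = \vee_{u > s} \mB_u \subset \vee_{u > 0} \mB_u = \mA$. For $(2) \Leftrightarrow (3)$ the cocycle property is again enough: given (2), for any $s > t > 0$ the inclusion $\mA_{s-t} \subset \mA$ becomes $\mA_s \subset \mA_t$ after conjugating by $e^{iGt}$; conversely, given (3), for any $r > 0$ I would pick some $s > r$, write $r = s - t$ with $t = s - r > 0$, and conjugate $\mA_s \subset \mA_t$ by $e^{-iGt}$ to conclude $\mA_r \subset \mA$.

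The nontrivial step is $(2) \Rightarrow (1)$, where the natural candidate is $\mB := \mA$ itself. The inclusion $\vee_{t > 0} \mA_t \subset \mA$ is immediate from (2), so the real content is showing $\mA \subset \vee_{t > 0} \mA_t$. My plan is to invoke Stone's theorem: for any $a \in \mA$ and any $\xi \in \mH$,
\[
\|e^{iGt} a e^{-iGt}\xi - a\xi\| \;\leq\; \|a\|\,\|e^{-iGt}\xi - \xi\| + \|(e^{iGt} - 1)a\xi\| \;\xrightarrow[t \to 0^+]{}\; 0,
\]
so the conjugates $a_t := e^{iGt} a e^{-iGt}$ converge to $a$ in the strong operator topology. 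Since each $a_t$ lies in $\mA_t \subset \vee_{s > 0} \mA_s$ and a von Neumann algebra is SOT-closed, $a$ itself lies in $\vee_{s > 0} \mA_s$, completing the equality.

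The main obstacle is genuinely analytical: one must take a $t \to 0^+$ limit to recover the full algebra $\mA$ from strictly positive times. The strong continuity of the unitary group makes this routine in the von Neumann setting; however, if one wanted the analogous statement for $C^*$-algebras (with norm closures in (1)), one would have to restrict to elements on which the flow acts norm-continuously, since Stone continuity is only strong, not uniform.
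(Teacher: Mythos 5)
Your proof is correct and follows essentially the same route as the paper: $(1)\Rightarrow(2),(3)$ by conjugating the defining union, and $(2),(3)\Rightarrow(1)$ by showing $\mA=\vee_{t>0}\mA_t$ so that $\mA$ is a future algebra of itself. The only difference is that you explicitly supply the strong-continuity argument (via Stone's theorem and SOT-closedness of the generated von Neumann algebra) for the inclusion $\mA\subset\vee_{t>0}\mA_t$, which the paper states without proof; your remark that this step is more delicate in the C$^*$ (norm-closure) setting is a valid and worthwhile observation.
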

\begin{proof}
    {\bf 1$\to$ 2,3:} By definition a future/past subalgebra $\mA$ is a proper subalgebra of $B(\mH)$ such that there exists a subalgebra $\mB$ satisfying $\mA=\vee_{t>0}\mB_t$. Statements (2) and (3) follow from the definition.
    {\bf 2,3 $\to$ 1:} This follows from the observation that if either (2) or (3) are satisfied, we have $\mA=\vee_{t>0}\mA_t$. Therefore, $\mA$ is a future algebra of itself.
\end{proof}

\subsection{Exponential decay of correlators in quantum Anosov systems}

We motivated quantum Anosov systems as examples of quantum K-systems where a dense set of correlators decay exponentially. In this section, we review the proof of this result due to Narnhofer. We start with the following Lemma:
\begin{lemma}[Theorem 3.3 of \cite{emch1994anosov}]\label{Anosoventirespec}
    Consider an Anosov system with two automorphism groups: first,  $U(s)=e^{i G s}$ with respect to which the algebra $\mA$ is a future algebra, and $V(t)=e^{iKt}$ that satisfies the Anosov relation
    \begin{eqnarray}\label{Anosovrel}
    V(t) U(s)V(-t)=U(e^{\lambda t}s)
\end{eqnarray}
for some real $\lambda$. Then, 
\begin{enumerate}
    \item The spectrum of $G$ splits the Hilbert space into three parts $\mH=\mH_0\oplus\mH_+\oplus\mH_-$ corresponding to the subspace of invariant states, positive and negative parts of the spectrum. This decomposition is stable under the flow of $V(t)$ and $U(s)$.
    
    \item Restricting the spectrum of $V(t)$ and $U(s)$ to $\mH_\pm$ and denoting their corresponding generators with $K_\pm$ and $G_\pm$. The spectrum of all four generators $K_\pm$ and $\Lambda_\pm=\log(\pm G_\pm)$ is the entire real line $\mathbb{R}$.
\end{enumerate}
\end{lemma}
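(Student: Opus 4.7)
The plan is to extract a canonical-commutation-relation structure from the Anosov identity and then exploit translation-invariance of spectra. The only serious input beyond linear algebra is how to legitimately pass from the finite group-theoretic relation $V(t)U(s)V(-t)=U(e^{\lambda t}s)$ to a statement about the unbounded self-adjoint generators, so I would open by doing that carefully and then the rest is essentially one-line observations.

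\textbf{Step 1 (spectral splitting).} Differentiating the Anosov relation in $s$ at $s=0$ yields, on a common invariant core,
\begin{equation}
V(t)\,G\,V(-t)=e^{\lambda t}\,G,\qquad t\in\mathbb{R}.
\end{equation}
By functional calculus this extends to bounded Borel functions: for any Borel set $E\subset\mathbb{R}$, the spectral projection $E_G(E)$ of $G$ satisfies $V(t)E_G(E)V(-t)=E_G(e^{-\lambda t}E)$. In particular $V(t)$ preserves the sign of the spectrum, so the three spectral projections $E_G(\{0\})$, $E_G((0,\infty))$, $E_G((-\infty,0))$ are each $V(t)$-invariant, and they are $U(s)$-invariant because $U(s)$ is a function of $G$. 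This gives the decomposition $\mathcal{H}=\mathcal{H}_0\oplus\mathcal{H}_+\oplus\mathcal{H}_-$ stable under both flows.

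\textbf{Step 2 (logarithmic variable and spectrum of $\Lambda_\pm$).} On $\mathcal{H}_\pm$, $\pm G_\pm>0$ so the self-adjoint operator $\Lambda_\pm:=\log(\pm G_\pm)$ is well defined by functional calculus, and taking the logarithm of the identity from Step~1 gives
\begin{equation}
V(t)\,\Lambda_\pm\,V(-t)=\Lambda_\pm+\lambda t\,\mathbb{1},\qquad t\in\mathbb{R}.
\end{equation}
Hence the spectrum $\sigma(\Lambda_\pm)$ is invariant under every translation $x\mapsto x+\lambda t$. Assuming $\lambda\neq 0$ and $\mathcal{H}_\pm\neq 0$, a nonempty closed subset of $\mathbb{R}$ invariant under all real translations must be $\mathbb{R}$. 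Therefore $\sigma(\Lambda_\pm)=\mathbb{R}$.

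\textbf{Step 3 (spectrum of $K_\pm$).} The relation in Step~2 is the Weyl form of a canonical commutation relation: writing $W(s):=e^{is\Lambda_\pm}$ one obtains, either by Baker--Campbell--Hausdorff applied to the c-number commutator $[K_\pm,\Lambda_\pm]=-i\lambda\,\mathbb{1}$ (which itself follows by differentiating Step~2 in $t$ at $t=0$), or directly by rewriting $V(t)W(s)=e^{i\lambda ts}W(s)V(t)$,
\begin{equation}
W(s)\,K_\pm\,W(-s)=K_\pm-\lambda s\,\mathbb{1}.
\end{equation}
Thus $\sigma(K_\pm)$ is also translation-invariant under all of $\mathbb{R}$, and is therefore equal to $\mathbb{R}$.

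\textbf{Where I expect the work to be.} The only non-cosmetic issue is the step from the bounded group relation to the unbounded infinitesimal relations in a domain-correct way; once one justifies that $V(t)$ implements an automorphism of the von Neumann algebra generated by $G$ (via the spectral-projection conjugation above), the passage to $\Lambda_\pm$ by functional calculus is automatic and the ``spectrum equals $\mathbb{R}$'' conclusion follows from the elementary fact that any closed $\mathbb{R}$-invariant subset of $\mathbb{R}$ is either empty or all of $\mathbb{R}$. The future-algebra hypothesis is used only to ensure the setting is nontrivial (through the Half-Sided Translation Theorem \ref{thm:HST}, which in fact forces $G\geq 0$, so typically $\mathcal{H}_-=0$ and only the $\mathcal{H}_0\oplus\mathcal{H}_+$ piece is populated); the spectral argument itself treats the two signs symmetrically.
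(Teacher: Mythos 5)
Your proposal is correct, and Step 1 coincides with the paper's argument (conjugating the spectral projections $E_G(I)$ by $V(t)$ to show the three spectral subspaces are stable under both flows). Where you genuinely diverge is the final step of Part 2. The paper derives the same covariance relation $V(t)\Lambda_\pm V(-t)=\Lambda_\pm \pm \lambda t$ and then invokes the Stone--von Neumann theorem to conclude that $K_\pm$ and $\Lambda_\pm$ form a canonically conjugate pair, from which the full real spectrum of $\Lambda_\pm$ follows; it leaves the claim about $\sigma(K_\pm)$ implicit in that invocation. You instead use only the elementary facts that unitary conjugation preserves spectra and that a nonempty closed subset of $\mathbb{R}$ invariant under all translations is $\mathbb{R}$, applied once to $\Lambda_\pm$ and once (via the dual Weyl relation $W(s)K_\pm W(-s)=K_\pm-\lambda s$) to $K_\pm$. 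Your route is more economical and arguably more careful: it sidesteps the regularity and irreducibility bookkeeping that a clean application of Stone--von Neumann requires, and it explicitly covers all four generators, whereas the paper's proof only states the conclusion for $\Lambda_\pm$. What the paper's route buys in exchange is the stronger structural statement that on $\mH_\pm$ the pair $(K_\pm,\Lambda_\pm)$ is unitarily a Schr\"odinger pair, which is used implicitly elsewhere in the Anosov-system literature. Two small notational mismatches, neither of which affects correctness: the paper's proof uses $\Lambda_\pm=\pm\log(\pm G_\pm)$ (hence the $\pm\lambda t$ shift) while you use $\Lambda_\pm=\log(\pm G_\pm)$ with a uniform $+\lambda t$; and your closing remark that the future-algebra hypothesis typically forces $\mH_-=0$ presupposes the state-preservation needed for Theorem \ref{thm:HST}, which the lemma as stated does not assume.
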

\begin{proof}
   {\bf 1:} By definition, we know that $\mH_\pm$ and $\mH_0$ are stable under the flow by $U(s)$: $U(s)\mH_\pm= \mH_\pm$. It follows from the Anosov relations that for any range $I$ in the spectrum of $G$ we have
\begin{eqnarray}
    V(t)E_G(I)V(-t)=E_G(e^{-\lambda t} I)
\end{eqnarray}
where the projections $E_G(I)$ are 
\begin{eqnarray}
        &&E_G(I)=\int_{\tau\in I} dP_G(\tau)\nn\\ 
    &&G=\int_{-\infty}^\infty \tau dP_G(\tau)\ .
\end{eqnarray}
Therefore, $\mH_\pm$ and $\mH_0$ are also stable under the flow by $V(t)$.

{\bf 2:} On $\mH_\pm$ the operators $K_\pm$ and $\pm G_\pm$ are strictly positive, respectively. On these subspaces the operators $\Lambda_\pm=\pm\log(\pm G_\pm)$ satisfy the algebra
\begin{eqnarray}
    V(t)\Lambda_\pm V(-t)=\Lambda_\pm\pm \lambda t\ .
\end{eqnarray}
It follows from the Stone-von Neumann theorem that $K_\pm$ and $\Lambda_\pm$ satisfy the canonical commutation relations 
\begin{eqnarray}
    [\Lambda_\pm, K]=\pm i\lambda
\end{eqnarray}
and the spectrum of $\Lambda_\pm$ is the entire real line.
\end{proof}
Equipped with this lemma, we are now ready to prove exponential clustering in quantum Anosov systems:
\begin{theorem}[Theorem 3.6 of \cite{emch1994anosov}]
   Consider an Anosov system with two automorphism groups $U(s)=e^{i G s}$ with respect to which the algebra $\mA$ is a future algebra, and $V(t)=e^{i tK}$ that satisfies the Anosov relations in (\ref{Anosovrel}) for some $\lambda$.
Then, every operator $a,b\in \mA$ with $a$ in the domain of $G^r$ with $r>0$ and any $\epsilon>0$ we have
\begin{eqnarray}
   |\braket{a|V(t)b}| \lesssim(e^{\lambda t}/\ep)^r\|G^r \ket{a}\|\|\ket{b}\|\ .
\end{eqnarray}
\end{theorem}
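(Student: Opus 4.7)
The plan is to spectrally decompose $\ket{a}$ with respect to $G$ at a $t$-dependent cutoff and use the Anosov commutation relation of Lemma \ref{Anosoventirespec} to trade the oscillation of $V(t)$ for a rescaling of the $G$-spectrum. I would first reduce to the subspace $\mH_+$ on which $G>0$: by Lemma \ref{Anosoventirespec} the splitting $\mH=\mH_0\oplus\mH_+\oplus\mH_-$ is stable under both $V(t)$ and $U(s)$, and the future-algebra hypothesis on $\mA$ together with the positivity implicit in the Anosov/half-sided-translation setup places the nontrivial parts of $a\ket{\Omega}$ and $b\ket{\Omega}$ (after subtracting their $\mH_0$ vacuum components, which only shift the correlator by a constant) in $\mH_+$, where $G$ is strictly positive.

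For any cutoff $\delta>0$ I split
\begin{equation}
\ket{a}=E_G([0,\delta))\ket{a}+E_G([\delta,\infty))\ket{a}\, .
\end{equation}
The high-spectrum piece obeys $\|E_G([\delta,\infty))\ket{a}\|\le\delta^{-r}\|G^r\ket{a}\|$ because $G^{-r}\le\delta^{-r}$ on $[\delta,\infty)$, and Cauchy-Schwarz then bounds its contribution to $\braket{a|V(t)b}$ by $\delta^{-r}\|G^r\ket{a}\|\,\|\ket{b}\|$. For the low-spectrum piece I use the Anosov identity $V(-t)E_G(I)V(t)=E_G(e^{\lambda t}I)$ from Lemma \ref{Anosoventirespec} to slide the projector across $V(t)$:
\begin{equation}
\braket{E_G([0,\delta))a|V(t)b}=\braket{V(-t)a|E_G([0,\delta e^{\lambda t}))b}\, .
\end{equation}
Choosing $\delta=\epsilon e^{-\lambda t}$ collapses the inner projector to the $t$-independent $E_G([0,\epsilon))$ and simultaneously turns the high-spectrum estimate into exactly $(e^{\lambda t}/\epsilon)^r\|G^r\ket{a}\|\,\|\ket{b}\|$, which matches the announced bound.

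The main obstacle is controlling the residual low-spectrum term $\braket{V(-t)a|E_G([0,\epsilon))b}$: Cauchy-Schwarz gives only $\|\ket{a}\|\,\|E_G([0,\epsilon))\ket{b}\|$, which carries no explicit $t$-dependence. Since the connected part of $b\ket{\Omega}$ lies in $\mH_+$ and has no atom at $G=0$, this tail shrinks to zero as $\epsilon\to0$, but only qualitatively from the spectral theorem. Absorbing it into the $\lesssim$ therefore requires either restricting $b$ to a dense subset whose $G$-spectral support is bounded away from the origin (where the residual vanishes identically for small enough $\epsilon$), or optimizing $\epsilon$ against the rate at which $\|E_G([0,\epsilon))\ket{b}\|$ vanishes, balancing the $\epsilon^{-r}$ growth of the high piece against the tail of the low piece. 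This $\epsilon$-optimization is precisely what converts the stated estimate into genuine exponential clustering at the Lyapunov rate $|\lambda|r$ for decaying modes, and is the only non-routine step in the argument.
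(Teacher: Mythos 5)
Your proposal is correct and follows essentially the same route as the paper's proof: a spectral cutoff of $G$ near zero, the covariance $V(t)E_G(I)V(-t)=E_G(e^{-\lambda t}I)$ to slide the projector between $a$ and $b$, the Chebyshev-type bound $E_G(|G|\geq\delta)\leq(G/\delta)^{2r}$ combined with Cauchy--Schwarz to produce the $(e^{\lambda t}/\ep)^r\|G^r\ket{a}\|\|\ket{b}\|$ factor, and a low-spectrum residual that is controlled only qualitatively (the paper handles it by smearing the operators with a sharply peaked function, matching your "dense subset with spectral support away from the origin" option). The only cosmetic difference is that you place the $t$-dependent cutoff $\ep e^{-\lambda t}$ directly on $\ket{a}$ and let the residual projector land on $\ket{b}$, whereas the paper fixes the cutoff at $\ep$ on the $b$ side and slides it onto $a$ --- the same computation read in the opposite order.
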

\begin{proof}
Consider the range $I=(-\infty,-\ep)\cup (\ep,\infty)$ in the spectrum of $G$. For any $r>0$ we have
\begin{eqnarray}
    V(t)E_G(I)V(-t)=E_G(e^{-\lambda t}I)\leq (e^{\lambda t}G/\ep)^{2r}
\end{eqnarray}
where we have used the operator statement:
\begin{eqnarray}
    E_G(I)\leq (G/\ep)^{2r}\ .
\end{eqnarray}
Now, consider the correlator
\begin{eqnarray}
    |\braket{a|V(t)E_G(I) b}|&=&|\braket{E_G(e^{-\lambda t}I)a|V(t)b}|\nn\\
    &\leq& |\braket{E_G(e^{-\lambda t}I)a|E_G(e^{-\lambda t}I)a}|^{1/2}|\braket{b|b}|^{1/2}\nn\\
    &\leq &(e^{\lambda t}/\ep)^{r} \|G^r \ket{a}\|\|\ket{b}\|\ .
\end{eqnarray}
where we have used Cauchy-Schwarz inequality in the second line.
% Therefore,
% \begin{eqnarray}
% |\braket{a|E_G(I) V(s)b}|\leq \braket{G^{2r}a|V(s) b}|\ .
% \end{eqnarray}
% Using Cauchy-Schwarz we find
% \begin{eqnarray}
%     |\braket{a|E_G(I) V(s)b}|\leq e^{\lambda r t}|\braket{a|G^{2r}a}|^{1/2}|\braket{b|b}|^{1/2}\ .
% \end{eqnarray}
Moreover for every $\epsilon>0$, there exists some $\delta>0$ such that
\begin{eqnarray}
    \|E_G(-\epsilon,\epsilon)\ket{a}\| \leq \delta \|\ket{a}\|\ .
\end{eqnarray}
Therefore we can remove the projection $E_G(I)$ from the correlator at the cost of a small error controlled by $\delta$:\footnote{A systematic way to do this is to replace each operator $a$ with $a(f)=\int dt f(t) e^{i t G}ae^{-i t G}$
with $f(t)$ that is highly peaked around $t=0$. This way, by choosing appropriate $f(t)$ we can make sure that
$\|E_G(-\ep,\ep)a\ket{\Omega}|^2\ll |a\ket{\Omega}|^2$. For more detail, see Corollary 3.4 and Lemma 3.5 of \cite{emch1994anosov}.} 
\begin{eqnarray}
|\braket{a|V(t) b}| &\leq & |\braket{a|V(t)E_G(I) b}| + |\braket{a|V(t) E_G(I') b}| \nn \\
&\leq& (e^{\lambda t}/\ep)^{r} \|G^r \ket{a}\|\|\ket{b}\| + \delta \|\ket{a}\| \|\ket{b}\|\ .
\end{eqnarray}
Hence, as long as $a$ is in the domain of $G^r$ the correlator decays faster than the exponent $e^{\lambda r t}$. 
\end{proof}

\subsection{Half-side translations/modular inclusions}\label{sec:HST/HSMI}

 A modular quantum K-system is a quantum dynamical flow with a modular future proper subalgebra $\mA\subset \mR$. Then, an important result of quantum ergodic theory is that for modular flow the ergodic hierarchy simplifies (see Corollary \ref{corr:modularhierarchy}). This is based on the key theorem below:
\begin{theorem}[Half-sided modular inclusions]\label{thm:HSMI}
  Consider a von Neumann algebra in a standard GNS representation $\{\mH,\ket{\Omega},\mR\}$. If we have an ergodic modular future subalgebra $\mA\subset \mR$, then the positive operator \begin{eqnarray}
      G=K_\mR- K_\mA\geq 0
  \end{eqnarray}
  generates a unitary flow $U(s)=e^{is G}$ such that: 
  \begin{enumerate}
      \item {\bf Maximal Chaos:} The flow by $U(s)$ corresponds to a growing Anosov mode with a maximal Lyapunov exponent $\lambda=2\pi$, i.e. 
  \begin{eqnarray}
      \forall s, t \in \mathbb{R}:\qquad \Delta_\mR^{-it}e^{i s G}\Delta_\mR^{it}=\Delta_\mA^{-it}e^{i s G}\Delta_\mA^{it}=e^{i  e^{2\pi t}sG}\ .
  \end{eqnarray}
  \item {\bf Future algebra with $G>0$:} The algebra $\mR$ is a future algebra with respect to the dynamics $U(s)$, i.e.
  \begin{eqnarray}
      \forall s>0:\qquad U(s) \mR U(s)^\dagger\subset \mR\ .
  \end{eqnarray}
  \item  {\bf Quantum detailed balance:} We also have 
  \begin{eqnarray}\label{detailedassu}
       \forall s\in \mathbb{R}:\qquad J_\mR e^{i s G}J_\mR=J_\mA e^{i s G}J_\mA=e^{-i sG}\ .
  \end{eqnarray}
  \end{enumerate}
  \begin{proof}
      The proof is standard and can be found in \cite{borchers2000revolutionizing,araki2005extension}.
  \end{proof}
\end{theorem}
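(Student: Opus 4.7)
The plan is to follow the Wiesbrock--Borchers--Araki--Zsido approach, reducing the theorem to an analysis of the relative modular cocycle $u(t) := \Delta_\mR^{it}\Delta_\mA^{-it}$. The crucial structural input is a theorem of Takesaki: the half-sided modular inclusion hypothesis $\Delta_\mR^{-it}\mA\Delta_\mR^{it} \subset \mA$ for $t\geq 0$ is equivalent to saying that $u(t)$ belongs to $\mR$ for $t \geq 0$ and that $t \mapsto u(t)\ket{\Omega}$ extends to a bounded $\mH$-valued analytic function on the strip $0 < \Im t < 1/2$ with $\Omega$-fixing boundary values. Once this fact is in hand, all three conclusions follow from the cocycle identity and the analyticity of its vacuum vector.

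For positivity $G \geq 0$, I would use a Hardy-space / Paley--Wiener argument: boundedness and analyticity of $t \mapsto u(t)\ket{\Omega}$ on the upper half-strip forces the spectral measure of the self-adjoint generator $iG = \tfrac{d}{dt}\big|_{0} u(t)$ to be supported on $[0,\infty)$. The symmetry $G\ket{\Omega}=0$ is immediate from $u(t)\ket{\Omega} = \ket{\Omega}$, which in turn follows from $\Delta_\mR^{it}\ket{\Omega} = \Delta_\mA^{it}\ket{\Omega} = \ket{\Omega}$.

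The heart of the argument is the Anosov commutation relation, which at the Lie-algebra level reads $[K_\mR, K_\mA] = i(K_\mR - K_\mA)$. I would obtain this by combining the cocycle identity
\begin{equation*}
u(t+s) \;=\; u(t)\,\Delta_\mR^{it}\, u(s)\, \Delta_\mR^{-it},
\end{equation*}
the membership $u(t)\in\mR$, and the strip analyticity. Matching the resulting two-parameter unitary structure against the $ax+b$ group law pins down the exponential rescaling $\Delta_\mR^{-it}G\Delta_\mR^{it} = e^{2\pi t}G$, which integrates to the stated Anosov relation. The ergodicity assumption that $\mA$ is a proper subalgebra is needed to rule out degenerate representations, e.g.\ those with $G=0$ on a nontrivial subspace.

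The remaining two claims then fall out cleanly. The inclusion $U(s)\mR U(s)^\dagger \subset \mR$ for $s>0$ follows from Borchers' theorem in reverse: given the Anosov relation plus $G \geq 0$ and $G\ket{\Omega}=0$, the one-parameter group $U(s)=e^{isG}$ must act as half-sided translations on $\mR$, and the direction of inclusion is fixed by the sign of the generator. The quantum detailed balance $J_\mR U(s) J_\mR = U(-s)$ is obtained by conjugating the Anosov relation with $J_\mR$ and using $J_\mR\Delta_\mR^{it}J_\mR = \Delta_\mR^{it}$; compatibility with positivity forces $J_\mR G J_\mR = -G$. The corresponding identity with $J_\mA$ reduces to the one for $J_\mR$ via the relation between the two modular conjugations implied by the emergent $ax+b$ representation. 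The main obstacle I anticipate is the third step: the commutation $[K_\mR, K_\mA] = i(K_\mR-K_\mA)$ is not a formal consequence of the inclusion but genuinely requires the Tomita--Takesaki analytic continuation of $u(t)\ket{\Omega}$ on the strip, which is the core of Wiesbrock's original technical insight.
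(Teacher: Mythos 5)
The paper does not actually supply a proof of this theorem; it defers to the standard literature (Borchers' review and Araki--Zsid\'o), and your sketch is an attempt to reconstruct that standard argument. You have correctly identified the right circle of ideas --- the reparametrized family $\Delta_\mR^{it}\Delta_\mA^{-it}$, strip analyticity from Tomita--Takesaki theory, and Borchers' commutation theorem --- but the sketch rests on a key lemma that is false and on two steps that do not do the work you assign to them. First, the ``crucial structural input'' that the half-sided modular inclusion is equivalent to $u(t)=\Delta_\mR^{it}\Delta_\mA^{-it}\in\mR$ for $t\geq 0$ is wrong. In the prototype (Theorem \ref{Borchersthm}), $\mR=\mA(W)$, $\mA=\mA(W(z^+))$, and $u(t)$ is a null-translation unitary $e^{ic(t)z^+P_+}$; such a unitary acts nontrivially on the commutant $\mA(W')$ and therefore cannot lie in $\mA(W)$ (any unitary in $\mA(W)$ conjugates $\mA(W')$ trivially). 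You may be conflating this object with the Connes cocycle of two \emph{states on the same algebra}, which does lie in the algebra; the cocycle of two \emph{algebras with the same state} does not. Second, your positivity argument is vacuous as stated: since $\Delta_\mR^{it}$ and $\Delta_\mA^{it}$ both fix $\ket{\Omega}$, the function $t\mapsto u(t)\ket{\Omega}$ is identically $\ket{\Omega}$, and its analyticity carries no spectral information about $G$. Positivity is in fact one of the hard parts of Wiesbrock's theorem, and it requires analyticity of $u(t)\xi$ (or of matrix elements) for a dense set of vectors; moreover one cannot simply set $G=K_\mR-K_\mA$ as a difference of unbounded self-adjoint operators --- the rigorous proof constructs $U(s)$ as a one-parameter group out of the family $\Delta_\mR^{it}\Delta_\mA^{-it}$ via a reparametrization of the form $s\sim 1-e^{-2\pi t}$, and establishing the additive group law is precisely where the strip-analyticity is consumed (and where Wiesbrock's original argument had the gap repaired by Araki--Zsid\'o).

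Third, conclusion 2 does not follow ``from Borchers' theorem in reverse.'' The Anosov relation $\Delta_\mR^{-it}U(s)\Delta_\mR^{it}=U(e^{2\pi t}s)$ together with $G\geq 0$ and $G\ket{\Omega}=0$ is a statement about unitaries on $\mH$ and by itself implies nothing about how $\mathrm{Ad}_{U(s)}$ acts on the algebra $\mR$; Borchers' theorem runs in the other direction (half-sided translation of the algebra plus positivity $\Rightarrow$ commutation relations). In the actual proof the inclusion $U(s)\mR U(s)^\dagger\subset\mR$ for $s>0$ comes from the explicit realization of $U$ in terms of the modular unitaries of $\mA$ and $\mR$ and the hypothesis $\Delta_\mR^{-it}\mA\Delta_\mR^{it}\subset\mA$, not from the commutation relations after the fact. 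The detailed-balance identity and the final Anosov relation are then correctly obtained roughly as you describe, but only once $U(s)$ has been legitimately constructed.
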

\begin{corollary}\label{corr:modularhierarchy}
Consider a quantum dynamical system given by the modular flow of a von Neumann algebra $\mR$. Then, 
\begin{eqnarray}
\text{Maximal Modular Chaos}\equiv   \text{Modular K-system}\ .
\end{eqnarray}
\end{corollary}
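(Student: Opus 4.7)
The plan is to establish both directions of the equivalence using the two bridging results already assembled in the paper: Theorem \ref{thm:HSMI} handles the passage from a modular K-system to maximal modular chaos, while Theorem \ref{thm:HST} supplies the inverse construction. Under Definition \ref{qksystem} specialized to modular dynamics, a modular K-system is a proper subalgebra $\mA\subset\mR$ with $\Delta_\mR^{-it}\mA\Delta_\mR^{it}\subset \mA$ for all $t>0$, together with the ergodicity and strong-mixing clauses. Maximal modular chaos is the data of a unitary flow $U(s)=e^{isG}$ with $G\geq 0$, $G\ket{\Omega}=0$, half-sided translation $U(s)\mR U(-s)\subset\mR$ for $s>0$, and the Lyapunov-saturating algebraic relation of (\ref{algebrarel}).

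For \emph{modular K-system} $\Rightarrow$ \emph{maximal modular chaos}, I would apply Theorem \ref{thm:HSMI} directly: setting $G:=K_\mR-K_\mA$, the theorem's three conclusions yield positivity, the vacuum-annihilation symmetry (using $K_\mR\ket{\Omega}=K_\mA\ket{\Omega}=0$), the scaling relation, and the half-sided translation of $\mR$ by $U(s)=e^{isG}$. Together these are precisely the ingredients of maximal modular chaos.

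For the converse, I would construct a candidate modular future subalgebra explicitly as $\mA:=U(1)\mR U(-1)$. Half-sided translation gives $\mA\subset\mR$, and properness follows from non-triviality of $G$, since otherwise $U$ would act as an inner automorphism with $U\ket{\Omega}=\ket{\Omega}$ and separability of $\ket{\Omega}$ would force $G=0$. The half-sided modular inclusion is then immediate from the scaling identity:
\begin{equation}
\Delta_\mR^{-it}\mA\Delta_\mR^{it}=U(e^{2\pi t})\mR U(-e^{2\pi t})=U(e^{2\pi t}-1)\,\mA\, U(-(e^{2\pi t}-1))\subset \mA
\end{equation}
for $t>0$, since $e^{2\pi t}-1>0$ and $\mR$ is a future algebra for $U$.

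I expect the main obstacle to be the ergodicity and strong-mixing clauses of Definition \ref{qksystem} in the converse direction: verifying $(\vee_t\mA_t)''=\mR$ and $\wedge_t\mA_t=\mathbb{C}\cdot 1$ requires more than the inclusion itself and reduces, via the Borchers-type spectral analysis that underlies Theorem \ref{thm:HST}, to the full-line conclusion of Lemma \ref{Anosoventirespec} that the spectrum of $G$ on $\mH\ominus\mathbb{C}\ket{\Omega}$ exhausts $\mathbb{R}_+$. Once this spectral input is available, cyclicity and separability of $\ket{\Omega}$ for $\mR$ transport under $U$ to the corresponding properties for $\mA$, closing the equivalence.
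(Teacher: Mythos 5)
Your proof is correct and follows essentially the same route the paper intends: Theorem \ref{thm:HSMI} supplies the forward direction, and the converse is exactly the $3\to 4$ implication of the half-sided translation theorem (Theorem \ref{thm:HST}), with your explicit subalgebra $U(1)\mR U(-1)$ playing the role of the paper's $\mA_t$. The only remark is that the paper sidesteps your concern about the ergodicity and strong-mixing clauses by adopting the weaker working definition stated just before Theorem \ref{thm:HSMI} (a modular K-system is simply a proper modular future subalgebra), so no additional spectral input is required for the equivalence as stated there.
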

Next, we explain why we call the third property in Theorem \ref{thm:HSMI} quantum detailed balance. We start by reviewing detailed balance in classical physics.

The assumption of {\it detailed balance} plays an important role in classical statistical mechanics. It was used by Boltzmann to prove a second law of thermodynamics: entropy production is positive \cite{boltzmann2022lectures}. It says that if the classical equilibrium is described by the probability vector $p_i= e^{-\beta E_i}/Z$ where $Z=\sum_i e^{-\beta E_i}$, and a dynamical process is the matrix $T_{ij}$ then
\begin{eqnarray}
    e^{-\beta E_i}T_{ij}=e^{-\beta E_j}T_{ji}\ .
\end{eqnarray}
Written as a matrix equation this is $T e^{-\beta H}=e^{-\beta H}T^T$.
In other words, it is the assumption that the matrix $e^{\beta H/2}Te^{-\beta H/2}$ is symmetric. 

In quantum systems, the equilibrium distribution is given by the Gibbs state $\rho_\beta\sim e^{-\beta H}$ that in the GNS Hilbert space is represented by the thermofield double $\ket{1}_\beta$ (the canonical purification of the Gibbs state). A dynamical process is a general unital CP map $\Phi:\mA\to \mA$ (the Heisenberg analog of a quantum channel) that acts as a linear contraction on the Hilbert space:\footnote{The GNS Hilbert space represents the algebra $\mA$ as a Hilbert space $\mH$: $a\to \ket{a}$. The superoperators $\Phi:\mA\to \mA$ are represented by linear operators on the GNS Hilbert space; see \cite{furuya2022real} for a review.}
\begin{eqnarray}
    \Phi(a)\ket{1}_\beta=Fa\ket{1}_\beta\ .
\end{eqnarray}
It is natural to define {\it quantum detailed balance} as the condition 
$\Delta^{1/2}F=F^\dagger\Delta^{1/2}$ or the self-adjointness of $\Delta^{1/4}F \Delta^{-1/4}$.\footnote{This is equivalent to saying that the dynamical map $\Phi$ is self-adjoint with respect to the {\it alternate inner product}: $(a,b)_\beta:=\braket{a|\Delta^{1/2}b}_\beta$.} In other words, quantum detailed balance is the assumption
\begin{eqnarray}
    \braket{a|\Delta^{1/2}\Phi(b)}_\beta=\braket{\Phi(a)|\Delta^{1/2}b}_\beta\ .
\end{eqnarray}
 
We are interested in quantum channels that correspond to unitary flows $\Phi_t(a)=e^{i G t}ae^{-iG t}$ and correspond to a symmetry (state-preserving) i.e., $G\ket{1}_\beta=0$. Our detailed balance condition is
\begin{eqnarray}
    U(-t)\Delta^{1/2}=\Delta^{1/2}U(t)\ .
\end{eqnarray}
or equivalently
\begin{eqnarray}
    JU(t)J=U(-t)
\end{eqnarray}
as in (\ref{detailedassu}). 

Finally, we connect modular future/past algebras (half-sided modular inclusion) and future/past algebras with a positive generator (half-sided translations):
\begin{theorem}[Half-sided translations]\label{thm:HST}
    Consider a dynamical flow $U(t)=e^{i Gt}$ that leaves the vacuum invariant $U(t)\ket{\Omega}=\ket{\Omega}$. If $\mA$ is a future algebra ($\mA_t\subset \mA$ for all $t>0$) then the following are equivalent:
    \begin{enumerate}
        \item The generator is positive: $G\geq 0$.
        \item The quantum detailed balance condition $\Delta^{1/2}U(t)=U(-t)\Delta^{1/2}$
        %\item the scaling $t\to e^{2\pi s} t$ is implemented by $V(s)=e^{i s D}$
        %\item The flow leaves the vacuum invariant $U(t)\ket{\Omega}=\ket{\Omega}$.
        \item The flow satisfies Borchers' relations $\Delta^{-is}U(t)\Delta^{is}=U(e^{2\pi s}t)$.
        \item The algebra $\mA_t$ is a modular future algebra of $\mA$.
    \end{enumerate}
\end{theorem}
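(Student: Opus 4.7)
My plan is to reduce the four statements to Theorem \ref{thm:HSMI}. I would establish the cycle by showing the easy direct implication $(3)\Rightarrow(4)$, applying Theorem \ref{thm:HSMI} to get $(4)\Rightarrow(1),(2),(3)$, and invoking Borchers' classical theorem for $(1)\Rightarrow(3)$. Condition $(2)$ is handled separately by analytic continuation of $(3)$ at the imaginary modular time $s=-i/2$; the reverse direction $(2)\Rightarrow$ others is the most delicate, as detailed balance is only a single imaginary-time datum of the full Borchers relation.

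\textbf{Direct implications.} The implication $(3)\Rightarrow(4)$ is a one-line computation from Borchers' relation:
\begin{equation*}
\Delta_\mA^{-is}\mA_t\Delta_\mA^{is}=U(e^{2\pi s}t)\,\mA\,U(-e^{2\pi s}t)=\mA_{e^{2\pi s}t}\subset\mA_t\qquad(s>0),
\end{equation*}
since $s\mapsto\mA_s$ is decreasing under the semigroup assumption. The direction $(3)\Rightarrow(2)$ is an analytic continuation: positivity of $G$ (which follows from $(3)$ combined with the semigroup assumption via spectral analysis) extends $U(t)$ holomorphically to $\mathrm{Im}\,t>0$, and the KMS property extends $\Delta_\mA^{-is}$ analytically on the strip $0<\mathrm{Im}\,s<1/2$ when acting on $\mA\ket{\Omega}$; evaluating Borchers' identity at $s=-i/2$ yields $\Delta_\mA^{1/2}U(t)\Delta_\mA^{-1/2}=U(-t)$, which is $(2)$. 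For $(2)\Rightarrow(3)$ I would rephrase $(2)$ as $J_\mA U(t)J_\mA=U(-t)$, which together with the semigroup $\mA_t\subset\mA$ implies $U(-t)\mA'U(t)\subset\mA'$ for $t>0$; modular covariance then forces $G$ to have a one-sided spectrum, and the Borchers dilation relation $(3)$ follows by invoking the classical half-sided translations theorem.

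\textbf{The converse via half-sided modular inclusions and the main obstacle.} For $(4)\Rightarrow(1),(2),(3)$, I apply Theorem \ref{thm:HSMI} to the inclusion $\mA_t\subset\mA$ to produce a positive generator $\tilde G^{(t)}:=K_\mA-K_{\mA_t}\geq 0$ equipped with its own Borchers and detailed balance relations. The subtle step is identifying $\tilde G^{(t)}$ with $tG$: both $U(s)$ and $e^{is\tilde G^{(1)}}$ are strongly continuous one-parameter groups fixing $\ket{\Omega}$ that implement the transport $\mA\to\mA_s$ for $s>0$, so matching them uses Borchers' scaling $\Delta_\mA^{-is}\tilde G^{(t)}\Delta_\mA^{is}=e^{2\pi s}\tilde G^{(t)}$ together with the one-parameter group law and the normalization $\tilde G^{(t)}\to tG$ extracted from the limit $t\to 0^+$. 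Once this identification is made, $(1)$, $(2)$, $(3)$ are direct specializations of the HSMI output. The main obstacle is the classical step $(1)\Rightarrow(3)$, whose proof (Borchers) combines the two analyticities above via an edge-of-the-wedge argument on the joint domain of $U(t)$ and $\Delta_\mA^{-is}$; I would treat this as a black box rather than reprove it.
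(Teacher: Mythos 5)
Your overall architecture --- the one-line computation for $(3)\Rightarrow(4)$, Theorem \ref{thm:HSMI} for $(4)\Rightarrow(1)$, and Borchers' half-sided translation theorem as a black box for $(1)\Rightarrow(3)$ --- matches the paper's, and your remark that $(4)\Rightarrow(1)$ requires identifying $K_\mA-K_{\mA_t}$ with $tG$ flags a point the paper itself glosses over. The genuine gap is in your route out of condition $(2)$. The paper does not derive $(3)$ from $(2)$ by the argument you sketch; it proves $(2)\Rightarrow(1)$ directly via Narnhofer's computation: differentiating $\Delta^{1/2}U(t)=U(-t)\Delta^{1/2}$ gives $\Delta^{1/2}G=-G\Delta^{1/2}$, feeding this into the integral representation of $-\log\Delta^{1/2}$ together with a contour/residue argument yields $[G,K]=2\pi i G$, whence $e^{iGs}Ke^{-iGs}=K-2\pi sG$ by Baker--Campbell--Hausdorff, and finally $G=\frac{1}{2\pi s}\left(K-K_s\right)\geq 0$ by monotonicity of the modular operator under the inclusion $\mA_s\subset\mA$. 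Your substitute step --- ``modular covariance then forces $G$ to have a one-sided spectrum'' --- is circular: modular covariance of $U(t)$ under $\Delta^{is}$ is precisely the Borchers relation $(3)$ you are trying to establish, and it is not available at that stage. The observation that $J_\mA U(t)J_\mA=U(-t)$ together with $\mA_t\subset\mA$ gives $U(-t)\mA'U(t)\subset\mA'$ is correct but does not by itself produce positivity of $G$; some version of the commutator computation above (the only substantive piece of the paper's proof not delegated to a citation) is unavoidable here.

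Two smaller points. First, your rephrasing of $(2)$ as $J_\mA U(t)J_\mA=U(-t)$ is only asserted; the paper also states this equivalence without proof in its preamble, so this is a shared informality rather than an error. Second, your $(3)\Rightarrow(2)$ continuation at $s=-i/2$ needs positivity of $G$ to control the analytic domain, so it really proves $(1)\wedge(3)\Rightarrow(2)$; this is harmless given the rest of your cycle, but the dependence should be stated. In the paper both $(2)$ and $(3)$ are obtained from $(1)$ as part of the cited Borchers package, so this continuation step is not needed there.
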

\begin{proof}
{\bf 1} $\to$ {\bf 2} $\&$ {\bf 3:} This is the half-sided translation theorem of Borchers. The proofs are standard, for instance, see \cite{borchers2000revolutionizing} or the original paper \cite{borchers1992cpt}.

{\bf 3$\to$ 4:} We write
\begin{eqnarray}
   \forall s,t\geq 0: \Delta_\mA^{-is}\mA_t\Delta_\mA^{is}&=&\Delta_\mA^{-is}U(t)\mA U(t)^\dagger \Delta_\mA^{is}\nn\\
    &=&U(e^{2\pi s}t)\Delta_\mA^{-is}\mA\Delta_\mA^{is}U(e^{2\pi s}t)^\dagger=\mA_{e^{2\pi s}t}\subset \mA_t\,
\end{eqnarray}
where in going to the second line we have used {\bf 3}. 

{\bf 4$\to$ 1:} This was the second statement in Theorem \ref{thm:HSMI}.

{\bf 2} $\to$ {\bf 1:} The proof is in \cite{narnhofer2007quantum}, but we reproduce it here for completeness. 
The idea is to first write the logarithm of the modular operator as 
\begin{eqnarray}\label{logexpress}
K=-\log\Delta^{1/2}=\int_0^\infty d\beta \lb\frac{1}{\Delta^{1/2}+\beta}-\frac{1}
{1+\beta}\rb
\end{eqnarray}
so that 
\begin{eqnarray}
    [G,-\log\Delta^{1/2}]=\int_0^\infty d\beta\, [G,\frac{1}{\Delta^{1/2}+\beta}]\ .
\end{eqnarray}
To compute this commutator we make the following formal manipulations
\begin{eqnarray}
    [G,\frac{1}{\Delta^{1/2}+\beta}]&=&-2\frac{1}{(\Delta^{1/2}+\beta)}G \frac{\Delta^{1/2}}{(\Delta^{1/2}+\beta)}\nn\\
    &=&-2G\frac{\Delta^{1/2}}{(\Delta^{1/2}+\beta)^2}+2[G,\frac{1}{\Delta^{1/2}+\beta}]\frac{\Delta^{1/2}}{\Delta^{1/2}+\beta}
\end{eqnarray}
where in the first line we have used $[G,(\Delta^{1/2}+\beta)^{-1}(\Delta^{1/2}+\beta)]=0$. Therefore,
\begin{eqnarray}
     [G,\frac{1}{\Delta^{1/2}+\beta}]&=&\lb -2G\frac{\Delta^{1/2}}{(\Delta^{1/2}+\beta)^2}\rb \lb 1-2\frac{\Delta^{1/2}}{\Delta^{1/2}+\beta}\rb^{-1}\nn\\
     &=&2G\frac{ \Delta^{1/2}}{(\Delta^{1/2}+\beta)(\Delta^{1/2}-\beta)} = G\left(\ \frac{1}{\Delta^{1/2}+\beta} + \frac{1}{\Delta^{1/2}-\beta} \right) \ .
\end{eqnarray}
%\NL{Finish the proof}
The integral expression in (\ref{logexpress}) requires a choice of contour for $\beta$ that includes the spectrum of $\Delta^{1/2}$. The spectrum of $\Delta$ is real and positive. Therefore we choose a contour parallel to the real line slightly shifted down to the lower half plane and close the contour in the upper half plane at infinity. Since the integrand is even under $\beta\to-\beta$,
\begin{eqnarray}
    [G,-\log\Delta^{1/2}] &=& \frac{G}{2} \int_{-\infty-i\epsilon}^{\infty-i\epsilon}d\beta \left(\ \frac{1}{\Delta^{1/2}+\beta} + \frac{1}{\Delta^{1/2}-\beta} \right) \nn\\
    &=& \frac{G}{2} \int_{-\infty}^{\infty}d\beta \left(\ \frac{1}{\Delta^{1/2}+\beta - i\epsilon} + \frac{1}{\Delta^{1/2}-\beta + i\epsilon} \right)\ .
\end{eqnarray}
Using the spectral projection $\Delta^{1/2} = \int dP_\lambda\,  \lambda^{1/2}$ we write
\begin{eqnarray}
    [G,-\log\Delta^{1/2}] = \frac{G}{2} \int dP_\lambda \int_{-\infty}^{\infty}d\beta \left(\ \frac{1}{\lambda^{1/2}+\beta - i\epsilon} + \frac{1}{\lambda^{1/2}-\beta + i\epsilon} \right)\ .
\end{eqnarray}
Applying the residue theorem, only the second term has poles in the upper half plane that contribute a residue of $2\pi i$ to the integral. Hence
\begin{eqnarray}
    [G,K] = 2i\pi G\nn\\
    K=-2\log\Delta^{1/2}\ .
\end{eqnarray}
It follows from the Baker-Campbell-Hausdorff expansion that\footnote{An useful version of the Baker-Campbell-Hausdorff expansion is
\begin{eqnarray}
    e^X Y e^{-X} = Y + [X,Y]+ \frac{1}{2!} [X,[X,Y]] + \frac{1}{3!}[X,[X,[X,Y]]]+\cdots\ .
\end{eqnarray}
In the special case when $[X,Y]=\alpha X$, $\alpha\in\mbC$ we have the simplification
\begin{eqnarray}
    e^X e^Y e^{-X} = e^{Y+\alpha X}\ .
\end{eqnarray}}
\begin{eqnarray}
    &&e^{i Gs}K e^{-i Gs}=K-2\pi sG\ .
\end{eqnarray}
Therefore,
\begin{eqnarray}
    \log\Delta_s^{1/2}=\log (e^{iGs}\Delta^{1/2}e^{-iGs})=\log\Delta^{1/2}+\pi s G\ .
\end{eqnarray}
As a result
\begin{eqnarray}
    G=\frac{1}{\pi s}(-\log \Delta^{1/2}+\log \Delta^{1/2}_s)\geq 0\ .
\end{eqnarray}

\end{proof}

We find the following corollary:
\begin{corollary}
    Consider a quantum dynamical system $\{\mA,\mH_\Omega,\ket{\Omega},U(t)\}$ with strongly continuous dynamical flow generated by a Hamiltonian $H\geq 0$ which is ergodic (i.e. $\ket{\Omega}$ is the unique invariant state of $U(t)$). Then there are past (or future) subalgebras. We have a quantum K-system and the algebra $\mA$ is type III$_1$.
\end{corollary}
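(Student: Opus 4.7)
The plan is to manufacture past/future subalgebras from shifts of $\mA$ following the GFF template of Section~\ref{sec:largeN}, and then chain through Theorems~\ref{thm:HST}, \ref{thm:HSMI}, and~\ref{thm:typeiii1}. Concretely I would set $\mR:=\bigl(\bigvee_{t\in\mathbb{R}} U(t)\mA U(t)^\dagger\bigr)''$, the candidate future subalgebra $\mR_+(t):=\bigl(\bigvee_{s>t}U(s)\mA U(s)^\dagger\bigr)''$, and likewise $\mR_-(t)$. The conjugation identity $U(r)\mR_+(t)U(r)^\dagger=\mR_+(t+r)$ is automatic, and for $r>0$ it yields the half-sided inclusion $\mR_+(t+r)\subset \mR_+(t)$. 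By Lemma~\ref{lemma:equivfuturealgebra}, $\mR_+(t)$ is therefore a future subalgebra of $\mR$ with respect to $U$, provided it is a proper subalgebra.

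\textbf{K-system axioms and type III$_1$.} To verify the three axioms of Definition~\ref{qksystem} for $\mR_+(t)\subset \mR$, the ergodicity axiom $\bigvee_r U(r)\mR_+(t)U(r)^\dagger =\mR$ is immediate from the definition of $\mR$, and the half-sided translation axiom was just established. For strong mixing, Theorem~\ref{thm:HST} applies to the future algebra $\mR_+(t)$ and the positive-energy flow $U$: it yields Borchers' dilation relation $\Delta_{\mR_+(t)}^{-is}U(r)\Delta_{\mR_+(t)}^{is}=U(e^{2\pi s}r)$ and identifies the shifted algebras as modular future subalgebras. Any element of the intersection $\bigwedge_{r>0}\mR_+(t+r)$ is then invariant under the modular flow of $\mR_+(t)$, and Hilbert-space ergodicity of $\ket{\Omega}$ (uniqueness of the $U$-invariant vector) combined with the cyclic and separating property of the vacuum forces this intersection to reduce to $\mathbb{C}\cdot \mathbf{1}$. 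Hence the system is a quantum K-system in the sense of Definition~\ref{qksystem}, and Theorem~\ref{thm:typeiii1} concludes that $\mR$ is type III$_1$ in a state with trivial centralizer.

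\textbf{Main obstacle.} The delicate step is the strong-mixing clause $\bigwedge_{r>0}\mR_+(t+r)=\mathbb{C}\cdot\mathbf{1}$, which requires promoting Hilbert-space ergodicity to algebraic triviality of the tail algebra. A priori the intersection could contain non-scalar operators that nonetheless preserve $\ket{\Omega}$, and ruling this out uses positivity of $H$ in an essential way through the Borchers scaling relation. A related subtlety is ensuring that $\mR_+(t)$ is a \emph{proper} subalgebra of $\mR$; otherwise the half-sided inclusion would upgrade to bilateral invariance, which by Lemma~\ref{lemma:Qergodicity} is incompatible with $H\geq 0$, ergodicity, and the separating property of $\ket{\Omega}$. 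This last observation is what makes the passage ``$H\geq 0$ and ergodic $\Rightarrow$ genuine semigroup structure on some subalgebra'' rigorous, and it is the place where every hypothesis of the corollary is simultaneously invoked.
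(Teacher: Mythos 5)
Your scaffolding follows the same citation chain as the paper's own (very terse) proof: Theorem \ref{thm:HST} to convert the positive-energy half-sided translation into a half-sided modular inclusion, Theorem \ref{thm:HSMI} for the Borchers relations, and Theorem \ref{thm:typeiii1} for the type. The covariance $U(r)\mR_+(t)U(r)^\dagger=\mR_+(t+r)$, the half-sided inclusion, and the ergodicity axiom are all fine. But the step you label ``the main obstacle'' is not a side issue --- it is the entire content of the corollary, and you leave it unproved. Moreover, your stated mechanism for it is not quite right: the Borchers relation gives $\Delta_{\mR_+(t)}^{-is}\,\mR_+(t+r)\,\Delta_{\mR_+(t)}^{is}=\mR_+(t+e^{2\pi s}r)$, so the tail $\mN=\bigwedge_{r>0}\mR_+(t+r)$ is only \emph{globally} stable under the modular flow (the flow rescales $r$), not pointwise fixed; and global stability plus uniqueness of the invariant vector does not by itself force $\mN=\mbC 1$ --- ergodic systems need not be K-mixing. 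The argument that closes the gap (and that the paper outsources to Theorem 4 of Longo) is: $\mN$ is globally invariant under $U(r)$ for \emph{all} $r\in\mbR$ (the intersection of a decreasing family is unchanged by a shift), $\ket{\Omega}$ is separating for $\mN$, and the restriction of $U$ to $\overline{\mN\ket{\Omega}}$ is still ergodic with positive generator; part 2 of Lemma \ref{lemma:Qergodicity} then forces $U$ to act trivially on that subspace, so $\mN\ket{\Omega}=\mbC\ket{\Omega}$ and hence $\mN=\mbC 1$ by the separating property. You name the right ingredients (positivity, ergodicity, separating vector) but never assemble them.

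There is a second, independent gap at the last step. Theorem \ref{thm:typeiii1} takes as hypothesis \emph{strong quantum mixing of the modular correlators}, not the algebraic K-system axioms of Definition \ref{qksystem}; the paper itself warns that quantum K-systems and quantum K-mixing ``are not equivalent.'' So ``K-system, hence Theorem \ref{thm:typeiii1}'' skips the passage from triviality of the tail algebra to the decay of connected modular two-point functions, which needs its own (martingale-type) argument. Alternatively one can bypass Theorem \ref{thm:typeiii1} and read the type directly off the ergodic half-sided modular inclusion $\mA_t\subset\mA$ supplied by Theorem \ref{thm:HST}(4) --- this is precisely what the reference to Longo is for, and it has the added advantage of yielding the type of $\mA$ itself, which is what the corollary asserts, rather than of the a priori larger algebra $\mR=\lb\vee_t\mA_t\rb''$ that your argument addresses.
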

   \begin{proof}
   This follows from Theorem \ref{thm:HSMI}, Theorem \ref{thm:HST} and Theorem \ref{thm:typeiii1}. Also, see Theorem 4 of \cite{Longo:1982zz}.
\end{proof}

\section{Discussion}

In this work, we formulated an aspect of bulk locality, namely sharp horizons in spacetimes with bifurcate Killing horizons, in terms of universal ergodic properties of von Neumann observable algebras of quantum gravity. We showed that the local Poincar\'e symmetry near the horizon emerges in a certain scaling limit of any quantum system with modular future and past subalgebras (modular quantum K-system). In particular, we found that any modular quantum K-system is maximally chaotic. 
% Examples of such systems include 

In quantum K-systems late-time observables are independent of the entire past subalgebra. 
This implies quantum strong mixing of all orders (see \eqref{strongquntumnnmixing}). In a theory of GFF above the Hawking-Page phase transition, we have a maximally ergodic system.  The strong two-mixing property implies that at large but finite $N$, in the limit $1\ll t/\beta\ll \log N$, the connected correlators decay. The quantum strong n-mixing says that the connected correlator 
    $\braket{ab_1(t)b_2(2t)\cdots b_n(nt)}^{conn}_\beta$ in the scaling limit $1\ll t/\beta\ll \log N$ vanishes. This fact was used in \cite{chandrasekaran2022large} to argue for a discrete-time second law of thermodynamics at large but finite $N$.

From our point of view, the advantage of the operator-algebraic approach is that in holographic GFF above the Hawking-Page phase transition, the boundary von Neumann algebras associated with time intervals also satisfy the assumptions of Theorem \ref{newthmapprox}.
Therefore, there is an exact emergent Poincar\'e algebra in the boundary GFF. In general, the modular flow of the boundary future and past algebras is highly non-local. It is only in a particular ``near-horizon" limit that this emergent Poincar\'e algebra matches the approximate local Poincar\'e algebra in the bulk.

It is worth commenting on the three key properties we assumed in our general quantum dynamical system that led to our results. We have an observable algebra $\mA$ that evolves unitarily according to $\mA(s)=e^{i Ks}\mA e^{-i Ks}$. The {\bf symmetry} assumption ensures that the unitary flow $U(s)=e^{-i Ks}$ preserves the state. In the language of ergodic theory, we say that time evolution (dynamics) is state-preserving. The second assumption of {\bf future/past subalgebras} implies that the forward time evolution of these subalgebras from any initial time $t_1$ to final time $t_2$ is given by a unital CP map $\Phi_{t_1,t_2}$ (the Heisenberg picture of a quantum channel), namely restriction to a subalgebra. We have a one-parameter family $\Phi_{t_1,t_2}$ that satisfy the following semi-group property $\Phi_{t_1,s}\circ \Phi_{s,t_2}=\Phi_{t_1,t_2}$ for $t_1\leq s\leq t_2$. 
In other words, the forward time evolution of future subalgebras is Markovian (memoryless), and we have a semigroup of quantum channels.\footnote{In our case, time evolution has a time-independent generator $H$:, i.e. $\mA(s)=e^{iKs}\mA e^{-iKs}$. Therefore, the quantum channel $\Phi_{t_1,t_2}$ is only a function of $t_2-t_1$.} The assumption of the {\bf positivity} of the generator of the flow corresponds to the quantum analog of detailed balance. Even though the modular Hamiltonian is not positive, quantum detailed balance is always satisfied by modular flows. It follows from Theorem \ref{thm:HST} that modular dynamics can be generated by a flow with the positive generator: $G\sim K(\mA)-K(\mA(s))\geq 0$.

Finally, we would like to make a comment in connection with cosmological spacetimes. Consider an expanding universe with a (big bang) singularity at $t=0$. Let us denote the past domain of dependence of a co-moving at time $t$ by $W^{-}(t)$. Then it is clear that the algebra of observables in $W^{-}(t)$ are past subalgebras of the time evolution since $W^{-}(t_2) \supset W^{-}(t_1)$ for $t_2 > t_1$. Moreover, these spacetimes satisfy a second law in terms of the monotonic increase of the area of holographic screens \cite{Bousso:2015mqa, Bousso:2015qqa}. It will be interesting to investigate the connection between the second law, the observer-dependence of holographic screens \cite{Bousso:2016fia}, and the existence of past algebras in cosmological spacetimes.

\paragraph{Acknowledgments}

We thank Thomas Faulkner and Elliott Gesteau who pointed us to the literature on quantum Anosov systems. We also thank Yidong Chen, Hong Liu, and Sasha Zhiboedov for insightful conversations. SO is also thankful to Amit Vikram for pointing him to literature on ergodicity. The authors are grateful
to the DOE that supported this work through grant DE-SC0007884 and the QuantISED
Fermilab consortium.

\appendix

\section{From Wightman Axioms to Local Algebras}\label{app:Wightman}

In this appendix, we review the construction of local observable algebras from Wightman fields.

We start with Wightman axioms: 
\begin{enumerate}
\item {\bf Poincar\'e representation:} There is a Hilbert space $\mH$ and a unitary representation of the Poincar\'e group $U(\Lambda^\mu_\nu,a^\mu)$\footnote{Since the Poincar\'e group is non-compact this representation is infinite-dimensional.} with generator that satisfies $P_0\geq 0$ and $P_\mu P^\mu\geq 0$ and a unique Poincar\'e-invariant vector $\ket{\Omega}\in \mH$ called vacuum.
    \item {\bf Wightman fields:} Wightman fields $\varphi_a$ are operator-valued tempered distributions on spacetime, and the Hilbert space is spanned by the field polynomials acting on vacuum (cyclicity).
    \item {\bf Covariance:} Wightman fields transform covariantly under Poincar\'e transformations according to
    \begin{eqnarray}
    U(\Lambda^\mu_\nu,a^\mu)\varphi_a(x^\mu) U(\Lambda^\mu_\nu, a^\mu)^\dagger=[D(\Lambda^\mu_\nu)]^b_a\varphi_b(\Lambda^\mu_\nu x^\nu+a^\mu)
    \end{eqnarray}
    where $D(\Lambda^\mu_\nu)$ is a finite-dimensional representation of the Lorentz group. Note that we are assuming that fields have finite spin.

    \item {\bf Microscopic Causality:} The (anti-)commutator of Wightman fields vanishes when they are spacelike separated.
\end{enumerate}
Wightman axioms imply the axioms of local QFT in the following way:
Consider any bounded open region in spacetime $A$ and consider the subspace of all test functions supported on $A$. Given any such test function $f_A$ we associate the Wightman fields $\varphi_a(f_A)=\int_{x\in A} f_A(x) \varphi_a(x)$. This operator is inside the $*$-algebra generated by $\varphi_a$ therefore it has a densely defined adjoint and it can be closed. We define a von Neumann algebra associated to a particular field $\varphi_a$ and a particular function $f_A$ as the double commutant of the algebra generated by 
\begin{eqnarray}
    \mA_{a,f_A}=\{ \varphi_a(f_A)^\dagger, (\varphi_a(f_A)^\dagger)^\dagger\}''\ .
\end{eqnarray}
It is the smallest von Neumann algebra to which the closure of the $*$-algebra of $\varphi_a(f_A)$ is affiliated \cite{buchholz1987universal}. Finally, we define two von Neumann algebras for a region $A$, the minimal algebra $\underline{\mA}(A)$ and the maximal algebra $\overline{\mA}(A)$ as
\begin{eqnarray}
    &&\underline{\mA}(A)=\lb \vee_{a,f_B,B\subset A}\:\mA_{a,f_B}\rb''\nn\\
    &&\overline{\mA}(A)=\lb \wedge_{a,f_B,A\subset B}\:\mA_{a,f_B}\rb''\ .
\end{eqnarray} 
These von Neumann algebras both satisfy the {\it isotony} property
\begin{eqnarray}
    \forall A\subset B:\qquad \mA(A)\subset \mA(B)\ .
\end{eqnarray}
Since the support of a function is a closed set the algebra $\underline{\mA}(A)$ defined above are continuous from the inside that is to say for all sequence of inclusions $\{ A_1\subset A_2\subset \cdots\}$ with $\cup_i A_i=A$ we have
\begin{eqnarray}
    \underline{\mA}(A)=\vee_i \underline{\mA}(A_i)\ .
\end{eqnarray}
The minimal algebras are additive 
\begin{eqnarray}
    \underline{\mA}(A_1\vee A_2)=\left(\underline{\mA}(A_1)\vee \underline{\mA}(A_2)\right)''\,
\end{eqnarray}
whereas the maximal algebras $\overline{A}(A_i)$ satisfy Haag's duality:
\begin{eqnarray}
    \overline{\mA}(A')=\left(\overline{\mA}(A)\right)'\ .
\end{eqnarray}
Similarly, the set of algebras $\overline{\mA}(A)$ are continuous from the outside that is to say for all sequences of inclusions $\{A_1\supset A_2\supset \cdots\}$ with $\cap_i A_i=A$ we have
\begin{eqnarray}
    \overline{\mA}(A)=\wedge_i \overline{\mA}(A_i)\ .
\end{eqnarray}
For $A$ that are causal developments of balls or wedges, Haag's duality is believed to hold \cite{witten2018aps} meaning that $\underline{\mA}(A)=\overline{\mA}(A)$. We will focus on this case. We also find that the algebras transform covariantly under Poincar\'e transformations
\begin{eqnarray}
    U(\Lambda^\mu_\nu, a^\mu)\mA(A)U(\Lambda^\mu_\nu, a^\mu)^\dagger= \mA( (\Lambda_\nu^\mu,a^\mu) A)\ .
\end{eqnarray}
Every operator in $a\in\mA(A)$ can be canonically decomposed into a Bose part $a_+$ and a Fermi part $a_-$. The observable algebra of $A$ is the von Neumann algebra generated above using bosonic generators ($a_+$ and even powers of $a_-$). Since the Wightman fields are unbounded, we need to assume some regularity conditions to ensure that microscopic causality implies that the observable algebras of spacelike separated regions $A$ and $B$ commute:\footnote{See \cite{driessler1986connection} for a discussion of such regularity conditions.}
\begin{eqnarray}
    [\mA(A),\mA(B)]=0\ .
\end{eqnarray}

We can now collect all the properties satisfied by the algebras $\mA(A)$ and define the axioms of {\it algebraic QFT}. To every bounded open region of spacetime $A$ we associate a C$^*$-algebra $\mA(A)$ that satisfies the following axioms \cite{hollands2018entanglement}:
\begin{enumerate}
    \item {\bf Isotony:} If $A\subset B$ then $\mA(A)\subset \mA(B)$.
    \item {\bf Causality:} If $A$ and $B$ are spacelike separated, we have $[\mA(A),\mA(B)]=0$.
    \item {\bf Vacuum:} There is a unique state $\ket{\Omega}$ that is invariant under the covering of the Poincar\'e group\footnote{We need the covering to describe half-integer spin particles.}  called the vacuum and the GNS representation of the algebra $\mA(A)$ on the vacuum generates the Hilbert space $\mH$. The Poincar\'e transformations are represented using unitary operators $U(\Lambda^\mu_\nu,a^\mu)$ and the spectrum of the generator satisfies : $P^0\geq0$ and $P_\mu P^\mu\geq0$.
    \item {\bf Relativistic Covariance:} For any covering of the Poincar\'e group, the algebra $\mA(A)$ transforms covariantly: $ U(\Lambda^\mu_\nu, a^\mu)\mA(A)U(\Lambda^\mu_\nu, a^\mu)^\dagger= \mA( (\Lambda_\nu^\mu,a^\mu) A)\ .$
\end{enumerate}
In order to define states that are localized in a region, some authors also consider an additional axiom:
\begin{enumerate}
\setcounter{enumi}{4}
    \item {\bf Buchholz-Wichmann Nuclearity:} For an operator $a\in\mA(A)$, the representation $\pi(a) = e^{-\beta P_0}a\ket{\Omega}$ on the GNS Hilbert space satisfies $\|\pi\|_1 \leq e^{(c/\beta)^n}$ for some $n,c>0$.
    
\end{enumerate}

\section{Classical Indepedence and Correlations}
\label{app:correlation}

Consider the example of a glass $R$ of water and a drop of ink in section \ref{sec:ergodicity}.
In statistical physics, measurable functions on $R$ such as the density of ink molecules are observables, and the state at any time associates (expectation) values to observables.\footnote{In the language of probability theory, observables are random variables and states are probability distributions on $R$. Continuous complex functions on $R$ form a C$^*$-algebra, whereas measurable functions with respect to a measure $d\mu$ form a von Neumann algebra of observables.} An observer who has access only to a corner of the glass $A\subset R$ has access to a subalgebra of observables supported in this region. The indicator function on region $A$: 
\begin{eqnarray}
    1_A(x)=\begin{cases}
    1 & \text{if }  x\in A\\
    0 & \text{if }  x\notin A
    \end{cases}
\end{eqnarray}
is called a {\it conditional expectation} because it projects the algebra of all observables in $R$ to the subalgebra of region $A$. If two subregions $A$, $B$ do not overlap we say they have {\it independent subalgebras} because they satisfy:
\begin{eqnarray}
    1_A 1_B=1_{\emptyset}\ .
\end{eqnarray}
Arbitrary regions depend on each other through their overlap
\begin{eqnarray}
    1_A1_B=1_{A\cap B}\ .
\end{eqnarray}
If some other region $C$ satisfies $A\cap C=B\cap C=A\cap B$, intuitively, we say that $A$ and $B$ are {\it independently conditioned on $C$}, and we have (see Figure \ref{fig:conditional})\footnote{We warn the reader that the {\it independence/conditional independence of the subalgebras} is not the same as the statistical independence/conditional independence of two subalgebras in a particular measure (unnormalized state) that we will introduce in the following. For example, for independent subalgebras $\mA_1$ and $\mA_2$, any state $\mu$ with some connected correlators $\mu(a_1,a_2)\neq \mu(a_1)\mu(a_2)$ is not independent.
}
\begin{eqnarray}
    1_{A}1_{B\cup C}=1_A 1_C = 1_B 1_C\ .
\end{eqnarray}

\begin{figure}[t]
    \centering
    \includegraphics[width=0.6\linewidth]{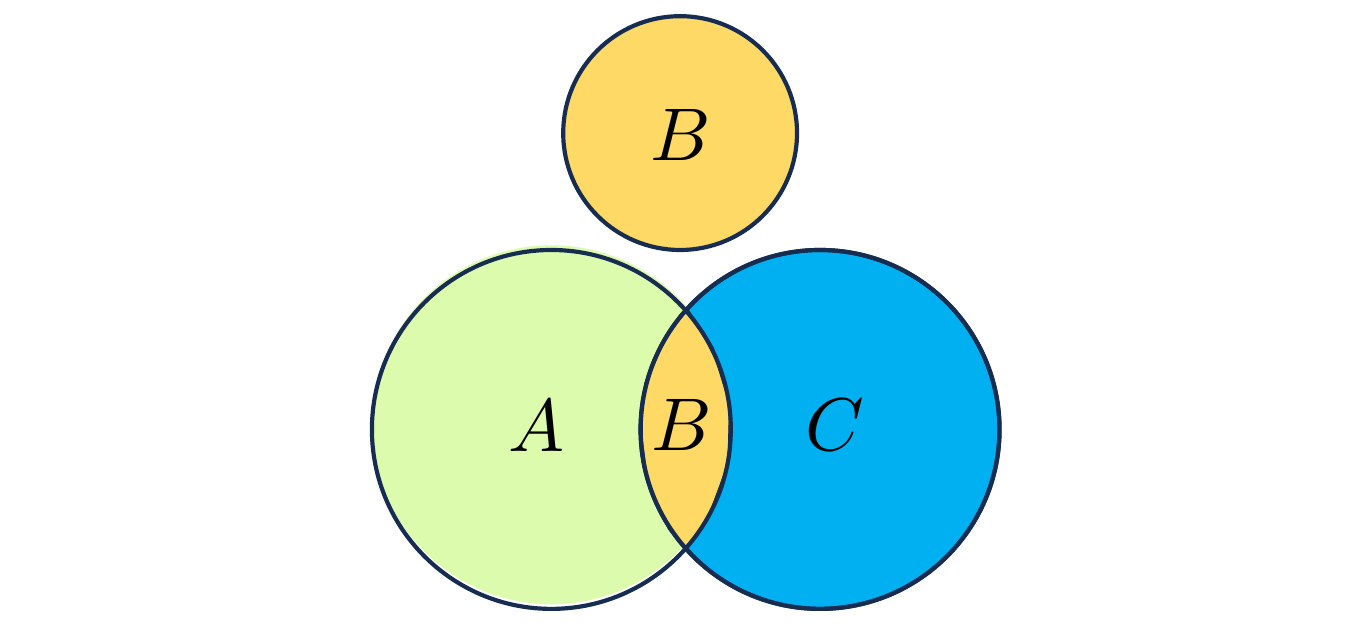}
    \caption{\small{An example of $A$ and $B$ being independently conditioned on $C$. $A$ and $C$ are the two larger circles. $B$ is the smaller circle with the overlapping region between $A$ and $C$. The overlapping regions between any two of them are the same.}}
    \label{fig:conditional}
\end{figure}

%We say a countable set of non-overlapping regions $\{A_i\}$ (subalgebras) form a {\it partition of $L$} if $\cup_i A_i=L$. In a partition, the subalgebras are pairwise independent.\footnote{Abstractly, one can say that Abelian C$^*$-algebra comes with an intrinsic notion of isotony: an association of subalgebras to subregions a manifold such that $\mA(O_1)\subset \mA(O_2)$ if $O_1\subset O_2$, and $\mR=\overline{\cup_O \mA(O)}$.} 

In the Heisenberg picture, instead of evolving the state, we evolve observables backward in time, and at time $t$ the ink molecules are spread over the subregion $A_{-t}:=T_{-t}(A)$.\footnote{In the appendices, we focus on $T_{-t}$ instead of $T_t$. This has advantages in a more general setup where $T$ may not be a flow or an automorphism, for example, an endomorphism.} If $A_{-t}\cap B\neq \emptyset$ for some $t$ it means that some ink molecules have made it to the subregion $B$, and the volume of $A_{-t}\cap B$ is a measure of the percentage of the initial amount of ink that made it to $B$:
\begin{eqnarray}
\frac{\vol(A_{-t}\cap B)}{\vol(A)}\ .
\end{eqnarray}
The unique equilibrium state is a uniform distribution, which means the probability that an ink droplet that was added in the far past is found in any other subregion $B$ is proportional to the volume of $B$:
\begin{eqnarray}
    \lim_{t\to \infty}\vol(A_{-t}\cap B)\sim\vol(A)\vol(B)\ .
\end{eqnarray}

For some state $\mu$, let us normalize $\mu(R)=1$ and $\mu(A)=\frac{\vol(A)}{\vol(R)}$ such that $\mu(A)$ is a measure of the relative size of $A$ to $R$. We find that the proportionality constant above is fixed to give
\begin{eqnarray}
    \lim_{t\to \infty}\mu(A_{-t}\cap B)=\mu(A)\mu(B)\ .
\end{eqnarray}

To keep track of how the system mixes, we use the {\it correlation} measure 
\begin{eqnarray}\label{strongmixing2}
    C(A:B_t)=\mu(A\cap B_{-t})-\mu(A)\mu(B_{-t})=\mu(A\cap B_{-t})-\mu(A)\mu(B)
\end{eqnarray}
where we have used the fact that $\mu$ is invariant under time evolution.\footnote{In other words, we are assuming that the transformation is measure-preserving. In classical Hamiltonian systems, this is guaranteed by the Liouville theorem.} 
% It vanishes if and only if the subregions (events) $A$ and $B$ are {\it statistically independent with respect to $\mu$}:
% \begin{eqnarray}
%     C(A:B)=0 \iff \mu(A\cap B)=\mu(A)\mu(B)\ .
% \end{eqnarray}
% Two events $A$ and $C$ might be correlated but only through a mediating event called $B$. This occurs if
% \begin{eqnarray}
%     \mu(A|B\cup C)=\mu(A|B) \,
% \end{eqnarray}
% and we say that $A$ and $C$ are {\it conditionally independent}.
% Note that 
% \begin{eqnarray}
%     \mu(A|B):=\frac{\mu(A\cap B)}{\mu(A)}
% \end{eqnarray}
% has the interpretation of conditional probability. 
Note that 
\begin{eqnarray}
    \mu(A|B):=\frac{\mu(A\cap B)}{\mu(A)}
\end{eqnarray}
has the interpretation of conditional probability. It vanishes if and only if the subregions (events) $A$ and $B$
 are statistically independent with respect to measure $\mu$. In other words, $A$ and $B$ are uncorrelated in state $\mu$:
\begin{eqnarray}
    C(A:B)=0\Rightarrow \mu(A\cap B)=\mu(A)\mu(B)\ .
\end{eqnarray}
Two events $A$ and $C$ might be correlated but only through a mediating event called $B$. This occurs if
\begin{eqnarray}
    \mu(A|B\cup C)=\mu(A|B) \,
\end{eqnarray}
and we say that $A$ and $C$ are {\it independent conditioned on $B$}.

\section{Classical Dynamical Systems}
\label{app:classical_dynamics}

In a classical dynamical system, the state of a system is a point in some configuration space $X$, a state is a measure $\mu$ on $X$, and the observables are bounded measurable functions on $X$: $L^\infty(X,\mu)$. A state is a positive continuous map from the observables $L^\infty(X,\mu)$ to complex numbers (expectation values). We assume $X$ is a locally compact manifold so that we have an invariant volume (Haar measure $\mu$) on $X$. Dynamics is given by a discrete or continuous measurable transformation $T_g:X\to X$ where $g\in G$ some group (or monoid if the dynamics is not invertible).\footnote{Ergodic theory on a measure space $X$ is often formulated such that the theorems are true almost everywhere except for a set of measure zero points. This might lead to unwelcome pathologies. Thus we work with equivalence classes of the $\sigma$-algebras which are equivalent up to sets of measure zero. This ensures that only the zero operator has measure 0 and only the identity operator has normalized measure $1$.} In this work, we will always consider $G=\mathbb{R}$ and $T_t$ is a continuous flow on $X$. 
We say that it the dynamics is measure-preserving if it preserves the Haar measure on $X$: $\mu\circ T_t=\mu$.

\subsection{Dynamical systems of classical physics}

In classical physics, the configuration space is the phase space (an even-dimensional symplectic manifold $X$). The symplectic form endows the phase space with a volume form that defines a measure $\mu$. The observables are bounded complex functions on the phase space, i.e. $L^\infty(X,d\mu)$, and the expectation values in the state $d\mu$ are
\begin{eqnarray}
    \mathbb{E}(f)=\int_{x\in X} d\mu(x) f(x)\ .
\end{eqnarray}
Normal states are absolutely continuous measures with respect to $d\mu$. Radon-Nikodym derivative connects normal states and functions in $L^1(X,d\mu)$. For example, consider a theory of $n$ particles on a line with generalized coordinates $\{q_i, p_i\}$ with $i=1,\cdots, n$. The phase space is $(\vec{q},\vec{p})\in \mathbb{R}^{2n}=X$, and the symplectic form is $\omega=\sum_i dq_i\wedge dp_i$. The volume form is $\text{vol}=\omega^n$ which gives the expectation values 
\begin{eqnarray}
    \mathbb{E}(f)=\int f(p_i,q_i)dp_1\cdots dp_n dq_1\cdots dq_n\ .
\end{eqnarray}
It corresponds to the Haar measure on the phase space. The observables are complex continuous functions of $p_i$ and $q_i$, for example, the Hamiltonian $h(\vec{q},\vec{p})=\frac{
1}{2}\sum_i (p_i^2+q_i^2+V(p_i,q_i))$ is an observable. More generally, if the particles live on the configuration space $Y$ spanned by $q_i$, then phase space is the cotangent bundle $X=T^* Y$. 

Dynamics is described by a continuous flow in phase space: $T_t:X\to X$ that evolves observables. We often choose the flow so that it preserves the measure $d\mu$. This means that the state corresponding to $d\mu$ is invariant under the flow (it is in equilibrium). In Hamiltonian dynamics of $n$ particles, the dynamics is defined using a distinguished real function on the phase space, namely the Hamiltonian $h:X\to \mathbb{R}$. Using the symplectic two form, the one-form $dh$ can be identified with the Hamiltonian vector field $H=\vec{v}_h$ on $X$:
\begin{eqnarray}
    &&dh=\sum_i\lb \frac{\p h}{\p q_i} dq_i+\frac{\p h}{\p p_i} dp_i\rb\nn\\
    &&H=\vec{v}_h=\sum_i\lb \frac{\p h}{\p p_i} \frac{\p}{\p q_i}- \frac{\p h}{\p q_i} \frac{\p}{\p p_i}\rb\ .
\end{eqnarray}
Hamilton's equations of motion set 
\begin{eqnarray}
    &&\frac{\p h}{\p p_i}=\frac{dq_i}{dt}\nn\\
    &&\frac{\p h}{\p q_i}=-\frac{dp_i}{dt}\ .
\end{eqnarray}
Therefore, Hamilton's equations of motion will ensure that $H\sim \frac{d}{dt}$. The integral curves of the Hamiltonian vector field solve the Hamilton's equations of motion keeping the energy constant (conservation of energy). We find
\begin{eqnarray}
    &&\frac{df}{dt}=(df)(\vec{v}_h)=\{f,h\}\nn\\
    &&\{f,g\}=(df)(\vec{v}_g)=\omega(\vec{v}_f,\vec{v}_g)\nn\\
    &&\vec{v}_f(\cdot)=\omega(\cdot, df)\ .
\end{eqnarray}
In this formalism, the Lie derivative of $\omega$ is preserved along the Hamiltonian vector field, therefore the flow preserves the volume form $\omega^n$. This is Liouville's theorem. 
More generally, since we have 
\begin{eqnarray}
    \frac{df(h)}{dt}=\{f(h),h\}=0
\end{eqnarray}
any state that corresponds to $f(h)\omega^n$ is also invariant. 
Conserved charges correspond to the functions $q:X\to \mathbb{C}$ that are invariant under time evolution 
\begin{eqnarray}
    \frac{dq}{dt}=\{q,h\}=0\ .
\end{eqnarray}
The algebra of functions on $X$ with the Poisson bracket forms a Poisson algebra, which is a Lie algebra with respect to the Poisson bracket and satisfies the Leibniz rule for any three observables $f,g,h$
\begin{eqnarray}
    \{f, g h\}=g\{f,h\}+h\{f,g\}\ .
\end{eqnarray}
The Lie bracket of the vector fields is fixed by the Poisson bracket of their corresponding functions
\begin{eqnarray}
    [\vec{v}_f,\vec{v}_g]=-\vec{v}_{\{f,g\}}\ .
\end{eqnarray}
Consider a Hamiltonian that has the form 
\begin{eqnarray}\label{Hamilgedo}
    h(\vec{p},\vec{q})=\sum_i \gamma^{ij}(\vec{q})p_i p_j\ .
\end{eqnarray}
It describes the motion of a free particle on the space $Y$ spanned by $\vec{q}$.
Here, $\gamma^{ij}$ is the inverse of the metric tensor on $Y$. The Hamilton equations of motion are
\begin{eqnarray}
    &&\frac{dq^i}{dt}=\gamma^{ij}(\vec{q})p_j=\frac{\p h}{\p p_i}\nn\\
    &&\frac{dp^i}{dt}=-\frac{1}{2}\frac{\p \gamma^{jk}}{\p q_i}p_j p_k=-\frac{\p h}{\p q_i}\ .
\end{eqnarray}
which are the same as the geodesic equation
\begin{eqnarray}
    \frac{d^2q^i}{dt^2}+\Gamma^i_{jk}q^j q^k=0\ .
\end{eqnarray}
%In this case, they become
Therefore, the dynamics of free particles on space $Y$ is the same as the geodesic flow on this space.

\subsection{Some key results in classical ergodicity}

A dynamical flow in a measure space induces a flow on the functions (observables) of the measure space.
Consider an observable $f \in L^\infty(X,\mu)$. The dynamics flows $f_t(x) = f(T_t x)$ for $x\in X$. We define a linear map $U:L^\infty(X,\mu) \to L^\infty(X,\mu)$. For measure-preserving flows, the $L^1$-norm is invariant under $U$: $\|Uf\|_1 = \|f\|_1$. Thus $U$ is an isometry in $L^1(X,\mu)$ and in $L^2(X,\mu)$. Assuming that the dynamics is invertible,  $U$ is an invertible isometry on the Hilbert space $L^2(X,\mu)$, otherwise known as a unitary operator \cite{koopman1931hamiltonian,halmos1956ergodic}. To simplify our notation, we suppress the measure $\mu$ in our vector spaces $L^p(X,\mu)$.

\begin{theorem}[Von Neumann ergodic theorem]
If $U$ is an isometry on a complex Hilbert space and P is the projection on to the space of all vectors invariant under $U$, then $\frac{1}{T}\int_0^T U_t f \,dt$ converges to $Pf$.
\end{theorem}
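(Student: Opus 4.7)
The plan is to prove the theorem by an orthogonal decomposition of the Hilbert space into the subspace of invariant vectors and its orthogonal complement, then show that the time average acts as the identity on the first and vanishes on the second. Concretely, I would split $f = Pf + (f - Pf)$, handle each piece separately, and assemble the result.

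First I would treat the invariant subspace. If $U_t f = f$ for all $t$, then $\frac{1}{T}\int_0^T U_t f\, dt = f = Pf$, so convergence is automatic. The whole nontrivial content is on $(\mathrm{Ran}\, P)^\perp$, where I would aim to show the time averages tend to zero. To this end, I would identify $(\mathrm{Ran}\, P)^\perp$ with the closed linear span $M$ of coboundaries $\{U_s g - g : g \in \mH,\ s \in \mbR\}$. A vector $h$ is orthogonal to every $U_s g - g$ iff $U_s^* h = h$ for all $s$. The isometry property $\|U_s h\|^2 = \|h\|^2 = \langle U_s^* U_s h, h\rangle$ combined with $U_s^* h = h$ forces $U_s h = h$, so $(\mathrm{Ran}\, P)^\perp \subset M$, and the reverse inclusion is immediate.

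For a coboundary $f = U_s g - g$ the time average telescopes:
\begin{eqnarray}
\frac{1}{T}\int_0^T U_t(U_s g - g)\, dt = \frac{1}{T}\left(\int_T^{T+s} U_t g\, dt - \int_0^s U_t g\, dt\right),
\end{eqnarray}
whose norm is bounded by $\frac{2|s|\,\|g\|}{T}$ and hence vanishes as $T \to \infty$. Since the operators $A_T := \frac{1}{T}\int_0^T U_t\, dt$ are contractions (by the isometry of $U_t$ and the triangle inequality for Bochner integrals), convergence of $A_T$ to $0$ on the dense set of coboundaries in $M$ extends to all of $M$ by a standard $\epsilon/3$ argument. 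Combining the two pieces gives $A_T f \to Pf$ for every $f \in \mH$.

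The main obstacle I anticipate is the orthogonality characterization in the second step: because $U$ is only assumed to be an isometry rather than unitary, $U^*$ is not $U^{-1}$, and one must use the isometry identity $U_s^* U_s = \mathbbm{1}$ explicitly (together with strong measurability of $t \mapsto U_t$ needed to make sense of the Bochner integral) to pass from $U_s^* h = h$ to $U_s h = h$. Once that identification is in hand, the telescoping computation and the density/contraction argument are routine.
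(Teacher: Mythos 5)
Your proof is correct and is essentially the classical Riesz-style argument (decomposition into invariant vectors plus the closed span of coboundaries, telescoping on coboundaries, then density plus uniform contractivity) that the paper defers to by citing Halmos rather than proving itself. In particular you correctly handle the only delicate point for non-unitary isometries, namely passing from $U_s^* h = h$ to $U_s h = h$ via $U_s^* U_s = \mathbbm{1}$, so nothing further is needed.
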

\begin{theorem}[Birkhoff's ergodic theorem]
If $T$ is a measure-preserving (but not necessarily invertible) transformation on a space $X$ (with possibly infinite measure) and if $f\in L^1(X)$, then $\frac{1}{T}\int_0^T dt\, f(T_tx)$ converges almost everywhere. The limit function $f^*$ is integrable and invariant almost everywhere. Moreover, if $\mu(X)<\infty$, then $\int_X f(x) dx = \int_X f^*(x) dx$.
\end{theorem}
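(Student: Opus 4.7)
My plan follows the standard approach originating with Birkhoff, streamlined via the maximal ergodic theorem of Wiener/Hopf/Yosida--Kakutani. I will focus on the discrete-time statement (the continuous-time case reduces to it by Fubini applied to the function $\tilde f(x) = \int_0^1 f(T_s x)\, ds$, since continuity of the flow gives $A_T f(x) - A_N \tilde f(x) \to 0$ along integer $N = \lfloor T \rfloor$). So set $S_n f(x) = \sum_{k=0}^{n-1} f(T_k x)$ and $A_n f(x) = S_n f(x)/n$, and define
\begin{equation*}
\bar f(x) = \limsup_{n \to \infty} A_n f(x), \qquad \underline f(x) = \liminf_{n \to \infty} A_n f(x).
\end{equation*}
Since $A_{n+1} f(x) = \tfrac{n}{n+1} A_n f(T x) + \tfrac{1}{n+1} f(x)$, both $\bar f$ and $\underline f$ are $T$-invariant almost everywhere. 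The theorem reduces to showing $\bar f = \underline f$ a.e.\ (which forces the common value $f^*$ to be measurable and invariant), plus the integral identity in the finite-measure case.

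The technical heart is the \textbf{maximal ergodic theorem}: writing $M_n f = \max_{1 \le k \le n} S_k f$ and $E = \{x : \sup_n M_n f(x) > 0\}$, one has $\int_E f \, d\mu \ge 0$. The slick proof (Garsia) sets $M_n^+ = \max(0, M_n f)$ and observes $M_n^+ \circ T + f \ge S_k f$ for every $1 \le k \le n$, hence $M_n^+ \circ T + f \ge M_n f$; integrating on $E_n = \{M_n f > 0\}$ and using measure-preservation gives $\int_{E_n} f \, d\mu \ge \int_{E_n} (M_n f - M_n^+ \circ T) d\mu \ge 0$, and $E_n \nearrow E$ yields the claim by dominated convergence (valid since $f \in L^1$).

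From here I would run the standard double-rationals argument. For $\beta < \alpha$ rationals, consider the invariant set $D_{\alpha,\beta} = \{\underline f < \beta < \alpha < \bar f\}$. Applied to $(f - \alpha) \mathbf{1}_{D_{\alpha,\beta}}$ and to $(\beta - f)\mathbf{1}_{D_{\alpha,\beta}}$ — using invariance of $D_{\alpha,\beta}$ so that the maximal theorem can be localised to it — one obtains
\begin{equation*}
\alpha \mu(D_{\alpha,\beta}) \le \int_{D_{\alpha,\beta}} f\, d\mu \le \beta \mu(D_{\alpha,\beta}),
\end{equation*}
forcing $\mu(D_{\alpha,\beta}) = 0$ (when $\mu(D_{\alpha,\beta}) < \infty$; the $\sigma$-finite case is handled by exhausting with invariant sets of finite measure, which is where possibly-infinite $\mu(X)$ enters as a mild complication). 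Taking the countable union over rational pairs gives $\bar f = \underline f$ a.e., which is the pointwise convergence assertion.

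Integrability of $f^*$ follows from Fatou's lemma applied to $|A_n f| \le A_n |f|$, since $\int A_n |f|\, d\mu \le \int |f|\, d\mu$ by measure-preservation. For the final integral identity under $\mu(X) < \infty$, the cleanest route is to upgrade a.e.\ convergence to $L^1$-convergence: truncate $f$ as $f_N = f \mathbf{1}_{|f| \le N}$, note $A_n f_N \to f_N^*$ boundedly so dominated convergence gives $\int A_n f_N \to \int f_N^*$, and control the tail uniformly in $n$ via $\|A_n f - A_n f_N\|_1 \le \|f - f_N\|_1 \to 0$. Since $\int A_n g\, d\mu = \int g\, d\mu$ for every $g \in L^1$, passing to the limit gives $\int f^* d\mu = \int f\, d\mu$. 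The main obstacle throughout is establishing the maximal inequality cleanly; once it is in hand, everything else is soft measure theory.
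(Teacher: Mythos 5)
Your proof is correct and follows the standard route via the maximal ergodic theorem (with Garsia's argument) and the double-rationals trick, which is precisely the proof in Halmos's \emph{Lectures on Ergodic Theory} that the paper cites in lieu of giving its own argument. The minor points you flag (the $\sigma$-finite exhaustion, the continuous-to-discrete reduction) are handled correctly, so there is nothing to add.
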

The proofs of the above two theorems are standard and can be found in \cite{halmos1956ergodic}.
The following two Lemmas define ergodicity in the case of finite and infinite measure spaces:
\begin{lemma}[Ergodicity in a finite measure space]
For a measure-preserving dynamical system on a finite measure space, the following are equivalent: 
\begin{enumerate}
\item The system is ergodic (or irreducible): if $A$ is a subregion invariant under the flow, then $\mu(A)=0$ or $\mu(X-A)=0$.
%the only subregions of $X$ invariant under the flow are $\{X,\emptyset\}$.
\item For an integrable function $f\in L^1(X)$, the ergodic limit of the flow projects to a constant function
\begin{eqnarray}
    f^* = \lim_{T\to \infty}\frac{1}{T}\int_0^T dt\, U_tf = \frac{\braket{f}_\mu}{\mu(X)}\ .
\end{eqnarray}
\item For two functions $f,g\in L^1(X)$, the two point function clusters under the ergodic limit
\begin{eqnarray}
    \lim_{T\to\infty} \frac{1}{T}\int_0^T dt \braket{U_tfg}_\mu = \frac{\braket{f}_\mu \braket{g}_\mu}{\mu(X)}\ .
\end{eqnarray}
\item In the ergodic limit, any two subregions $A$ and $B$ become stochastically independent
\begin{eqnarray}
    \lim_{T\to\infty} \frac{1}{T}\int_0^T dt\,\mu(T_{-t}A\cap B) =\frac{\mu(A)\mu(B)}{\mu(X)}\ .
\end{eqnarray}
\item It is metrically transitive: For any two non-trivial subregions $A$ and $B$, there exits some $t>0$ such that $\mu(T_{-t}A\cap B)>0$.
\end{enumerate}    
\end{lemma}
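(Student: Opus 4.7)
The plan is to establish the cyclic implications $(1)\Rightarrow(2)\Rightarrow(3)\Rightarrow(4)\Rightarrow(5)\Rightarrow(1)$, which by transitivity gives the equivalence of all five conditions. I will lean on Birkhoff's ergodic theorem (stated just before the lemma) to handle the pointwise time averages for $L^1$ observables, since the von Neumann theorem alone would give $L^2$ convergence, which is weaker than the $L^1$ statement in (2).

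For $(1)\Rightarrow(2)$, Birkhoff's theorem guarantees that the time average $f^*=\lim_{T\to\infty}\frac{1}{T}\int_0^T dt\, U_t f$ exists almost everywhere, lies in $L^1(X)$, and satisfies $\int_X f^*\,d\mu=\int_X f\,d\mu$. The step I then need is that $f^*$ is $T_t$-invariant almost everywhere (straightforward from the definition and the $L^1$ isometry property of $U_t$) and that under (1) every $T_t$-invariant function is almost everywhere constant. I obtain the latter by applying ergodicity to the level sets $\{x:f^*(x)>c\}$, which are invariant modulo null sets; each such set therefore has measure $0$ or $\mu(X)$, forcing $f^*$ to be constant a.e. The constant is pinned down to $\braket{f}_\mu/\mu(X)$ by the integral identity. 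For $(2)\Rightarrow(3)$, I multiply the pointwise statement by $g$ (taking $g$ bounded so the product is integrable; the case of general $g\in L^1$ follows by density since $\mu(X)<\infty$) and pull the limit inside the integral by dominated convergence, with the uniform bound coming from $|U_t f|\le \|f\|_\infty$ in the bounded case or a truncation argument otherwise.

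The implication $(3)\Rightarrow(4)$ is obtained by specializing to $f=1_A$ and $g=1_B$, using $U_t 1_A=1_{T_{-t}A}$ so that $\braket{U_t f\, g}_\mu = \mu(T_{-t}A\cap B)$ and $\braket{f}_\mu=\mu(A)$, $\braket{g}_\mu=\mu(B)$. For $(4)\Rightarrow(5)$, if $\mu(A),\mu(B)>0$ then the limit in (4) is strictly positive, so the time-averaged measure $\mu(T_{-t}A\cap B)$ cannot vanish for almost every $t$; in particular there exists some $t>0$ with $\mu(T_{-t}A\cap B)>0$. Finally, for $(5)\Rightarrow(1)$ I argue by contradiction: if some invariant set $A$ had $0<\mu(A)<\mu(X)$, then $B=X\setminus A$ would also have positive measure, and invariance gives $T_{-t}A=A$ and $A\cap B=\emptyset$, so $\mu(T_{-t}A\cap B)=0$ for every $t$, contradicting metric transitivity.

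The principal obstacle is the step $(1)\Rightarrow(2)$, specifically the passage from ``$f^*$ is invariant up to null sets for each $t$'' to ``$f^*$ is genuinely constant almost everywhere'' — one needs to choose a measurable representative that is strictly invariant so that the level sets are honest $T_t$-invariant measurable sets to which (1) can be applied. A minor technical nuisance is that statement (3), taken literally for $f,g\in L^1(X)$, needs $fg\in L^1$ to make sense of $\braket{U_t f g}_\mu$; the cleanest way to avoid this is to interpret (3) for $f\in L^1$ and $g\in L^\infty$ (both of which are dense in $L^1$ under the finite-measure assumption), which is more than sufficient for the cycle since only the case of bounded indicator functions is used in $(3)\Rightarrow(4)$.
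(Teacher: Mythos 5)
Your proof is correct and follows essentially the same route as the paper's: the cyclic chain $(1)\Rightarrow(2)\Rightarrow(3)\Rightarrow(4)\Rightarrow(5)\Rightarrow(1)$ using Birkhoff's theorem for the first step, dominated convergence for the second, indicator functions for the third, positivity of the time average for the fourth, and contradiction via the invariant complement for the last. The extra care you take with level sets of $f^*$ and with the integrability of $fg$ in statement (3) fills in details the paper glosses over, but the argument is the same.
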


\begin{proof}
{\bf (1$\to$2)} Using Birkhoff's ergodic theorem, the ergodic limit of the flow $f^*$ exists and is an integrable function that is invariant under the flow. Since the flow is ergodic, the only functions that are invariant are constant functions. In the case of finite measure space, Birkhoff's theorem also says that $\int_X dx\, f^*(x) = \int_Xdx\, f(x)$. Thus the constant has to be equal to $\frac{\braket{f}_\mu}{\mu(X)}$.

{\bf (2$\to$3)} Using Fubini's theorem and dominated convergence
\begin{eqnarray}
    \lim_{T\to\infty} \frac{1}{T}\int_0^T dt \braket{U_tfg}_\mu =  \left\langle\lim_{T\to\infty} \frac{1}{T}\int_0^T dt U_tfg\right\rangle_\mu =\frac{\braket{f}_\mu \braket{g}_\mu}{\mu(X)}\ .
\end{eqnarray}

{\bf (3$\to$4)} Let us choose $f$ and $g$ to be the indicator functions $f=1_A$ and $g= 1_B$. Then $\braket{1_A}_\mu = \mu(A)$ and $\braket{1_B}_\mu = \mu(B)$. On the left hand side, the product of the functions $1_A(T_tx)1_B(x)$ is non-zero only when there are points in $B$ which flow to $A$ after some time $t$. Thus
\begin{eqnarray}
    \braket{1_A(T_tx)1_B(x)}_\mu = \mu(A\cap T_tB) = \mu(T_{-t}A\cap B)\ .
\end{eqnarray}

{\bf (4$\to$5)} The average of the integral over $\mu(T_{-t}A\cap B)$ is positive since for any two non-trivial subregions $A, B$, the measure $\mu(A),\mu(B)>0$. This implies that for some $t>0$, we have $\mu(T_{-t}A\cap B)>0$.

{\bf (5$\to$1)} Suppose there exists a non-trivial subregion $A$ that is invariant under the ergodic flow. Then the complement of the region $A'$ is also invariant. Thus $\mu(T_{-t}A\cap A') = \mu(A\cap A')=0$ which is a contradiction.
\end{proof}

In physics, we are often interested in systems with a non-compact phase-space $X$ and a probability distribution $\omega(x)$ (a state) over the phase space. Thus, we can consider the weighted measurement space $(X,\omega)$ and a corresponding state $\ket{\omega^{1/2}}$ such that for any function $f\in L^\infty(X,\omega)$, we have $\braket{f}_\omega = \int_X dx\,\omega(x)f(x)$. For a $\omega$-preserving flow, $U_t^\dagger \ket{\omega^{1/2}} = \ket{\omega^{1/2}}$. Therefore, $\|U_t\|_{2,\omega}=1$ and is an isometry in the Hilbert space $L^2(X,\omega)$. Since it is also invertible, it is a unitary.

\begin{lemma}[Ergodicity in an infinite measure space]
For a faithful state $\omega$ and $\omega$-preserving dynamical system on an infinite measure space, the following are equivalent:
\begin{enumerate}
\item The system is ergodic (or irreducible): if $A$ is a subregion invariant under the flow, then $\omega(A)=0$ or $\omega(X-A)=0$.
%the only subregions of $X$ invariant under the flow are $\{X,\emptyset\}$.
\item For observables $f\in L^\infty(X,\omega)$, the ergodic limit of the flow projects to a constant function
\begin{eqnarray}
    \lim_{T\to \infty}\frac{1}{T}\int_0^T dt\, (U_tf) = \braket{f}_\omega\ .
\end{eqnarray}
\item For two observables $f,g\in L^\infty(X,\omega)$, the time-averaged point function connected correlators vanish
\begin{eqnarray}
    \lim_{T\to\infty} \frac{1}{T}\int_0^T dt \braket{(U_tf)g}_\omega = \braket{f}_\omega\braket{g}_\omega\ .
\end{eqnarray}
\end{enumerate}    
\end{lemma}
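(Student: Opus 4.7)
The plan is to mirror the structure of the finite-measure proof, exploiting the fact that although the underlying space may be infinite, $\omega$ is a faithful probability state so that $L^2(X,\omega)$ is a genuine Hilbert space on which the Koopman operator $U_t$ is unitary (since $U_t$ preserves $\omega$ and is invertible). Observables in $L^\infty(X,\omega)$ automatically embed into $L^2(X,\omega)$ because $\omega$ is finite, which sidesteps the usual technical pathologies of ergodic theory on infinite measure spaces.

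First I would prove (1 $\to$ 2). By von Neumann's ergodic theorem applied to the unitary $U_t$ on $L^2(X,\omega)$, the time average $\frac{1}{T}\int_0^T dt\, U_t f$ converges in $L^2(X,\omega)$ to $Pf$, the orthogonal projection of $f$ onto the subspace $\mH_0\subset L^2(X,\omega)$ of $U_t$-invariant vectors. Ergodicity (in the sense that $\omega$-a.e.\ invariant sets are trivial) is equivalent to $\mH_0 = \mathbb{C}\cdot 1$, so $Pf$ is a constant $c_f$. To identify the constant, pair both sides against $1 \in L^2(X,\omega)$: because $\omega$ is $U_t$-invariant, $\braket{U_t f}_\omega = \braket{f}_\omega$ for all $t$, and the left-hand side of the time-averaged equation integrates to $\braket{f}_\omega$, forcing $c_f = \braket{f}_\omega$.

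Next (2 $\to$ 3). Given $f,g\in L^\infty(X,\omega)$, multiply the limit in (2) by $g$ and integrate against $\omega$. The family $\frac{1}{T}\int_0^T dt\, U_t f$ is uniformly bounded in $L^\infty$ norm by $\|f\|_\infty$, so by the dominated convergence theorem (or Fubini plus $L^2$ convergence together with $\|g\|_2 < \infty$) one can interchange limit and integral to obtain
\begin{eqnarray*}
\lim_{T\to\infty}\frac{1}{T}\int_0^T dt\, \braket{(U_t f)\, g}_\omega = \braket{f}_\omega\, \braket{g}_\omega.
\end{eqnarray*}
Finally (3 $\to$ 1): suppose $A\subset X$ is invariant under $T_t$ with $0 < \omega(A) < 1$. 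Take $f = 1_A$ and $g = 1_{X\setminus A}$; invariance gives $U_t 1_A = 1_A$, so $\braket{(U_t f) g}_\omega = \omega(A\cap(X\setminus A)) = 0$ for every $t$, and the time average is $0$. But (3) predicts $\omega(A)(1-\omega(A)) \neq 0$, a contradiction.

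The main obstacle I anticipate is bookkeeping rather than substance: one has to be careful that the Koopman operator is genuinely unitary on $L^2(X,\omega)$ rather than only on $L^2(X,\mu)$ for the ambient (possibly infinite) measure $\mu$, and that the $U_t$-invariant subspace of $L^2(X,\omega)$ coincides with essentially constant functions under the hypothesis of ergodicity phrased in terms of $\omega$. Once the ambient measure is replaced throughout by the probability measure $\omega$, the argument becomes formally identical to the finite-measure case, with dominated convergence handling the passage between pointwise/ $L^2$ limits and state expectations.
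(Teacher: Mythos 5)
Your proof is correct and uses essentially the same machinery as the paper's: the Koopman operator is unitary on $L^2(X,\omega)$ because $\omega$ is an invariant (probability) state, von Neumann's ergodic theorem plus triviality of the invariant subspace gives (1$\to$2), and Fubini/dominated convergence gives (2$\to$3). The only difference is how the equivalence cycle is closed: the paper proves (3$\to$2) by choosing $g=\braket{f}_\omega$ and invoking faithfulness, then (2$\to$1) by observing that an invariant set $A$ with $0<\omega(A)<1$ makes the time average of $1_A$ a non-constant function; you instead go directly (3$\to$1) by testing the correlator on $f=1_A$, $g=1_{X\setminus A}$, which is an equally valid and arguably more direct contradiction.
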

\begin{proof}
{\bf (1$\to$2)} Since $U_t$ is a unitary in the Hilbert space $\mH_\omega$, the von Neumann ergodic theorem says that the ergodic limit projects to the set of invariant vectors. The only invariant vector is proportional to the GNS vacuum which is the GNS representation of the identity function. The proportionality constant can be obtained by taking the expectation value in the state $\omega$:
\begin{eqnarray}
    \left\langle\lim_{T\to \infty}\frac{1}{T}\int_0^T dt\, U_tf  \right\rangle_\omega &=& \lim_{T\to \infty}\frac{1}{T}\int_0^T dt\, \braket{U_tf}_\omega  \\ \nn
    &=& \lim_{T\to \infty}\frac{1}{T}\int_0^T dt\, \braket{f}_\omega = \braket{f}_\omega
\end{eqnarray}
where in the first line we used Fubini's theorem and dominated convergence.

{\bf (2$\to$3)} The result follows by using Fubini's theorem and dominated convergence.

{\bf (3$\to$2)} Choosing $g = \braket{f}_\omega$, we get
\begin{eqnarray}
    \left\langle\lim_{T\to \infty}\frac{1}{T}\int_0^T dt\, U_tf - \braket{f}_\omega \right\rangle_\omega \braket{f}_\omega = 0\ .
\end{eqnarray}
Since $\omega$ is a faithful state and $f\neq0$, the desired result follows.

{\bf (2$\to$1)} Assume there exists a subregion $A$ that is invariant under the flow. Then the complement $A'$ is also invariant. If we choose $f = 1_A(x)$, for some $x\in A'$:
\begin{eqnarray}
     \lim_{T\to \infty}\frac{1}{T}\int_0^T dt\, U_t1_A(x) = \lim_{T\to \infty}\frac{1}{T}\int_0^T dt\, 1_A(x) = 0\ .
\end{eqnarray}
Since $\omega$ is a faithful state and $1_A$ is a non-zero function, this is a contradiction.  
\end{proof}

\subsection{Examples of classical Anosov systems}

 To make the ergodic hierarchy less abstract, we work out several examples of dynamical systems that are Anosov systems building up towards ``maximal" Lyapunov exponents.

\paragraph{Discrete translations on a torus (Arnold's cat map):} The simplest example is Arnold's cat map. It is a dynamical system in which the configuration space (phase space) is a torus, and we have a discrete dynamics given by the transformation 
\begin{eqnarray}
    (x_{n+1},y_{n+1})=(x_n,y_n)\cdot \begin{pmatrix}
        1&& 2\\
        1 && 1
    \end{pmatrix}
\end{eqnarray}
where $(x_n,y_n)$ are computed mod one. This transformation is area-preserving (its determinant is one) and it is mixing. 
Its unique fixed point is $(0,0)$, the corner of the square.
It is an Anosov system with two eigenvalues (Lyapunov exponents) $\lambda_\pm=\frac{1}{2}(3\pm \sqrt{5})$. The corresponding orthogonal eigenvectors are: stable mode $(x,x\phi)$ and unstable mode $(x,-x\phi^{-1})$ where $\phi=\frac{1}{2}(1+\sqrt{5})$ is the golden ratio. More generally, any $SL(2,\mathbb{Z})$ matrix $T$ with $\text{tr}(T)>2$ gives an Anosov system. It is hyperbolic, meaning that its eigenvalues are $0\leq \lambda_-<1<\lambda_+<\infty$, and it is mixing. The eigenvectors (the principle directions) have irrational slop $y_n/x_n$, therefore, their integral curves are dense in $T^2$. 
Any point $z=(x,y)\in T^2$ viewed as a vector can be decomposed in terms of $v_\pm$ so that 
\begin{eqnarray}
    (z_+ \vec{v}_++ z_-\vec{v}_-)\cdot T^n=z_+ (\lambda_+)^n+z_-(\lambda_-)^n\ .
\end{eqnarray}
We have three automorphisms of the observable algebra $L^\infty(T^2)$. They correspond to $T$ and $\sigma^\pm$ which are the diffeomorphisms that move points along stable and unstable modes: $\sigma^\pm(s)=e^{i s v_\pm}$ and $v_\pm=\p_x\pm \sqrt{5}\p_y$. 
%: $z_\pm=\braket{v_\pm|z}$ so that 

%\subsection{Physical examples of classical Anosov systems}

% \subsection{Key examples}

\paragraph{Free nonrelativistic particle on a line:}
The phase space is a two-dimensional manifold of $(x,p)$ where $x$ is the location of the particle and $p$ is the momentum. There is a canonical choice of symplectic form $\omega=dx\wedge dp$. The observables are real Schwartz functions ($C^\infty_0$ functions) in the phase space. To every observable $f$ we associate a $1$-form $df$ and a vector $\vec{v}_f=\omega(df,\cdot)$. 
We have a Poisson bracket between functions that naturally decides the commutator of the integral flows of their corresponding functions
\begin{eqnarray}
  [\vec{v}_f,\vec{v}_g]=-\vec{v}_{\{f,g\}}\ .
\end{eqnarray}
Consider the Hamiltonian $h=\frac{p^2}{2m}$. Functions evolve according to
\begin{eqnarray}
    \lb \p_t-\frac{p}{m}\p_x\rb f(x,p)=0\ .    
\end{eqnarray}
This dynamical system generates a spatial translation $\p_t=\frac{p}{m}\p_x$. The transformation preserves the Haar measure (volume of the phase space) but it is not state-preserving, because the volume of the phase space is infinite. 

Since translations have an entirely continuous spectrum, if the translation-invariant state is unique, strong mixing is guaranteed. There is a subalgebra of observables $\mA_A$ on every subregion $A$ of the phase space. There are {\it future and past subalgebras} associated with these regions as the union of all their translations. This is a K-system. If correlation length is finite, the correlators fall off exponentially fast. If instead of translations we consider the scaling $\sigma_t(f(x))=f(e^{-t} x)$ as the dynamical system we obtain the Anosov structure because scaling and translations  $\tau_s(f(x))=f(x+s)$ satisfy the algebra:
\begin{eqnarray}
    (\sigma_{-t}\circ \tau_s\circ \sigma_{t}f)(x)=f(x+e^t s)\ .
\end{eqnarray}
We have the symmetry group $t\to a t+b$. 
 
\paragraph{Free nonrelativistic particle on hyperbolic spaces:} Consider a nonrelativistic particle on a Riemannian manifold $\mM$ with metric $h_{ab}$. The Hamiltonian function on the phase space $(x^a,p^a)$ is
\begin{eqnarray}
    H(x,p)=\frac{1}{2}h_{ab}(x)p^ap^b\ .
\end{eqnarray}
As we saw earlier, the equation of motion is the geodesic flow $T_t(x^a)=x^a(t)$ with 
\begin{eqnarray}
    \frac{d^2x^a(t)}{dt}+\Gamma^a_{bc}\frac{
    dx^b}{dt}\frac{dx^c}{dt}=0\ .
\end{eqnarray}
Consider $\mM=\mathbb{H}$, the upper half-plane $\Im(z)\geq 0$. At any point $t$ on a geodesic we have two horocyclic flows $T_t$ that satisfy the Anosov relations
\begin{eqnarray}
    T_t S_s^\pm T_{-t}=S^\pm_{e^{\mp t}s}\ .
\end{eqnarray}

The growing and decaying modes with Lyapunov exponents $\lambda=\pm 1$. It follows that the motion of a free nonrelativistic particle on any manifold $X=(\mathbb{H}^2/\Gamma)\times \mM$ for some other manifold $\mM$ and $\Gamma$ for any Fuchsian group, is also Anosov system with, at least, one pair of growing and decaying modes. See appendix \ref{app:horocycle} for more details on this example.

\paragraph{Rindler observer:} Consider Minkowski spacetime $\mathbb{R}^{1,1}$ and a point particle moving with constant acceleration along the path $(x^+(t),x^-(t))$
\begin{eqnarray}
    x^\pm(t)=x_0 e^{\pm t}\ .
\end{eqnarray}
From the point of view of this observer, time evolution is generated by boost $K=x^+\p_+-x^-\p_-$, and the null translations $P_\pm=\p_\pm$ satisfy the Anosov relations 
\begin{eqnarray}
    &&e^{-t K}e^{-s P_\pm} e^{t K}=e^{-s e^{\mp t}P_\pm}\nn\\
    &&[K,P_\pm]=\mp P_\pm\ .
\end{eqnarray}
In this Anosov system, the growing and decaying modes commute.

\paragraph{AdS$_2$ Rindler observer:} AdS$_2$ written in global coordinates is 
\begin{eqnarray}
    ds^2=\frac{-dT^2+d\sigma^2}{\sin^2\sigma}, \qquad \sigma\in [0,\pi]\ .
\end{eqnarray}
where the isometries are time translations generated by $H$, radial translations generated by $P$ and boosts generated by $B$
\begin{eqnarray}
    H=\p_T, \qquad P=-(\sin T \cos\sigma \p_T+\cos T\sin\sigma \p_\sigma)\nn\\
    B=(-\cos T\cos\sigma\p_T+\sin T\sin \sigma \p_\sigma)\ .
\end{eqnarray}
They satisfy the algebra
\begin{eqnarray}
    [H,P]=B, \qquad [B,H]=P, \qquad [B,P]=H\ .
\end{eqnarray}
The time evolution of an AdS Rindler observer is generated by $B$, the AdS$_2$ analog of boost, and the growing and decaying modes are $G_\pm=H\pm P$
\begin{eqnarray}
    &&[G_\pm,B]=\mp G_\pm\nn\\
    &&[G_+,G_-]=2 B\ .
\end{eqnarray}
This is the $sl(2,\mathbb{R})$ Lie algebra. We have an Anosov system with the growing and decaying modes corresponding to $G_\pm$.

\paragraph{Local Poincar\'e group in AdS$_2$:} In the example of the AdS$_2$ Rinlder observer, the growing and the decaying modes do not commute. However, if the observer is highly boosted (localized near the bifurcation surface), it perceives the local geometry to be Rindler space, with almost commuting generators for the growing and the decaying modes. 

Near the bifurcation surface we have $T=t\ep$ and $\sigma=\pi/2-r\ep$ for some small $\ep$.  The generators of the isometries can be expanded as
\begin{eqnarray}
&&P= -\p_\sigma+O(\ep^2)\nn\\
&&B= (-r\p_t+t\p_r)+O(\ep^2)\nn\\
&&H=-\ep^{-1}\p_t\ . 
\end{eqnarray}
In the near horizon limit, the boost $B$ and the generators $G_\pm =\ep H\pm P$ satisfy an approximate two-dimensional Poincar\'e algebra.

\subsection{Geodesic and horocycle flows}\label{app:horocycle}

% Mostly, from the reference https://u.math.biu.ac.il/~solomyb/TEACH/21/GrAct/Lec5b.pdf

Consider the upper half-plane $\mathbb{H}$ and the hyperbolic metric
\begin{eqnarray}
    ds^2=\frac{du d\bar{u}}{(\Im u)^2}
\end{eqnarray}
and $d(u_1,u_2)$ the hyperbolic distance between a pair of points $u_1,u_2\in \mathbb{H}$.
Any differentiable path between $u_1$ and $u_2$ can be parameterized 
 as $f:[0,d(u_1,u_2)]\to \mathbb{H}$ so that the generator of the flow along the path $(f(t),f'(t))$ is in the tangent bundle of $\mathbb{H}$. 
The length of the path is given by 
\begin{eqnarray}
    L(f)\int_0^1 \frac{|f'(t)|}{\Im(f)}dt\ .
\end{eqnarray}
Geodesics minimize the length between two points. 
All geodesics of $\mathbb{H}$ are vertical half-lines $\Re(u)=y_0=\text{const}$ or upper half-circles orthogonal to the real line. 

Every differentiable path between two points can be identified with its generating vector field. Vector fields in $\mathbb{H}$ are elements of its tangent bundle $T\mathbb{H}\simeq \mathbb{H}\times \mathbb{C}$. The natural inner product in the tangent plane is The isometry group of the upper half-plane is
\begin{eqnarray}\label{innerprodtangent}
    \braket{\vec{v}|\vec{w}}_{u\in H}=\frac{1}{\Im u}\braket{\vec{v}|\vec{w}}_{\mathbb{R}^2}\ .
\end{eqnarray} $PSL(2,\mathbb{R})$ that acts on $u$ as $g(u)=\frac{a u+b}{cu+d}$ with $g=\begin{pmatrix}
a& b\\
c& d
\end{pmatrix}$ with $ad-bc=1$ and $g\sim g\cdot (-I)$. Its action can be extended to the tangent bundle as 
\begin{eqnarray}
    &&g(z,\vec{v})=(g(z),g'(z)\vec{v})\nn\\
    &&g'(z)=\frac{1}{(cz+d)^2}\ .
\end{eqnarray}
The inner product in (\ref{innerprodtangent}) has the property that under the action of $PSL(2,\mathbb{R})$ on the tangent bundle the norm of the vectors remains unchanged
\begin{eqnarray}
\braket{\vec{v}|\vec{v}}_z=\braket{g'(z)\vec{v}|g'(z)\vec{v}}_{g(z)}\ .
\end{eqnarray}
The action of $PSL(2,\mathbb{R})$ sends a unit norm vector at $z$ to another unit norm vector at $g(z)$. Furthermore, for any two points on the tangent bundle $(z_1,\vec{v}_1)$ and $(z_2,\vec{v}_2)$ with unit norm vectors there is a unique $g\in PSL(2,\mathbb{R})$ such that $z_2=g(z_1)$ and $\vec{v}_2=g'(z)\vec{v}_1$. Therefore, there is a one-to-one map between $PSL(2,\mathbb{R})$ and the unit norm tangent bundle  $T^1\mathbb{H}$, and one can identify them.\footnote{In other words, the action of $PSL(2,\mathbb{R})$ on $T^1\mathbb{H}$ is simply transitive.} For example, we can identify the identity operator in $PSL(2,\mathbb{R})$ with the element $(i,\vec{i}\,)\in T^1\mathbb{H}$, and $g\in PSL(2,\mathbb{R})$ with $(g(i),g'(\vec{i}\,))$. 

Consider the points $u_1=i$ and $u_2=e i$. The path $f(t)=e^t i$ is the unique geodesic of unit speed that passes between them. Therefore,  $(f(t),f'(t))\in T^1\mathbb{H}$. It is clear that $PSL(2,\mathbb{R})$ sends one geodesic to another. 

\begin{lemma}
Given two points $u_1,u_2\in \mathbb{H}$ there is a unique geodesic path $f:[0,d(u_1,u_2)]\to \mathbb{H}$ with unit speed between them. 
There is also a unique element of $PSL(2,\mathbb{Z})$ such that $f(t)=g(e^t i)$. 
\end{lemma}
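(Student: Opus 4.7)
The plan is to leverage the identification, set up in the preceding paragraphs, of the unit tangent bundle $T^1\mathbb{H}$ with $PSL(2,\mathbb{R})$ via the simply transitive action on unit tangent vectors. First I would verify that the canonical curve $\gamma_0(t) = e^t i$ is a unit-speed geodesic: the direct computation $|\gamma_0'(t)|_{\gamma_0(t)} = |e^t i|/\Im(e^t i) = 1$ gives unit speed, and the fact that the imaginary axis is a geodesic follows from the standard length minimization argument (any competing path $f$ satisfies $L(f) \geq \int |\Im f'|/\Im f \,dt \geq |\log(\Im u_2/\Im u_1)|$, with equality on the vertical segment).

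Next, for existence of a geodesic between arbitrary $u_1, u_2 \in \mathbb{H}$, I would invoke the classification of geodesics already mentioned: vertical half-lines and upper semicircles orthogonal to $\mathbb{R}$. Any two distinct points in $\mathbb{H}$ lie on exactly one such curve, and reparametrizing the arc between $u_1$ and $u_2$ by hyperbolic arclength yields a unit-speed geodesic $f : [0, d(u_1,u_2)] \to \mathbb{H}$. Uniqueness of $f$ is a standard ODE fact: a unit-speed geodesic is determined by its initial position and initial unit tangent, and the geometric classification shows that a unit tangent at $u_1$ integrates to a geodesic passing through $u_2$ only if it is tangent to the unique vertical line or orthogonal semicircle through $u_1$ and $u_2$, with the correct orientation.

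For the second assertion, let $\vec{v} := f'(0) \in T^1_{u_1} \mathbb{H}$. The simply transitive action of $PSL(2,\mathbb{R})$ on $T^1\mathbb{H}$, recalled just before the lemma, produces a unique $g \in PSL(2,\mathbb{R})$ satisfying $g(i) = u_1$ and $g'(i)\cdot \vec{i} = \vec{v}$. Because $g$ is an isometry of $(\mathbb{H}, ds^2)$, the curve $t \mapsto g(e^t i)$ is again a unit-speed geodesic, and its initial data at $t=0$ coincides with that of $f$; the uniqueness from the previous paragraph then forces $f(t) = g(e^t i)$ for all $t \in [0, d(u_1,u_2)]$. Uniqueness of $g$ is immediate: any $\tilde g$ with $f(t) = \tilde g(e^t i)$ must satisfy $\tilde g(i) = u_1$ and $\tilde g'(i) \cdot \vec{i} = f'(0) = \vec{v}$, so $\tilde g = g$ by simple transitivity.

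I do not expect any genuine obstacle; the only technical point worth pinning down carefully is the classification/minimization step in the second paragraph, since everything afterward is a formal consequence of $PSL(2,\mathbb{R})$ being simply transitive on $T^1\mathbb{H}$ and acting by isometries (hence mapping geodesics to geodesics and preserving unit speed).
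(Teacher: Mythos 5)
Your proposal is correct and follows essentially the same route as the paper: reduce to the model unit-speed geodesic $t\mapsto e^t i$ on the imaginary axis by exploiting the simply transitive $PSL(2,\mathbb{R})$ action recalled just before the lemma. The paper's own proof is a two-line sketch that maps the pair $(u_1,u_2)$ to $(i,ei)$ (which, as written, presumes $d(u_1,u_2)=1$); your version, which instead matches the initial point and unit tangent vector in $T^1\mathbb{H}$ and then invokes uniqueness of geodesics with given initial data, is the cleaner and more complete way to carry out the same idea.
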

\begin{proof}
    To show this first, we use $g\in PSL(2,\mathbb{R})$ to map $g(u_1)=i$ and $g(u_2)=e i$. Then, the element of $PSL(2,\mathbb{R})$ that is $g(e^ti)$ generates the unique geodesic flow of unit speed between $u_1$ and $u_2$. 
\end{proof}

There is a one-to-one correspondence between the points in $(u,\vec{v})\in T^1\mathbb{H}$ and the unit speed geodesics that go through $u$ in the direction of $\vec{v}$. Since moving along a geodesic $(u(t),\vec{v}(t))$ preserves the points in $T^1\mathbb{H}$ it corresponds to a flow on $PSL(2,\mathbb{R})$ that is multiplication on the right by
\begin{eqnarray}
    g\to g\cdot \begin{pmatrix}
        e^{-t/2}&0\\
        0& e^{t/2}
    \end{pmatrix}
\end{eqnarray}
At the point $(i,\vec{i}\,)\in T^1\mathbb{H}$, in the direction tangent to the geodesic flow, there are two isometric flows called the 
{\it horocycle flows}. Since they move points on $T^1\mathbb{H}$ they also correspond to flows on $PSL(2,\mathbb{R})$.
The stable horocycle corresponds to $(i,\vec{i}\,)\to (i+t,\vec{i}\,)$ that corresponds to 
\begin{eqnarray}
   g\to g\cdot  \begin{pmatrix}
        1 && -s\nn\\
        0 && 1
    \end{pmatrix}
\end{eqnarray}
whereas the unstable one corresponds to reversing the direction of the tangent $(u,\vec{v})\to (u,-\vec{v})$. In $PSL(2,\mathbb{R})$ this action corresponds to $w(u)=-1/u$ or $W=\begin{pmatrix}
    0 &&1\\
    -1 && 0
\end{pmatrix}\ .$
As a result, the unstable horocycle corresponds to the flow generated by 
\begin{eqnarray}
    g\to g\cdot \begin{pmatrix}
        1 && 0\\
        s && 1
    \end{pmatrix}\ .
\end{eqnarray}
Consider the action $g\to g\cdot T$. From the eigenvalue system
\begin{eqnarray}
    g\cdot (T-\lambda 1)=0
\end{eqnarray}
it follows that 
\begin{eqnarray}
&&\lambda^2-\text{tr}(T)\lambda +1=0\nn\\
&&\lambda_\pm=\frac{1}{2}(\text{tr}T\pm \sqrt{\text{Tr}(T)^2-4})\ .
\end{eqnarray}
The elements of $PSL(2,\mathbb{R})$ split into three groups
\begin{enumerate}
\item {\it Elliptic:} When $|\text{tr}(T)|\leq 2$. In this case, the two eigenvalues are complex conjugates of each other $\lambda_+=\lambda_-^*$ and have unit norm, i.e. $|\lambda_\pm|=1$. 

\item {\it Parabolic:} When $\text{tr}(T)=2$ then the eigenvalues are $\lambda=\pm 1$. For example, the elements $\gamma^\pm$ are elliptic $\text{tr}(\gamma^\pm)=2$.
    \item {\it Hyperbolic:} When $\text{tr}(T)\geq 2$. As an example, the matrix $\gamma=\begin{pmatrix}
    e^{-t/2}& 0\\
    0 && e^{t/2}
\end{pmatrix}$
is elliptical because $\text{Tr}(\gamma)=2\cosh(t/2)\geq 2$. The action of these elements on the torus forms Anosov flows. 
\end{enumerate}

The geodesic flow in the upper half-plane corresponds to $u\to e^t u$, and the following flow on the unit tangent bundle $g\to T(g)=g\cdot \gamma(t)$ on $PSL(2,\mathbb{R})$:
\begin{eqnarray}
\gamma(t)=\begin{pmatrix} 
e^{-t/2} && 0\\
0 && e^{t/2}
\end{pmatrix}\ .
\end{eqnarray}
The translation $u\to u+s$ corresponds to the stable horocycle flow $g\to S^+(g)= g\cdot \gamma^+(s)$
\begin{eqnarray}
\gamma^+(s)=\begin{pmatrix} 
1 && s\\
0 && 1
\end{pmatrix}
\end{eqnarray}
whereas the unstable horocycle flow $g\to S^-(g)=g\cdot \gamma^-(s)$ is generated by
\begin{eqnarray}
    \gamma^-(s)=\begin{pmatrix} 
1 && 0\\
s && 1
\end{pmatrix}\ .
\end{eqnarray}
They satisfy the algebraic relations 
\begin{eqnarray}
    T(t) S^\pm (s)T(-t)=S^{\pm}(e^{\mp t} s)\ .
\end{eqnarray}
Next, we lift these actions from the phase space $T^1\mathbb{H}$ to the classical observable algebra $L^\infty(T^1\mathbb{H})$:
\begin{eqnarray}
    &&\tau(f(u))=f(T(-u))\nn\\
    &&\sigma^{\pm}(f(u))=f(S^\pm(-u))
\end{eqnarray}
so that
\begin{eqnarray}
    \tau(t)\circ \sigma^\pm(s)\circ\tau(-t)=\sigma^\pm(e^{\mp t}s)\ .
\end{eqnarray}
 This is a classical Anosov system.

\section{Quantum Dynamical Systems}\label{app:quantum_dynamics}

In this section, we briefly discuss the extension of ergodic theory to quantum systems. Every Abelian von Neumann algebra is isomorphic to the algebra $L^\infty(X,\mu)$ for some space $X$ and measure $\mu$. Therefore, the observable algebra of any classical system is always the algebra of functions on some geometry (``phase space"). This naturally connects geometry with classical dynamical systems. This connection does not generalize to non-Abelian algebras, and we cannot always associate geometry with a quantum system. Non-commutative geometry is a mathematical program that attempts at generalizing geometry and associate a non-commutative geometry to a large class of von Neumann algebras \cite{connes1994noncommutative}.

The space $R$ in our classical example of glass is a geometry with a Haar measure (volume element), and the observable algebra is the algebra of complex bounded measurable functions on $R$. Since the space $R$ is measurable, it comes with a $\sigma$-algebra of measurable subsets on $R$. In classical ergodic theory, we can think of measurable functions on $X$ (observables) in terms of the characteristic functions on subsets $X\in R$ inside the $\sigma$-algebra. In the quantum case, there is no geometry analogous to $R$, and the Abelian algebra of functions is replaced by the observable algebra of a quantum system $\mR$ represented in the Hilbert space $\mH$. Quantum dynamics is described by an automorphism $\tau_g:\mR\to \mR$ with $g$ in some group $G$ (or monoid if the dynamics is not reversible). Similar to the classical case, we often have a state $\omega:\mR\to \mathbb{C}$ (non-commutative measure) that is preserved under the action of $\tau_g$ i.e., $\omega\circ \tau_g=\omega$.

If the dynamical system preserves some state $\omega$ it is natural to represent the algebra in the GNS Hilbert space $\mH_\omega$ with the identity operator corresponding to the vacuum vector $\ket{\Omega}\in \mH_\omega$. 
In most of the examples we consider here, the dynamical transformation group is $\mathbb{R}$. Time evolution corresponds to a strongly continuous unitary flow realized in the Hilbert space by $U(t)=e^{iHt }$ with $t\in \mathbb{R}$; i.e. for all observables $a\in \mR$ we have $\tau_t(a)=U^\dagger(t) aU(t)$.\footnote{In these cases, there is a Haar measure on the group (the Lebesgue measure $dt$).} 
% For a more general definition of a quantum dynamical system, see Appendix \ref{app:quantum_dynamics} \KL{(Wrong ref? Are the sections merged together?)}. 
Note that every non-Abelian von Neumann algebra (quantum system) is a quantum dynamical system with evolution given by modular flow. In fact, modular flow $\Delta_\Omega^{it}$ is the unique automorphism of the algebra that preserves a state $\omega$ and satisfies the KMS relation \cite{stratila2020modular}. Therefore, any quantum dynamical system that satisfies the KMS relation describes {\it modular dynamics}. This is the intuition that underlies the connections between modular dynamics and dynamical systems with a positive generator in Theorem \ref{thm:HST}.

More generally, we define a quantum dynamical system as a triple $(\mA,S,\tau)$ where $\mA$ is a C$^*$-algebra, $S$ is a locally-compact monoid that is  represented as endomorphisms $\tau_t:\mA\to \mA$ with $t\in S$ such that the three conditions
\begin{enumerate}
    \item {\bf Identity action:} If $e\in S$ is the identity element then $\tau_e(a)=a$ for all $a\in \mA$.
    \item {\bf Homomorphism:} Monoid multiplication is preserved $\alpha_{g_1}\circ \alpha_{g_2}=\alpha_{g_1g_2}$.
    \item {\bf Continuity:} The map $g\to \alpha_g$ is continuous in operator norm topology.
\end{enumerate}

If the algebra $\mA$ is a von Neumann algebra with a state $\omega$ we can represent it on $B(\mH_\omega)$. The automorphisms of $\mA$ can be represented as a strongly continuous unitary flow on $\mH_\omega$: $\alpha_t(a)=U(t) a U(t)^\dagger$. We obtain a W$^*$-dynamical system.

Every automorphism of a von Neumann algebra $\tau_s:\mR\to \mR$ can be implemented as a unitary flow on the GNS Hilbert space:
\begin{eqnarray}
    \tau_s(a)\ket{\Omega}=U(s) a \ket{\Omega}\ .
\end{eqnarray}
This unitary representation is unique if we further require that
\begin{eqnarray}
    J_\mR U(s)=U(s) J_\mR\ .
\end{eqnarray}
This means that it preserves the natural cone because
\begin{eqnarray}
    U(t) a Ja \ket{\Omega}=a_t U(t) J a\ket{\Omega}=a_t J a_t\ket{\Omega}
\end{eqnarray}
which is equivalent to 
\begin{eqnarray}
    \Delta^{1/2}U(t)=U(t) \Delta^{1/2}\ .
\end{eqnarray}
Intuitively, we can interpret the above equation as the invertibility of automorphisms with respect to an alternate inner product \cite{furuya2022real}
\begin{eqnarray}
    (a,\Phi_t(b)):=\braket{a|\Delta^{1/2}\Phi_t(b)}=\braket{ \Phi_{-t}a|\Delta^{1/2}b}\ .
\end{eqnarray}
With respect to this inner product, we have $\Phi_t^\dagger=\Phi_{-t}$.

\subsection{Examples of quantum Anosov systems}

The motivation to consider quantum Anosov systems is to have examples of systems that cluster at least exponentially fast in modular time. Note that they might cluster faster, but certainly not slower. Theorem 3.6 of \cite{emch1994anosov} ensures that, in quantum Anosov systems, clustering occurs at least exponentially fast. 

\paragraph{Discrete translations on non-commutative torus (Quantum Arnold's Cat map):}
We start by considering a linear flow on a non-commutative torus. A non-commutative torus is a subalgebra of $B(L^2(\mathbb{R}/\mathbb{Z}))$ that is generated by the following two unitaries
\begin{eqnarray}
    &&(U f)(z)=z f(z)\nn\\
    && (V f)(z)=f(z e^{-2\pi i \theta})
\end{eqnarray}
for some deformation parameter $\theta$. It follows that
\begin{eqnarray}
    UV U^\dagger =e^{-2\pi i \theta} V\ .
\end{eqnarray}
This algebra is, the Abelian algebra of continuous functions on $S^1$ enlarged by cross product with rotation by irrational angle $\theta$: $e^{2\pi i \theta \p_z}$. This resulting cross-product is a noncommutative torus. 

Consider the parameter space of the Weyl algebra of a quantum particle. It is parameterized by a complex number $z$ in Weyl unitary operators 
\begin{eqnarray}
    &&W(z)=e^{z a^\dagger-z^* a}
\end{eqnarray}
satisfying the {\it twisted} algebra 
\begin{eqnarray}
    &&W(z_1)W(z_2)=e^{2\pi i \theta\Im(z_1\bar{z}_2)} W(z_1+z_2)\ .
\end{eqnarray}
We compactify this space by identifying $z\sim z+i\sim z+1$ to obtain a torus. Now, the fundamental domain is $\mathbb{Z}^2$. The element $\gamma=\begin{pmatrix}
    e^{-t/2}&& 0\\
    0&& e^{t/2}
\end{pmatrix}$
acts on the torus as an Anosov quantum system with discrete time evolution.

\paragraph{Free quantum particle on a manifold $\mM$:}
As opposed to classical systems, quantum systems (von Neumann algebras) are dynamical systems with the flow given by the modular flow. In fact, every classical dynamical system can be viewed as a ``quantum" (noncommutative) system, as well. The idea is simply to enlarge the set of observables by including dynamical transformations in them. Taking the closure of the resulting algebra is a non-commutative algebra. In mathematics, this is called the group measure space construction. We postpone the generalities of this construction to section \ref{app:crossprod}. Here, we start with an example to build physics intuition.

Consider a free non-relativistic quantum particle on a line. The classical observables form the Abelian algebra of bounded complex functions on the real line, i.e. $L^\infty(\mathbb{R},dx)$, and choose translation $(T_t f)(x)=f(x+t)$ as the dynamics. The cross product of this Abelian algebra with the automorphism is $L^\infty(\mathbb{R},dx)\rtimes_{T_t} \mathbb{R}$ is isomorphic to the algebra of a quantum particle on a line $B(L^2(\mathbb{R},dx))$ i.e., it is type I$_\infty$. Note that there is no invariant state.\footnote{The Lebesgue measure is invariant but it is not normalizable; hence it is a weight and not a state.} The translation operator belongs to this algebra and generates an Abelian centralizer for this type I$_\infty$ algebra. The spectrum of translation is continuous and has no normalizable eigenoperators. Therefore, this algebra is ergodic, but it is not strongly mixing because functions of momentum do not decay.

\paragraph{Free quantum particle on upper half-plane}

In generalizing classical Anosov systems to the quantum world  using the crossed product construction an important subtlety presents itself. As we saw in Lemma \ref{Anosoventirespec}, we obtain the quantum Anosov relation only if the spectrum of the generator of the flow is complete. In the examples we considered, the dynamical flow was either generated by translations, boost, or dilatation. In Rindler space, the spectrum of boost and null translations as self-adjoint operators in the Hilbert space are entire, therefore the examples of free non-relativistic and a free particle on a line, and free relativistic particles on $\mathbb{R}^{1,1}$ become quantum Anosov systems. However, in hyperbolic space (the upper half-plane) the spectrum of the Laplace-Beltrami operator is discrete. Therefore, a free non-relativistic quantum particle on hyperbolic space is not a quantum Anosov system. For more details see \cite{emch1994anosov}.

For examples of quantum K-systems associated with quantum many body systems on a lattice see \cite{narnhofer2006dynamical}. For more examples of quantum Anosov systems see \cite{savvidy2020maximally}.

\subsection{First quantization and cross-products}\label{app:crossprod}

Consider the classical dynamical system $(X,d\mu,\tau)$ where $X$ is a measure space with measure $d\mu$ and $\tau:G\to \text{Aut}(X)$ is the dynamical transformation. The observables of this classical system are $\mK=L^\infty(X,d\mu)$, and they act on the Hilbert space $L^2(X,d\mu)$. Consider the set of $G$-valued vectors in $L^2(X,d\mu)$: $\mH=\ket{\Psi;g}\in L^2(G,\mK)$ so that the observable $a$ acts as 
\begin{eqnarray}
    \tau_g(a)\ket{\Psi;g}=\ket{a\Psi;g},
\end{eqnarray}
and we have added the dynamical transformation $u_g$ to the observables
\begin{eqnarray}
    u_h\ket{\Psi;g}=\ket{\Psi; hg}\ .
\end{eqnarray}
We define the crossed product von Neumann algebra
\begin{eqnarray}
    \mM=\mK\rtimes_\tau G
\end{eqnarray}
as the double-commutant of the resulting operators, and we have
\begin{eqnarray}
    u_g f u_g^\dagger= \tau_g(f)\ .
\end{eqnarray}
We can revisit the example of the classical K-system of translations on $\mathbb{R}$. The action is free but not ergodic because the constant functions on $\mathbb{R}$ are in $L^\infty(X,d\mu)$. In this case, the crossed product algebra is generated by position and momentum; therefore, the observables are the type I$_\infty$ factor $B(L^2(X,d\mu))$; i.e. a free quantum particle on a line. The dilatations $f(x)\to f(e^t x)$ are inner automorphisms, and there exists no state that is invariant under them. The trace is simply integration over $x$.\footnote{If we restrict the translation group to translations by rational amount, then we obtain a type II$_\infty$ factor.} The resulting Hilbert space is a one-particle Hilbert space.
In physics, quite often, the one-particle observables are type I. 

More generally, every group $G$ acts on itself transitively by left or right translation. If $G$ is a finite group of order $n$, then $\mM$ is a type I$_n$ factor, and if $G$ is infinite then $\mM$ is type I$_\infty$. In defining ergodicity, we focus on the dynamical group $\mathbb{R}$, and the time-averages were defined with respect to the Haar measure on $\mathbb{R}$.
However, for the group generated by translations and dilatations $x\to a x+b$ the left-Haar measure is $\frac{da\wedge db}{a^2}$, whereas the right-Haar measures is $\frac{da\wedge db}{|a|}$.\footnote{A group with matching left and right Haar measures is called unimodular. The group $x\to ax+b$ is not unimodular.} Take a left-invariant measure $\mu(h)$ and translate it on the right by $g$, the result is another left-invariant measure 
\begin{eqnarray}
  \forall h\in G:  \mu(g^{-1}h)=\Delta(g)\mu(h),
\end{eqnarray}
where the function $\Delta(g)$ is called the {\it modular function} of the group.\footnote{By definition, for unimodular groups, the modular function is always one.}
% However, Murray and von Neumann, already in 1940s, fully classified the operator algebras that can be constructed in the form above.
Consider the above action restricted to the subalgebra of functions on $\mathbb{R}^+$:
\begin{eqnarray}
    \sigma_t f(x)=f(x-t)\Theta(x)\Theta(x-t)
\end{eqnarray}
where $\Theta$ is the Heaviside function. We have
\begin{eqnarray}
    \sigma_t \circ \sigma_{-t}=\begin{cases}
    \text{id}\qquad &t\geq 0\nn\\
    \Theta(t) \qquad &t<0
    \end{cases}
\end{eqnarray}
where $\Theta(t)$ can be viewed as a conditional expectation.

We say the group action of $G$ on $X$ is {\it free} if every non-identity element of $G$ moves every element of $X$. We say it is {\it transitive} if the orbit of any point gives the whole $X$. We say it is {\it ergodic} if it acts irreducibly. More formally, we write
\begin{definition}
    The action of $G$ on $X$ is free if $gx=x$ for some $x\in X$ implies that $g=1$.\footnote{Said differently, the map $x\to g x$ is one-to-one.} It is transitive if for any two points $x,y\in X$ there exists a $g\in G$ such that $g x=y$. 
The action of $G$ on $X$ is ergodic if the only subsets that are preserved under it are $\emptyset$ and $X$.
\end{definition}
von Neumann and Murray proved the following result that characterizes the von Neumann algebras that can be constructed from classical dynamical systems using crossed products: 
\begin{theorem}
    If the action of $G$ on $L^\infty(X,d\mu)$ is free then the crossed product $\mM$ is a factor if and only if the action of $G$ is ergodic.
\end{theorem}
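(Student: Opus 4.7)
The plan is to characterize the center $Z(\mM)$ of the crossed product directly and show that it collapses to the scalars precisely when the $G$-action has no non-trivial invariant subsets. Since ``factor'' means trivial center, proving $Z(\mM)=\mbC\cdot 1 \iff \text{action is ergodic}$ establishes both implications at once.

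The first step is to set up a Fourier-type decomposition for elements of $\mM$. Every $z\in\mM$ admits a (possibly formal) expansion
\begin{eqnarray*}
z = \sum_{g\in G} f_g\, u_g, \qquad f_g\in\mK,
\end{eqnarray*}
which is a literal $\ell^2$-sum for discrete $G$ and becomes a direct integral $\int_G f_g u_g\,dg$ for continuous $G$, with the coefficients $f_g$ uniquely determined by $z$ (this is the usual ``Plancherel'' decomposition of $L^2(G;L^2(X,d\mu))$ induced by the regular representation of $G$).

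Second, I would use the covariance relation $u_g f u_g^\dagger = \tau_g(f)$ to compute, for arbitrary $f\in\mK$,
\begin{eqnarray*}
[z,f] = \sum_g f_g\bigl(\tau_g(f) - f\bigr)\, u_g.
\end{eqnarray*}
Uniqueness of the Fourier coefficients then forces
\begin{eqnarray*}
f_g\cdot\bigl(\tau_g(f)-f\bigr) = 0 \qquad \text{in }\mK,\ \forall g\in G,\ \forall f\in L^\infty(X,d\mu).
\end{eqnarray*}
Here I invoke freeness: for every $g\neq e$ the fixed-point set $\{x:gx=x\}$ is $\mu$-null, so by choosing a well-chosen partition of $X$ into small measurable pieces separated by $g$ and picking $f$ to be a characteristic function distinguishing these pieces, one can arrange $\tau_g(f)-f$ to be non-vanishing on the essential support of $f_g$. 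This forces $f_g=0$ for all $g\neq e$, reducing $z=f_e\in\mK$.

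Third, having shown $Z(\mM)\subseteq\mK$, the remaining requirement that $z$ commute with every generator $u_g$ becomes $\tau_g(f_e)=f_e$ for all $g\in G$. Hence
\begin{eqnarray*}
Z(\mM) = \mK^G := \{f\in L^\infty(X,d\mu): \tau_g(f)=f\ \forall g\in G\}.
\end{eqnarray*}
By the very definition of ergodicity, $\mK^G=\mbC\cdot 1$ iff no non-trivial measurable subset is $G$-invariant, which closes both directions of the equivalence.

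The main obstacle is making the Fourier decomposition rigorous when $G$ is not discrete: the ``coefficients'' $f_g$ are really an $\mK$-valued measurable field on $G$, and the uniqueness and termwise-vanishing argument that I used so casually requires the direct-integral machinery together with Fubini-type arguments in $L^2(G;L^2(X,d\mu))$. A secondary subtle point is the construction of the separating function $f$ in the freeness step: for a general (possibly uncountable) group one needs to exploit that it suffices to check commutation on a countable weakly-dense subgroup, or else to use a measurable selection / standard Borel structure on $X$ to produce the required $f$ uniformly in $g$.
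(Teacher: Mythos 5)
Your proposal is correct and follows the standard Murray--von Neumann argument: freeness forces any element of $\mM$ commuting with $\mK$ to lie in $\mK$ itself, so the center is the fixed-point algebra $\mK^G$, which is trivial precisely when the action is ergodic. The paper gives no proof of its own but cites Takesaki (Ch.~XIII), where essentially this same Fourier-coefficient computation of $Z(\mM)$ appears; the non-discrete case you flag as the main obstacle is indeed the only genuinely delicate point, and it is resolved there by the direct-integral machinery you mention.
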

\begin{theorem}
 Consider $\mM=L^\infty(X,d\mu)\rtimes_\tau G$ where the action of $G$ on $\mA$ is free and ergodic. Then
    \begin{itemize}
        \item $\mM$ is type I$_\infty$ factor if and only if the action of $G$ is transitive.
        \item $\mM$ is type II$_1$ factor if and only if $G$ is a discrete infinite group and there exists a finite invariant measure $d\nu$ that is absolutely continuous with respect to $d\mu$.
        \item $\mM$ is type II$_\infty$ factor if and only if the action is not transitive, and there exists a left-invariant measure $d\nu$ that vanishes on the same set as $d\mu$.
        \item $\mM$ is type III factor if and only if there exists no left-invariant measure that vanishes on the same set as $d\mu$.
    \end{itemize}
\end{theorem}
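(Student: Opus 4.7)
The plan is to classify $\mM$ by constructing the candidate trace (or ruling it out, in the type III case) from a $G$-invariant measure on $X$, and to detect minimal projections from the orbit structure of the action. Throughout, the ergodicity of $G$ on $L^\infty(X,d\mu)$ together with freeness ensures that $\mM$ is a factor: by freeness, the relative commutant of $L^\infty(X,d\mu)$ inside $\mM$ equals $L^\infty(X,d\mu)$ itself, so the center of $\mM$ sits inside the $G$-invariant subalgebra of $L^\infty(X,d\mu)$, which is trivial by ergodicity. This reduces the problem to determining the type.

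First I would set up a dictionary between faithful normal semifinite traces on $\mM$ and $G$-invariant measures $d\nu$ on $X$ equivalent to $d\mu$. Given such a $\nu$, the formula $\tr(a)=\int_X a_e\,d\nu$ applied to a positive element $a=\sum_{g\in G}a_g u_g$ (with due care in the continuous case) defines a candidate trace, and $G$-invariance of $\nu$ is precisely the condition needed to verify $\tr(u_g a u_g^{\dagger})=\tr(a)$. Conversely, any faithful normal semifinite trace on $\mM$ restricts to a $G$-invariant normal weight on $L^\infty(X,d\mu)$, which by Radon-Nikodym plus ergodicity must be proportional to integration against an absolutely continuous $G$-invariant measure equivalent to $d\mu$. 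This reduces the existence-of-a-trace question to the existence of such a measure.

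With this dictionary in hand, the four cases fall out as follows. Type III is characterized by the absence of any faithful normal semifinite trace, equivalently, the absence of any $G$-invariant measure equivalent to $d\mu$. When such a $\nu$ exists, the trace is finite iff $\nu(X)<\infty$, and finiteness of the whole algebra additionally forces $G$ to be discrete --- for continuous $G$ the ``group integration'' over Haar measure renders the identity an infinite projection --- giving the type II$_1$ case. When $\nu$ is infinite, or when $G$ is continuous, and the action is not transitive, $\mM$ is type II$_\infty$. Finally, transitivity together with freeness identifies $X$ with $G$ as a $G$-space, so $\mM\cong L^\infty(G)\rtimes G\cong B(L^2(G))$ by standard Takesaki/Fourier duality, which is type I$_\infty$; conversely, a minimal projection in $\mM$ forces an atom for the action, and freeness together with ergodicity upgrade this atom to a single orbit equal to all of $X$, i.e.\ transitivity.

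The main obstacle I expect is the careful construction of the dual trace/weight when $G$ is non-discrete and non-unimodular: the naive formula $\tr(a)=\int a_e\,d\nu$ must be corrected by the group modular function $\Delta_G$, and the resulting functional is in general only a KMS weight rather than a trace unless $G$ is unimodular. This is precisely the technical reason why the type II$_1$ statement singles out discrete $G$, and why in the continuous finite-measure case one lands in type II$_\infty$ rather than type II$_1$. A secondary subtlety is the converse for the type I$_\infty$ case: one must show that a minimal projection in the crossed product genuinely descends to an atom in $(X,d\mu)$, ruling out purely ``group-theoretic'' minimal projections built from the implementing unitaries $u_g$, which is where freeness plays its decisive role.
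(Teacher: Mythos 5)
Your outline is essentially correct, but there is nothing in the paper to compare it against: the paper's ``proof'' of this theorem is a one-line citation to chapter XIII of Takesaki's book, and your sketch is in fact a faithful summary of the classical Murray--von Neumann group-measure-space argument that the cited reference carries out in detail. The two load-bearing points you identify are the right ones: (i) freeness makes $L^\infty(X,d\mu)$ maximal abelian in $\mM$, so ergodicity kills the center and the problem reduces to locating a trace; (ii) faithful normal semifinite traces on $\mM$ correspond, via restriction along the canonical conditional expectation and the formula $\tr(a)=\int_X a_e\,d\nu$, to invariant measures $\nu\sim\mu$, which sorts out types II and III. For the type I converse, the cleanest way to close the step you flag is to note that in a type I factor the trace of a projection lies in $c\,\{0,1,2,\dots,\infty\}$, so the invariant measure $\nu(A)=\tr(1_A)$ takes values in a discrete set and $X$ must be purely atomic; ergodicity plus freeness then collapses $X$ to a single orbit. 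Your caveat about non-unimodular or non-discrete $G$ is also well placed --- the dual weight is only a trace after correcting by the modular function, and the group integration is exactly why a continuous group with a finite invariant measure lands in II$_\infty$ rather than II$_1$, consistent with the paper's own example of the irrational flow on $T^2$. None of this constitutes a gap relative to the paper, which proves nothing itself; if you wanted a complete proof you would need to fill in the dual-weight construction and the Radon--Nikodym step, both of which are standard and are exactly what the cited chapter of Takesaki supplies.
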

\begin{proof}
    For proof of both results, see chapter XIII of \cite{takesaki2003theory}.
\end{proof}
To clarify the interpretation of the above result in terms of dynamical systems, it is helpful to discuss some examples. Consider irrational rotations on a circle: $T_\theta e^{i \phi}=e^{2\pi i \phi+\theta}$ with $\theta$ irrational. The group is $\mathbb{Z}$, it acts freely but not transitively. The Lebesgue measure $d\phi$ is invariant under these rotations. The Fourier transform of any observable is an eigenfunction of translations:
\begin{eqnarray}
    (T_\theta\circ f)_n=e^{2\pi i n \theta}f_n\ .
\end{eqnarray}
The only invariant observables are constant (proportional to identity), therefore the action is ergodic. This algebra is a type II$_1$ factor.

Let $G=\mathbb{R}$ and $X=T^2$ (a torus: $(0,1]\times (0,1]$). Consider the translation $T_s(x+i y)= (x+s+ y+i \theta s)$ where $\theta$ is irrational. This action is ergodic. The invariant measure is finite, but $G$ is not discrete, and the algebra is type II$_\infty$. 

Finally, consider the example $X=\mathbb{R}$ and $G$ the group $x\to a x+b$. This action is free and the Lebesgue measure $dx$ is ergodic with respect to the subgroup of translations. There exists no nontrivial invariant measure that vanishes on the same set as $d\mu$. Therefore, this algebra is type III.

%\subsection{Strong asymptotic Abelianness and quantum K-systems}
\subsection{K-systems, entropy and strong asymptotic Abelianness}
\label{app:Ksystem}

For a classical K-system, there are several equivalent conditions to define a K system. One may define the K-system through the half-sided inclusions, mixing property\cite{Cornfeld1982ergodic}. It is interchangeable to use K-system and K-mixing in the classical case. A K-system is known to have positive Kolmogorov-Sinai entropy (KS entropy). A positive KS entropy is often taken to be an indicator of classical chaos \cite{FRIGG200626}. For completeness, we review the definition of the KS entropy here: 
\begin{definition}[Partition]
   A partition of a measure space $(X, \Sigma, \mu)$ is defined as a collection $\alpha$ of nonempty disjoint measurable subsets such that the union of all subsets is $X$, i.e.
   \begin{align}
       \forall A, B \in \alpha, A \neq B \implies A\cap B = \emptyset,
   \end{align}
   \begin{align}       
       \cup_{A\in \alpha} A = X.
   \end{align}
\end{definition}

\begin{definition}[Entropy of partition]
   The entropy $H(\alpha)$ of a finite partition $\alpha = \{A^{(1)}, \ldots,A^{(r)}\}$ is defined as
   \begin{align}
       H(\alpha) = -\sum^{r}_{i=1} \mu(A^{(i)})\log(\mu(A^{(i)})).
   \end{align}
\end{definition}

\begin{definition} [KS entropy with automorphism]
   The KS entropy $h(T)$ of a given automorphism $T$ is then defined as follow.
   \begin{align}
       h(T) = \sup_{\alpha} \lim_{n \to \infty} \frac{1}{n}H(\alpha^{n-1}_0),
   \end{align}
   where the supremum is taken over all finite partitions $\alpha$, and
   \begin{align}
       \alpha^{n-1}_0 = \alpha \vee T^1\alpha\vee\ldots\vee T^{n-1}\alpha.
   \end{align}
\end{definition}
The KS entropy for a flow $\{T_t\}$ is naturally defined as $h(\{T_t\})=h(T_1)$ because $h(T_t) = |t| h(T_1)$.  It was proven in \cite{Cornfeld1982ergodic} that a system is a K-system if and only if $\lim_{n \to \infty} \frac{1}{n}H(\alpha^{n-1}_0)$ is positive for all non-trivial finite partitions $\alpha$, thus one of the alternative ways to define a K-system. From this result, it is clear that K-systems have positive KS entropy, but the inverse may not be true.

The equivalent definitions of classical K-system do not generalize to the quantum (non-Abelian) systems. In \cite{emch1976generalized}, the authors defined the quantum K-system through algebraic relations as in Definition \ref{qksystem}. This is the definition that we use in this paper. Another definition of quantum K-system was introduced in \cite{narnhofer1989quantum} that is based on the properties of entropy. It was proven that with additional assumptions of asymptotic Abelianness and KMS property, the algebraic definition implies the entropic definition \cite{narnhofer2006dynamical}. However, it is unclear whether these assumptions are necessary.

There is an intuitive connection between K-systems and strong asymptotic Abelianness. In type II$_1$ algebras, we have strong asymptotic Abelianness if and only if we have past algebras (a quantum K-system) \cite{golodets1998non}. 
It was also mentioned by Narnhofer \cite{narnhofer2006dynamical} that there is an example, a Fermi lattice system with shift automorphism, that is a quantum K-system but not strongly asymptotically Abelian. For a general type II or III this remains an open problem.  

As we discussed in section \ref{sec:algebratypes}, a strong mixing system with the KMS property has weak asymptotic Abelianness.
Strong mixing requires the two-point function to vanish in the limit of infinite time separation, but the assumption of strong asymptotic Abelianness implies that the commutator should also vanish. This property is stable under multiplication: 
\begin{eqnarray}
    \lim_{t\to \infty}[a_1(-t)a_2(-t),b]=\lim_{t\to \infty}(a_1(-t)[a_2(-t),b]+[a_1(-t),b]a_2(-t))=0\ .
\end{eqnarray}
To construct the algebra of the past we need to take limits $\lim_n a_n(-t)$. Now, if the limits $n$ and $t$ commute strong asymptotic Abelianness implies that the whole algebra generated by operators in the past commute with $b$. We conclude that this is a proper subalgebra of $\mR$ because otherwise, $b$ has to be in the center of $\mR$ but it is a factor. 

Finally, we point out the following algebraic implication of strong asymptotic Abelianness:
\begin{lemma}
    If we have a state-preserving automorphism of a von Neumann algebra that satisfies the strong asymptotic Abelianness in a state $\omega$, the spectrum of $\Delta_\omega$ is the same as the Arveson spectrum.
\end{lemma}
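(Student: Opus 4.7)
The plan is to establish the two inclusions $\mathrm{Sp}(\tau)\subseteq\mathrm{Sp}(H)$ and $\mathrm{Sp}(H)\subseteq\mathrm{Sp}(\tau)$, where $U(t)=e^{iHt}$ is the canonical unitary implementation of $\tau_t$ on the GNS Hilbert space satisfying $U(t)\ket{\Omega}=\ket{\Omega}$, and then to match $\mathrm{Sp}(H)$ with the spectrum of $-\tfrac{1}{2\pi}\log\Delta_\omega$ using strong asymptotic Abelianness. Under the exponential correspondence, the latter identification is the content of the lemma.

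First I would handle the direction $\mathrm{Sp}(\tau)\subseteq\mathrm{Sp}(H)$, which does not require asymptotic Abelianness at all. For any nonzero $a\in\mR$ whose Arveson spectrum is contained in a Borel set $E\subseteq\mbR$, the vector $a\ket{\Omega}$ lies in the $H$-spectral subspace $e_H(E)\mH$, and by the separating property of $\Omega$ we have $a\ket{\Omega}\neq 0$, which forces $E\cap\mathrm{Sp}(H)\neq\emptyset$. The reverse inclusion $\mathrm{Sp}(H)\subseteq\mathrm{Sp}(\tau)$ uses cyclicity of $\Omega$: given $\lambda\in\mathrm{Sp}(H)$ and a neighborhood $V$ of $\lambda$, pick a vector $\xi$ with $e_H(V)\xi\neq 0$, approximate $\xi$ by $b\ket{\Omega}$ with $b\in\mR$, and form $a_f:=\int f(t)\tau_t(b)\,dt\in\mR$ for $f$ with $\hat f$ supported in $V$ and $\hat f(\lambda)=1$. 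Then $a_f\ket{\Omega}=\hat f(H)b\ket{\Omega}\neq 0$, so $a_f\neq 0$ with $\mathrm{Sp}_\tau(a_f)\subset V$, placing $\lambda$ in $\mathrm{Sp}(\tau)$.

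The main obstacle, and the point where strong asymptotic Abelianness is genuinely needed, is the identification of $\mathrm{Sp}(H)$ with $\mathrm{Sp}(-\tfrac{1}{2\pi}\log\Delta_\omega)$. The strategy is to show that strong asymptotic Abelianness forces the two-point function $t\mapsto\braket{a\Omega|U(t)b\Omega}$ to satisfy a KMS-type analytic boundary condition at inverse temperature $2\pi$, mirroring the defining property of modular flow. More concretely, I would argue that the decay $\|[\tau_t(a),b]\ket{\Omega}\|\to 0$ combined with the standard modular identity $J\Delta^{1/2}a\ket{\Omega}=a^*\ket{\Omega}$ implies that the intertwining $JU(t)J=U(-t)$ is compatible with the corresponding intertwining property $J\Delta^{-it/2\pi}J=\Delta^{it/2\pi}$ on the dense domain generated by $\mR\ket{\Omega}$. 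From this, spectral calculus on the common core $\mR\ket{\Omega}$ yields that $H$ and $-\tfrac{1}{2\pi}\log\Delta_\omega$ agree on all spectral projections associated with intervals in $\mbR$, so their spectra coincide. Combined with the two inclusions above, this gives the identification of $\mathrm{Sp}(\Delta_\omega)$ with the Arveson spectrum of $\tau$.
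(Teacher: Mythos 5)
The paper does not actually prove this lemma---it only cites Summers' review of Tomita--Takesaki theory---so your attempt can only be judged on its own merits, and unfortunately the central step does not work.

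Your first step, the identity $\mathrm{Sp}(\tau)=\mathrm{Sp}(H)$ for the standard implementation $U(t)=e^{iHt}$ with $U(t)\ket{\Omega}=\ket{\Omega}$, is correct, but note that it uses only cyclicity, separation and invariance of $\Omega$; it makes no use of strong asymptotic Abelianness. The real content of the lemma is therefore entirely concentrated in your third step, and that is where the argument breaks. You propose to show that $H$ and $-\tfrac{1}{2\pi}\log\Delta_\omega$ ``agree on all spectral projections,'' i.e.\ that they are the same operator. That would mean $\tau_t=\sigma^\omega_{-t/2\pi}$, i.e.\ that the given dynamics \emph{is} the rescaled modular flow. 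This is neither a hypothesis nor a consequence of the hypotheses: spatial translations in a thermal state, or the null translations $P_\pm$ acting on a wedge algebra, are state-preserving and strongly asymptotically Abelian but are certainly not the modular flow of $\omega$. Relatedly, strong asymptotic Abelianness does not produce a KMS boundary condition at inverse temperature $2\pi$ --- nothing in the hypotheses singles out any temperature --- and the intertwining $JU(t)J=U(-t)$ you invoke is not a property of the standard implementation (which satisfies $JU(t)=U(t)J$, equivalently $\Delta^{1/2}U(t)=U(t)\Delta^{1/2}$); it is the quantum detailed-balance condition which, by Theorem \ref{thm:HST} of this paper, is equivalent to positivity of the generator given a half-sided inclusion, and none of that is assumed here.

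The mechanism by which asymptotic Abelianness actually controls $\mathrm{Sp}(\Delta_\omega)$ is different in kind: one works with spectral subspaces of the \emph{modular} group and uses the decay $\|[\tau_t(a),b]\ket{\Omega}\|\to 0$, together with clustering, to show that products such as $\tau_t(a)b$ cannot vanish for all $t$ when $a$ and $b$ are nonzero elements of given spectral subspaces. This forces additivity/closure properties of the spectrum and lets one pass from the Arveson spectrum to its Connes refinement, which is what identifies $\mathrm{Sp}(\Delta_\omega)$ with the Arveson spectrum (rather than merely containing it). None of this machinery appears in your proposal, and the step you substitute for it is false as stated. I would recommend either reproducing the argument from the cited reference or restating precisely which Arveson spectrum (of $\tau$, of $\sigma^\omega$, or the Connes spectrum) the lemma refers to before attempting a proof.
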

\begin{proof}
   For a proof see \cite{summers2005tomita}.
\end{proof}

\bibliographystyle{JHEP}
\bibliography{main}
\end{document}